\definecolor{darkred}{rgb}{0.8,0.1,0.1}
\theoremstyle{plain}
\theoremstyle{definition}
\definecolor{NiColor}{RGB}{77,77,255}
\definecolor{NiColoRed}{RGB}{255,77,77}
\definecolor{NiCitation}{RGB}{77,255,77}
\newcommand{\wick}[1]{:\!{#1}\!:}
\newtheoremstyle{TheoremStyle}
{3pt}
{3pt}
{}
{}
{\bf}
{:}
{.5em}
{}
\newtheoremstyle{ExampleAndRemarkStyle}
{3pt}
{3pt}
{}
{}
{\bf}
{:}
{.5em}
{}
\newtheoremstyle{ProofStyle}
{3pt}
{3pt}
{}
{}
{\bf}
{:}
{.5em}
{}
\theoremstyle{TheoremStyle}
\newtheorem{theorem}{Theorem}
\newtheorem{proposition}[theorem]{Proposition}
\newtheorem{lemma}[theorem]{Lemma}
\newtheorem{Definition}[theorem]{Definition}
\theoremstyle{ExampleAndRemarkStyle}
\newtheorem{remark}[theorem]{Remark}
\newtheorem{Example}[theorem]{Example} 
\theoremstyle{ProofStyle}
\title{%
Ricci Flow from the Renormalization of Nonlinear Sigma Models in the Framework of Euclidean Algebraic Quantum Field Theory
}
\author{%
Mauro Carfora$^{1,2,3,a}$, Claudio Dappiaggi$^{1,2,3,b}$, Nicol\`o Drago$^{1,2,3,c}$ 
and Paolo Rinaldi$^{1,d}$\vspace{4mm}\\
{\small $^1$ Dipartimento di Fisica -- Universit{\`a} di Pavia, Via Bassi 6, 27100 Pavia, Italy.}\vspace{2mm}\\
{\small $^2$ INFN, Sezione di Pavia -- Via Bassi 6, 27100 Pavia, Italy.}\vspace{2mm}\\
{\small $^3$ Istituto Nazionale di Alta Matematica -- Sezione di Pavia, Via Ferrata, 5, 27100 Pavia, Italy.}\vspace{4mm}\\
 {\footnotesize  ~$^a$ mauro.carfora@pv.infn.it~,~$^b$ claudio.dappiaggi@unipv.it~,~$^c$ 
 nicolo.drago@unipv.it~,~$^d$ paolo.rinaldi01@universitadipavia.it }
 }
\date{May 21, 2019}
\begin{document}

\maketitle

\begin{abstract}
The perturbative approach to nonlinear Sigma models and the associated renormalization group flow are discussed within the framework of Euclidean algebraic quantum field theory and of the principle of general local covariance. In particular we show in an Euclidean setting how to define Wick ordered powers of the underlying quantum fields and we classify the freedom in such procedure by extending to this setting a recent construction of Khavkine, Melati and Moretti for vector valued free fields. As a by-product of such classification, we provide a mathematically rigorous proof that, at first order in perturbation theory, the renormalization group flow of the nonlinear Sigma model is the Ricci flow.
\end{abstract}
\paragraph*{Keywords:}
locally covariant quantum field theory, algebraic quantum field theory, nonlinear Sigma models, Ricci flow.
\paragraph*{MSC 2010:} 81T20, 81T05


\section{Introduction}\label{Section: Introduction}

In the realm of geometric analysis, there are several open avenues of research which have benefited from results and models arising from classical and quantum field theory. One, if not the most prominent example is the Ricci flow, which has come to the fore in the past few years thanks to Perelman proof \cite{Perelman:2006un,Perelman:2006up} of the geometrization programme for three-dimensional manifolds due to Thurston \cite{Thurston}. Introduced in the mathematical literature in the early eighties by Hamilton \cite{Hamilton}, the Ricci flow has appeared independently in the context of quantum field theory mainly thanks to the early works of Friedan \cite{Friedan:1980jf,Friedan:1980jm} within the analysis of nonlinear Sigma models over two-dimensional Riemannian manifolds as source and with a Riemannian manifold of arbitrary dimension as target space. Despite the apparent distance between the two settings in which the Ricci flow first appeared, the mutual influences have been manifold and the field theoretical approach has been of inspiration for some of the ground breaking results of Perelman and for several analyses of the structural properties of such flow, see {\it e.g.} \cite{Carfora:2014pna}.

From the viewpoint of nonlinear Sigma models, the Ricci flow arises when, by considering a perturbative approach to the underlying Euclidean field theory, one studies at first order the renormalization group flow, see {\it e.g.} \cite{Carfora-17,Carfora:2017xia} and also \cite{Gaw99}. This is the main aspect on which we wish to focus in this paper and in particular we shall address the criticism towards such derivation of Ricci flow, which is often labelled not to be fully mathematically rigorous. 

In order to tackle this problem, we shall work within the framework of algebraic quantum field theory, a mathematically rigorous approach which was first formulated by Haag and Kastler \cite{Haag:1963dh}. Especially in the past few years it has been employed successfully to unveil and to characterize several structural properties of free and interacting quantum field theories, ranging from the formulation of the principle of general local covariance to a mathematically rigorous analysis of regularization and renormalization  -- see the recent reviews \cite{Brunetti:2015vmh,Rejzner:2016hdj}. Yet, the vast majority of the efforts have been addressed towards formulating and understanding the algebraic approach for field theories living on an underlying Lorentzian spacetime and thus many results and constructions are tied to such class of backgrounds. For example the quantization of a classical free field theory, the construction of an algebra of Wick polynomials or accounting interactions via a perturbative approach (including the ensuing renormalization procedure) are nowadays fully understood. Nonetheless a close scrutiny of all results unveils clearly that they rely on key structures which are tied to Lorentzian metrics. Notable instances of this statement are the realization of the canonical commutation relations in terms of advanced and retarded fundamental solutions associated to normally hyperbolic partial differential operators or the construction of Wick ordered quantum fields as a by-product of the existence of Hadamard sates, see \cite{Brunetti:2015vmh}. 

Yet there is no a priori obstruction to work within the algebraic framework while considering classical or quantum field theories which are living over a Riemannian manifold. Starting from the early seventies a few works in this direction have appeared in the literature \cite{Osterwalder:1973dx,Osterwalder:1974tc} and, despite most of the efforts went towards formulating algebraic quantum field theory on Lorentzian backgrounds, it is clear that most of the ideas, of the technique and of the structural aspects admit a well-defined Euclidean counterpart.
A notable example in this direction are the recent works by Keller on the formulation of Euclidean Epstein-Glaser renormalization \cite{Keller-09,Keller:2010xq}, see also \cite{Schlingemann:1998cw,Wald-79}.

Hence, motivated and inspired by these works, we decide to opt for a bottom-up approach towards the analysis of the nonlinear Sigma models which are at the heart of the Ricci flow. At a classical level such models are realized considering as kinematic configurations arbitrary smooth maps $\psi$ from a two-dimensional Riemannian manifold $(\Sigma,\gamma)$ into a target Riemannian background $(M,g)$ of arbitrary dimension. The dynamics is ruled by the stationary points of the so-called harmonic Lagrangian $\mathcal{L}_{\mathrm{H}}$ and considering its linearisation around an arbitrary configuration, we obtain a free field theory, which up to a source term, is governed by an elliptic operator $E$. This model is closely connected to string theory and its quantization from the algebraic viewpoint has been considered in \cite{Bahns-Rejzner-Zahn-14}.

As a starting point, we address the question of studying the quantization of the ensuing linearised theory. To this end, first we define the notion of an Euclidean locally covariant quantum field theory translating to a Riemannian framework the renown principle of general local covariance, formulated in a Lorentzian setting in \cite{Brunetti:2001dx}. This leads us naturally to identifying an Euclidean quantum field theory as a functor between a suitable category of background data into that of unital $*$-algebras which satisfies in addition a scaling hypothesis. Without entering into the technical details in the introduction, this requirement entails, that there exists an action of $\mathbb{R}_+:=(0,\infty)$ on the background data which, in turn, yields a corresponding isomorphism between the algebras of observables associated to each of the backgrounds constructed via such action.  In order for the model of our interest to fit in this scheme, we need as second step to show how to associate a $*$-algebra of locally covariant observables to the linear theory ruled by the operator $E$. 

To this end, we work with the functional formalism, which has been successfully applied to the Lorentzian setting, see for example \cite{Brunetti-Duetsch-Fredenhagen-09,Rejzner:2016hdj}. On the one hand we cannot follow slavishly these references, since we need to cope with several features which are tied to the Riemannian setting. On the other hand this approach has the net advantage that it allows to individuate in the ensuing algebra of observables a class of elements which could naturally be interpreted as Wick ordered powers of the underlying quantum field.

This observation leads to the second part of our paper in which we address the question of giving an abstract definition of Wick powers of an associated quantum field. This brings us to two relevant results. On the one hand, we characterize and classify the freedom which exists in constructing such polynomials, starting from the given definition. In tackling this problem, we extend to our framework the recent work of Khavkine, Melati and Moretti \cite{Khavkine-Melati-Moretti-17}, who have completely answered this question for vector valued Bosonic linear field theories, extending the seminal works of \cite{Hollands-Wald-01,Hollands-Wald-02,Hollands-Wald-05}. On the other hand we show that, since per assumption there exists an action of $\mathbb{R}_+$ on the background data, this induces a one-parameter family of Wick ordered powers of the underlying quantum field. It is worth mentioning that the formalism used in our work is strongly connected to the one of the recent monograph \cite{Herscovich}. Yet, in this reference, perturbative quantum field theory is presented from a very general viewpoint highlighting the minimal set of underlying assumptions which allow for the whole procedure to work. On the contrary we focus on a very specific scenario and in particular we follow a different approach to discuss the renormalization ambiguities of the underlying model. 

Subsequently, following the standard rationale used in perturbative algebraic quantum field theory \cite{Hollands-Wald-03}, it turns out that, to each coherent assignments of a one-parameter family of Wick polynomials, one associates a corresponding family of locally covariant Lagrangian densities, where parametric dependence is codified in the coupling constant, namely the metric of the target Riemannian manifold. As a last step, using the classification result of the ambiguities between two coherent assignment of Wick polynomials, we prove that such one-parameter family of metrics obeys the Ricci flow equation.

\vskip .3cm

The paper is organized as follows: In the next subsection we fix the notation and we introduce all the geometric and analytic building blocks necessary for our investigation, in particular the nonlinear sigma models, we are interested in and their linearisation. Section \ref{Sec: LCEFT} is devoted entirely to defining and to studying locally covariant Euclidean field theory. In particular in Subsection \ref{Sec: General Local Covariance}, first we introduce all the categories that we will be using and subsequently we give the formal definition of an Euclidean locally covariant theory, emphasizing in particular the so-called scaling hypothesis. In Subsection \ref{Subsection: subsection where we introduce the locally covariant theory of interest} we show instead that the linearisation on the nonlinear Sigma models, that we consider, fits in the framework formulated in Subsection \ref{Sec: General Local Covariance}. In Subsection \ref{Sec: LCO} we still focus on the model of our interest, defining the notion of locally covariant observables and studying their behaviour under the action of the scaling which is intrinsic in the definition of locally covariant Euclidean field theory. Finally in Subsection \ref{Section: Wick powers and ambiguities}, we generalize \cite{Khavkine-Melati-Moretti-17} to our setting defining first what is a family of Wick powers and then classifying the ambiguities existing in giving such definition. In Section \ref{Section: Renormalization and Ricci flow} we apply the results and the construction of Section \ref{Sec: LCEFT} to give a rigorous derivation of the Ricci flow from the perturbative renormalization group of the nonlinear Sigma models introduced in Subsection \ref{Section: Geometrical setting}.

\subsection{General Setting}\label{Section: Geometrical setting}
The goal of this section is to fix the notation and to introduce all the geometric and analytic building blocks necessary for our investigation. 

To start with, we consider two connected, oriented, Riemannian manifolds $(\Sigma,\gamma)$ and $(M,g)$ where $\dim M=D$, while $\dim\Sigma=D^\prime$.
Later we will consider only the case $D^\prime=2$.
In order to avoid confusion when dealing with the geometric structures associated to these backgrounds, we shall employ the convention that Greek ({\em resp.} Latin) indices are associated to quantities related to $\Sigma$ ({\em resp.} to $M$).
In addition, we denote with $\nabla^\Sigma,\nabla^M$ the Levi-Civita connections defined respectively on $T\Sigma,TM$.

\begin{remark}\label{Remark: on pull-back bundle and pull-back connection}
	For future convenience we recall the definition of pull-back bundle and of pull-back connection, {\it cf.} \cite{Husemoller}. Let $B\stackrel{\pi_B}{\longrightarrow}M$ be a vector bundle over $M$ -- typically $B=TM$ or $B=T^*M$ -- and let $\psi\in C^\infty(\Sigma;M)$.
	The pull-back bundle $\psi^*B$ is the vector bundle over $\Sigma$ defined by
	\begin{align}\label{Definition: pull-back bundle}
		\psi^*B:=\{(x,\xi)\in\Sigma\times B|\;\pi_B(\xi):=\psi(x)\}\qquad\pi_{\psi^*B}(x,\xi):=x\,,
	\end{align}
	From the definition it follows that $\widehat{\psi}\colon\psi^*B\ni(x,\xi)\mapsto\xi\in B|_{\psi(\Sigma)}$ is an injective morphism of vector bundles which lifts $\psi$, namely $\widehat{\psi}\circ\pi_{B}=\psi\circ\pi_{\psi^*B}$.
	This leads to an injective morphism of vector spaces $\psi^\sharp\colon\Gamma(B)\to\Gamma(\psi^*B)$ defined by $\psi^\sharp s(x):=(\widehat{\psi}^{-1}\circ s\circ\psi)(x)=(x,s\circ\psi(x))$ for all $s\in\Gamma(B)$.
	Considering the Levi-Civita connection $\nabla^B$ on $B$ the pull-back connection $\nabla^\psi$ on $\psi^*B$ is defined as follows \cite[App. C]{Milnor-Stasheff-76}.
	The push-forward $\mathrm{d}\psi\colon T\Sigma\to TM$ induces an injective homomorphism of sections $\mathrm{d}\psi^*\colon\Gamma(T^*M)\to\Gamma(T^*\Sigma)$.
	The pull-back connection $\nabla^\psi\colon\Gamma(B)\to\Gamma(T^*\Sigma\otimes B)$ is the unique one such that $\nabla^\psi\circ\psi^\sharp=(\mathrm{d}\psi^*\otimes\psi^\sharp)\circ\nabla^M$.
	Furthermore, per definition $\nabla^\psi\alpha:=\nabla^\Sigma\alpha$ for all $\alpha\in\Gamma(T^*\Sigma)$.
\end{remark}

On top of $\Sigma$, we consider kinematic configurations $\psi\in C^\infty(\Sigma;M)$ while dynamics is ruled by the stationary points of the so-called harmonic Lagrangian density $\mathcal{L}_{\mathrm{H}}$:
\begin{align}\label{Equation: harmonic Lagrangean density}
\mathcal{L}_{\mathrm{H}}[\psi,\gamma,g]:=\operatorname{tr}_\gamma(\psi^*g)\mu_\gamma
\stackrel{\textrm{loc.}}{=}g_{ab}\gamma^{\alpha\beta}(\mathrm{d}\psi)^a_\alpha(\mathrm{d}\psi)^b_\beta\mu_\gamma\,,
\end{align}
where $\mu_\gamma$ is the volume form induced by $\gamma$ while $\mathrm{d}\psi\colon T\Sigma\to TM$ is the push-forward along $\psi$.
In this paper we shall not work directly with \eqref{Equation: harmonic Lagrangean density}, rather we consider an expansion of $\mathcal{L}_{\mathrm{H}}$ up to the second order with respect to an arbitrary, but fixed background kinematic configuration $\psi$.
More precisely, for $\varphi\in\Gamma(\psi^*TM)$ let $\psi_\nu\colon\Sigma\to M$ be 
$\psi_\nu(x):=\exp_{\psi(x)}\big[\nu\varphi(x)\big]$, where $\exp_{\psi(x)}\colon T_{\psi(x)}M\to M$ is the exponential map at $\psi(x)\in M$, while $\nu\in I\subset\mathbb{R}$, where $I$ is an open subset of $\mathbb{R}$ which includes the origin -- \textit{cf.} remark \ref{Remark: on pull-back bundle and pull-back connection}.
The Taylor expansion of $\mathcal{L}_{\mathrm{H}}(\psi_\nu,\gamma,g)$ centred at $\nu=0$ yields 
\begin{align}
\label{Equation: second order expansion of harmonic Lagrangean}
\mathcal{L}(\psi_\nu,\gamma,g;\varphi)=
\mathcal{L}_{\mathrm{H}}(\psi,\gamma,g)+
\bigg[
\nu g(\varphi,Q(\psi))-
\frac{\nu^2}{2}\langle\varphi,E\varphi\rangle+
\frac{\nu^2}{2}h(\mathrm{Riem}(\varphi,\mathrm{d}\psi)\varphi,\mathrm{d}\psi)
\bigg]\mu_\gamma+O(\nu^3)\,,
\end{align}
where $\psi_0(x)\equiv\psi(x)$,  $h|_x\stackrel{\textrm{loc.}}{=}g_{ab}|_{\psi(x)}\gamma^{\alpha\beta}|_x\frac{\partial}{\partial x^\alpha}\otimes\frac{\partial}{\partial x^\beta}\otimes\mathrm{d}y^a\otimes\mathrm{d}y^b$, while the operator $E$ is defined by
\begin{align}\label{Equation: definition of the elliptic operator}
	E\colon\Gamma(\psi^*TM)\to\Gamma(\psi^*T^*M)\qquad
	E\varphi:=\operatorname{tr}_h(\nabla^\psi\circ\nabla^\psi\varphi)\,,
\end{align}
where $\nabla^\psi$ stands for the pull-back connection associated with $\nabla^M,\nabla^\Sigma$ on the pull-back bundle $\psi^*TM$ over $\Sigma$-- \textit{cf.} remark \ref{Remark: on pull-back bundle and pull-back connection}.
Finally the operator $Q\colon C^\infty(\Sigma,M)\to\Gamma(\psi^*T^*M)$ is a differential operator whose explicit form is inessential for what follows -- see \cite{Carfora-17} for details. For our purposes the following property is of paramount relevance:

\begin{lemma}\label{Lem:E is elliptic}
The operator $E$ is elliptic and its principal symbol coincides with that of $\widehat{E}\colon\Gamma(\psi^*TM)\to\Gamma(\psi^*T^*M)$, locally defined by
$(\widehat{E}\varphi)_a(x):=g_{ab}(\psi(x))\Delta_\gamma(\varphi^b(x))$ for $\varphi\in\Gamma(\psi^*TM)$, where $\Delta_\gamma$ is the Laplace-Beltrami operator built out of $\gamma$.
\end{lemma}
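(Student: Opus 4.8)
The plan is to reduce everything to a computation of principal symbols, exploiting the elementary fact that in any local frame a connection differs from the coordinate derivative only by a zeroth‑order term, so that only the top‑order part of $\nabla^\psi\circ\nabla^\psi$ is relevant. Recall that $\nabla^\psi\colon\Gamma(\psi^*TM)\to\Gamma(T^*\Sigma\otimes\psi^*TM)$ is a first‑order differential operator; to compose it with itself one equips $T^*\Sigma\otimes\psi^*TM$ with the tensor‑product connection built from $\nabla^\Sigma$ on the $T^*\Sigma$–factor and from $\nabla^\psi$ on the $\psi^*TM$–factor, consistently with the convention $\nabla^\psi|_{\Gamma(T^*\Sigma)}=\nabla^\Sigma$ of Remark \ref{Remark: on pull-back bundle and pull-back connection}. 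In a local chart $(x^\alpha)$ on $\Sigma$ and a local frame for $\psi^*TM$, for $\varphi\in\Gamma(\psi^*TM)$ one obtains
\begin{align*}
	(\nabla^\psi\circ\nabla^\psi\varphi)^a_{\alpha\beta}=\partial_\alpha\partial_\beta\varphi^a+(\text{a differential operator of order}\le 1\text{ in }\varphi)\,,
\end{align*}
the correction being built out of the Christoffel symbols of $\gamma$ and of the coefficients of $\nabla^\psi$ (which involve $\mathrm{d}\psi$ and the Christoffel symbols of $g$).

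First I would compute the principal symbol $\sigma_E$ of $E$. Applying $\operatorname{tr}_h$ — locally, contraction against $g_{ab}(\psi(x))\gamma^{\alpha\beta}(x)$ in the two $T^*\Sigma$ indices together with the lowering of the $\psi^*TM$ index — to the displayed identity gives
\begin{align*}
	(E\varphi)_a=g_{ab}(\psi(x))\,\gamma^{\alpha\beta}(x)\,\partial_\alpha\partial_\beta\varphi^b+(\text{terms of order}\le 1\text{ in }\varphi)\,,
\end{align*}
so that for $(x,\xi)\in T^*\Sigma$ the principal symbol is the linear map $\sigma_E(x,\xi)\colon(\psi^*TM)_x\to(\psi^*T^*M)_x$ with
\begin{align*}
	\big(\sigma_E(x,\xi)v\big)_a=-\gamma^{\alpha\beta}(x)\xi_\alpha\xi_\beta\,g_{ab}(\psi(x))\,v^b=-|\xi|_\gamma^2\,g_{ab}(\psi(x))\,v^b\,,
\end{align*}
where $|\xi|_\gamma^2:=\gamma^{\alpha\beta}(x)\xi_\alpha\xi_\beta$ (any other sign convention for principal symbols merely flips an immaterial overall sign). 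On the other hand $(\widehat E\varphi)_a=g_{ab}(\psi(x))\Delta_\gamma\varphi^b$, and since the principal symbol of the Laplace–Beltrami operator $\Delta_\gamma$ at $(x,\xi)$ is $-|\xi|_\gamma^2$, one gets $\big(\sigma_{\widehat E}(x,\xi)v\big)_a=-|\xi|_\gamma^2\,g_{ab}(\psi(x))\,v^b$ as well; equivalently, $\operatorname{tr}_\gamma(\nabla^\Sigma\nabla^\Sigma\,\cdot\,)$ is the connection Laplacian on the bundle, which shares the principal symbol of $\Delta_\gamma$ componentwise. Hence $\sigma_E=\sigma_{\widehat E}$.

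Ellipticity then follows immediately: for $\xi\ne 0$ one has $|\xi|_\gamma^2>0$ because $\gamma$ is Riemannian, and the matrix $\big(g_{ab}(\psi(x))\big)$ is invertible because $g$ is a Riemannian metric on $M$; therefore $\sigma_E(x,\xi)$ is a linear isomorphism $(\psi^*TM)_x\to(\psi^*T^*M)_x$ for every nonzero $\xi$, which is precisely the ellipticity of $E$. I do not expect a genuine obstacle here; the only point demanding a little care is the bookkeeping for the tensor‑product connection on $T^*\Sigma\otimes\psi^*TM$ entering $\nabla^\psi\circ\nabla^\psi$ and the verification that none of its coefficients contribute to the degree‑two part — which, as noted, is a general feature of connections in local frames and hence the mildest of technicalities.
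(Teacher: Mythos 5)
Your proposal is correct and follows essentially the same route as the paper: both compute the local expression of $E$, observe that $\nabla^\psi\circ\nabla^\psi$ agrees with $\partial_\alpha\partial_\beta$ up to lower-order terms built from the Christoffel symbols and $\mathrm{d}\psi$, and read off the principal symbol $\gamma^{\alpha\beta}\xi_\alpha\xi_\beta\,g_{ab}(\psi(x))$, which matches that of $\widehat{E}$. You in fact spell out the final invertibility argument for ellipticity slightly more explicitly than the paper, which simply states that the claim follows from the definition of the principal symbol.
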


\begin{proof}
For any point $x\in\Sigma$ and for any local trivialization of $\psi^*TM$ centred at $x$, \eqref{Equation: definition of the elliptic operator} reads
	\begin{align*}
		(E\varphi)_b \stackrel{\textrm{loc.}}{=}
		g_{ab}\Delta_\gamma(\varphi^a)+
		g_{ab}\Delta_\gamma(\psi^\ell)\Gamma_{\ell c}^{\phantom{\ell c}a}[g]\varphi^c
		&+ g_{ab}\gamma^{\alpha\beta}\bigg[
		(\mathrm{d}\psi)^\ell_\alpha\frac{\partial}{\partial x^\beta}\big(\Gamma_{\ell c}^{\phantom{\ell c}a}[g]\varphi^c\big)
		+(\mathrm{d}\psi)^\ell_\alpha\Gamma_{\ell c}^{\phantom{\ell c}a}[g]\frac{\partial\varphi^c}{\partial x^\beta}\bigg]\\
		& + g_{ab}\gamma^{\alpha\beta}(\mathrm{d}\psi)^\ell_\beta(\mathrm{d}\psi)^p_\alpha\Gamma_{\ell c}^{\phantom{\ell c}a}[g]\Gamma_{pd}^{\phantom{pd}c}[g]\varphi^d\,.
	\end{align*}
	where $\varphi\in\Gamma(\psi^*TM)$.
	Considering the definition of principal symbol,
	\begin{align}
		\sigma_E(\mathrm{d}\zeta)\varphi:=\lim_{z\to+\infty}z^{-2}e^{-z\zeta}E(e^{z\zeta}\varphi)\,,\qquad
		\forall\zeta\in C^\infty(\Sigma),\varphi\in\Gamma(\psi^*TM)\,,
	\end{align}
	the sought statement follows.
\end{proof}

\begin{remark}\label{Remark: on the role of the Lagrangean}
	From now on the main object of our interest will be the expansion in \eqref{Equation: second order expansion of harmonic Lagrangean} and therefore $\psi,\gamma,g$ will be considered as parameters/background structures of the theory, whereas the role of kinematic configuration will be taken by $\varphi\in\Gamma(\psi^*TM)$.
	Observe that, in \eqref{Equation: second order expansion of harmonic Lagrangean}, $\langle\varphi,E\varphi\rangle\mu_\gamma$, plays the r\^ole of a kinetic term ruled by $E$, the elliptic operator \eqref{Equation: definition of the elliptic operator} associated to the Lagrangian $\mathcal{L}$.
\end{remark}

To conclude the section, we focus on the behaviour of the background structures under scaling and in particular we are interested in the \textit{engineer dimension} of ($\psi,\varphi,g$). The latter can be computed as follows: Consider the transformation
	\begin{align}\label{Equation: scaling transformation for gamma}
	\gamma\to\gamma_\lambda\,,\qquad(\gamma_\lambda)_{\alpha\beta}:=\lambda^{-2}\gamma_{\alpha\beta},\quad\lambda>0.
	\end{align}
	The engineer dimensions $\mathrm{d}_\psi,\mathrm{d}_\varphi,\mathrm{d}_g\in\mathbb{R}$ respectively of $\psi,\varphi,g$, appearing in \eqref{Equation: second order expansion of harmonic Lagrangean}, are the unique real numbers such that, if
	\begin{align}\label{Equation: scaling transformation for psi,varphi,g}
	\psi\to\psi_\lambda:=\lambda^{\mathrm{d}_\psi}\psi\,,\qquad
	\varphi\to\varphi_\lambda:=\lambda^{\mathrm{d}_\varphi}\varphi\,,\qquad
	g\to g_\lambda\,,\quad(g_\lambda)_{ab}:=\lambda^{\mathrm{d}_g}g_{ab}\,,
	\end{align}
	then the corresponding scaled Lagrangean density $\mathcal{L}(\psi_\lambda,\gamma_\lambda,g_\lambda;\varphi_\lambda)$ remains invariant, that is $$\mathcal{L}(\psi_\lambda,\gamma_\lambda,g_\lambda;\varphi_\lambda)=\mathcal{L}(\psi,\gamma,g;\varphi)\,.$$
	Considering \eqref{Equation: second order expansion of harmonic Lagrangean} and that $\mu_{\gamma_\lambda}=\lambda^{-D^\prime}\mu_\gamma$, a straightforward computation leads to
	\begin{align}\label{Equation: engineer dimension of psi,varphi,g}
	\mathrm{d}_\psi=\mathrm{d}_\varphi=0,\quad\mathrm{d}_g=D^\prime-2.
	\end{align}

\section{Locally Covariant Euclidean Field Theories}\label{Sec: LCEFT}

In this Section, our goal is twofold. On the one hand we want to introduce locally covariant Euclidean quantum field theories using the language of categories and of functionals as first introduced in \cite{Brunetti:2001dx} and \cite{Brunetti-Duetsch-Fredenhagen-09} respectively. Since, contrary to these seminal papers, we will also be interested in vector valued fields defined over Riemannian manifolds, we will also benefit greatly from \cite{Keller-09,Keller:2010xq} and from the recent works \cite{Khavkine:2014zsa, Khavkine-Melati-Moretti-17}. At the same time we want to reinterpret and to analyse the model introduced in Section \ref{Section: Geometrical setting} within this more general conceptual framework.

\subsection{General Local Covariance}\label{Sec: General Local Covariance}

Following \cite{Brunetti:2001dx}, the starting point of the principle of general local covariance consists of identifying a suitable set of categories which encode all necessary information of the underlying model. For the scopes of this paper the necessary ingredients are:

\begin{enumerate}
	\item $\mathsf{Bkg}_{\mathrm{D^\prime,D}}$, the category of background geometries, such that
	\begin{itemize}
		\item $\mathrm{Obj}(\mathsf{Bkg}_{\mathrm{D^\prime,D}})$ are pairs $(N;b)$ where $N\equiv (\Sigma, M)$ identifies a pair of smooth, connected, oriented manifolds, with $\dim\Sigma=D^\prime$ and $\dim M=D$, while $b\equiv(\psi,\gamma,g)$ codifies the background data, that is $\psi\in C^\infty(\Sigma;M)$ while $\gamma$ and $g$ are smooth Riemmannian metrics respectively on $\Sigma$ and $M$.
		\item $\operatorname{Ar}(\mathsf{Bkg}_{\mathrm{D^\prime,D}})$ are pairs $(\tau, t)$ where $\tau:\Sigma\to\widetilde{\Sigma}$ and $t:M\to\widetilde{M}$ are orientation preserving, isometric embeddings subject to the compatibility condition 
		\begin{align}\label{Equation: compatibility condition for arrows in BkgG}
		\widetilde{\psi}\circ\tau=t\circ\psi\, ,
		\end{align}
		where $\widetilde{\psi}\in C^\infty(\widetilde{\Sigma};\widetilde{M})$. If $\dim\Sigma=D^\prime=2$, with a slight abuse of notation we write $\mathsf{Bkg}\equiv\mathsf{Bkg}_{2,D}$ as $D$ plays no relevant role in our analysis
	\end{itemize} 
	\item $\mathsf{Alg}$ is the category whose objects are unital $*$-algebras, while the arrows are unit preserving, injective $*$-homomorphisms.
	\item $\mathsf{Vec}$ is the category whose objects are real vector spaces while the arrows are injective linear morphisms.
\end{enumerate}

\begin{remark}\label{Remark: definition of scaling in BkgG}
	Observe that, in comparison with \cite[Def. 3.4]{Khavkine-Melati-Moretti-17}, we adopt a slightly different definition of category of background geometries, adapted to the framework we consider. Nevertheless $\mathsf{Bkg}_{D^\prime,D}$ still enjoys the notable property of being {\em dimensionful} in the sense of \cite{Khavkine-Melati-Moretti-17}. In other words there exists an action of $\mathbb{R}_+:=(0,\infty)$ on $\operatorname{Obj}(\mathsf{Bkg}_{D^\prime,D})$
	\begin{align}\label{Equation: definition of scaling in BkgG}
	(N;b)=(\Sigma,M,\psi,\gamma,g)\to(N;b_\lambda)
	:=(\Sigma,M;\psi_\lambda,\gamma_\lambda,g_\lambda):=(\Sigma,M;\psi,\lambda^{-2}\gamma,\lambda^{D^\prime-2}g)\,,
	\end{align}
	which is preserved by the arrows of $\mathsf{Bkg}_{\mathrm{D^\prime,D}}$ and whose definition is tied to the engineer dimension of $\psi,g$ as per \eqref{Equation: engineer dimension of psi,varphi,g}.
\end{remark}

\begin{Definition}\label{Definition: locally covariant theory}
	An \textit{Euclidean locally covariant theory} is a covariant functor $\mathcal{A}\colon\mathsf{Bkg}_{\mathrm{D^\prime,D}}\to\mathsf{Alg}$ which satisfies the scaling hypothesis: For all $\lambda>0$, let $\mathcal{A}_\lambda:\mathsf{Bkg}_{D,D^\prime}\to\mathsf{Alg}$ be the covariant functor $\mathcal{A}\circ\rho_\lambda$, where $\rho_\lambda:\mathsf{Bkg}_{D,D^\prime}\to\mathsf{Bkg}_{D,D^\prime}$ is the functor defined as the identity on morphisms while on objects it acts as per \eqref{Equation: definition of scaling in BkgG}.
Then, for all $\lambda,\mu,\sigma>0$, there exists a natural isomorphism $\mathcal{A}_\mu\stackrel{\varsigma_{\lambda,\mu}}{\Longrightarrow}\mathcal{A}_\lambda$ (with inverse $\mathcal{A}_\lambda\stackrel{\varsigma^{-1}_{\lambda,\mu}}{\Longrightarrow}\mathcal{A}_\mu$) such that
	\begin{align}\label{Equation: properties of the scaling map}
	\varsigma_{\lambda,\mu}[N;b]=\varsigma_{\lambda,\sigma}[N;b]\circ\varsigma_{\sigma,\mu}[N;b_\lambda]\,,\qquad
	\varsigma_{\lambda,\lambda}[N;b]=\operatorname{Id}_{\mathcal{A}[N;b]}\,,
	\end{align}
	for all $(N;b)\in\operatorname{Obj}(\mathsf{Bkg}_{\mathrm{D^\prime,D}})$.
\end{Definition}

\begin{remark} For notational convenience we shall adopt the convention
$$\varsigma_{\lambda}[N;b]\doteq\varsigma_{1,\lambda}[N;b].\quad\forall [N;b]\in\operatorname{Obj}(\mathsf{Bkg})$$
\end{remark}

\begin{remark}
	The role of $\varsigma_\lambda$ is to ensure that the scaling $(N;b)\to(N;b_\lambda)$ is consistently implemented in the theory described by the functor $\mathcal{A}$. In turn $\mathcal{A}_\lambda$ can be interpreted as the functor describing the theory $\mathcal{A}$ \textit{at the  scale $\lambda$}, while the map $\varsigma_\lambda[N;b]\colon\mathcal{A}_\lambda[N;b]\to\mathcal{A}[N;b]$ codifies the rules needed to transform the same theory between different scales. This interpretation will have a significant r\^ole in our main result, see Theorem \ref{Theorem: renormalized Lagrangean density for Ricci flow application}.
\end{remark}

\subsection{Linearised nonlinear Sigma models as a locally covariant theory}\label{Subsection: subsection where we introduce the locally covariant theory of interest}

In this subsection we will show how to reformulate the model in Section \ref{Section: Geometrical setting} as an Euclidean locally covariant theory as per Definition \ref{Definition: locally covariant theory}. Therefore, henceforth $\dim\Sigma=2$ and we will only be interested in the category of background geometries $\mathsf{Bkg}\equiv\mathsf{Bkg}_{2,D}$. 

As starting point we focus on an arbitrary, but fixed, background geometry $(N;b)\in\operatorname{Obj}(\mathsf{Bkg})$ showing how to build the algebra $\mathcal{A}[N;b]$ associated with the Lagrangian \eqref{Equation: second order expansion of harmonic Lagrangean}, reformulating the whole construction in terms of categories only at a later stage. 

Let thus $(N;b)=(\Sigma,M;\psi,\gamma,g)\in\operatorname{Obj}(\mathsf{Bkg})$ be a background geometry and let $E\colon\Gamma(\psi^*TM)\to\Gamma(\psi^*T^*M)$ be the elliptic differential operator \eqref{Equation: definition of the elliptic operator}.
Since $E$ is elliptic as per Lemma \ref{Lem:E is elliptic}, it admits a parametrix $P\colon\Gamma_{\mathrm{c}}(\psi^*T^*M)\to\Gamma(\psi^*TM)$, {\it c.f.} \cite[Th. 4.4]{Wells}, unique up to smoothing operators such that
\begin{align}\label{Equation: defining property of parametrix}
	PE-\operatorname{Id}_{\Gamma_{\mathrm{c}}(\psi^*TM)}\in
	\Gamma(\psi^*TM\boxtimes\psi^*T^*M)\,,\qquad
	EP-\operatorname{Id}_{\Gamma_{\mathrm{c}}(\psi^*T^*M)}\in
	\Gamma(\psi^*TM\boxtimes\psi^*T^*M)\,.
\end{align}
\begin{remark}\label{Remark: on exterior tensor product and symmetrix sections}
	Throughout this paper we shall employ the following notation.
	Given a vector bundle $B\stackrel{\pi_B}{\longrightarrow}M$ and $k\in\mathbb{N}$ we denote with $\mathrm{S}^{\otimes k}B\stackrel{\pi_{\mathrm{S}^{\otimes k}B}}{\longrightarrow}M$ the $k$-th symmetric tensor product of $B$.
	With $B^{\boxtimes n}\stackrel{\pi_{B^{\boxtimes n}}}{\longrightarrow}M^n$ we identify the $k$-th exterior tensor product of $B$, that is, the vector bundle over $M^k$ with fibre $\pi_{B^{\boxtimes n}}^{-1}(x_1,\ldots,x_n)=\otimes_{\ell=1}^n\pi_B^{-1}(x_\ell)$.
	For a given $s\in\Gamma(B^{\boxtimes n})$ we denote with $[s]\in\Gamma(B^{\otimes n})$ the section obtained by considering the coinciding point limit of $s$, that is, the section obtained by pull-back of $s$ along the inclusion of the total diagonal $D_n\to M^n$ where $D_n:=\{(x_1,\ldots,x_n)\in M^n|\;x_1=\ldots =x_n\}$.
	The notation $\mathrm{S}\Gamma(B^{\boxtimes n})$ always refers to smooth sections over $B^{\boxtimes n}$ which are symmetrized with respect to the base points.
	Notice that if $s\in\mathrm{S}\Gamma(B^{\boxtimes n})$ then $[s]\in\Gamma(S^{\otimes n}B)$ .
	In particular, for $s\in\Gamma(B)$ we denote $s^{[\otimes] n}:=[s^{\otimes n}]\in\Gamma(\mathrm{S}^{\otimes n}B)$ where $s^{\otimes n}\in\mathrm{S}\Gamma(B^{\boxtimes n})$.
\end{remark}
Notice that the properties of the parametrices of being symmetric will play a distinguished r\^ole in the construction of a commutative algebra of observables -- \textit{cfr.} Definition \ref{Definition: locally covariant theory of interest}, in sharp contrast with the outcome of the same procedure in a Lorentzian setting where the dynamics is ruled by symmetric hyperbolic partial differential operators.

\begin{remark}\label{Remark: adjoint of parametrix}
	Since $E$ is formally self-adjoint, it follows that the formal adjoint of any parametrix $P$ is again a parametrix for $E$.
	We can therefore consider formally self-adjoint parametrices, whose space will be denoted with $\operatorname{Par}[N;b]$.
	Notice that, because of equation \eqref{Equation: defining property of parametrix}, $\operatorname{Par}[N;b]$ is an affine space modelled over $\mathrm{S}\Gamma(\psi^*TM^{\boxtimes 2})$.
\end{remark}

\begin{remark}\label{Remark: local Hadamard representation of parametrices}
	The parametrix $P$ admits locally a Hadamard representation which is constructed in detail in Appendix \ref{Appendix: Hadamard expansion for the parametrix of E} -- \textit{cf.} proposition \ref{Proposition: coincinding point limit of scaled and non-scaled Hadamard parametrix}.
	Here we recall the final result: Let $(x,x^\prime)$ be a pair of points lying in a suitably constructed convex, geodesic neighbourhood $\mathcal{O}\hookrightarrow\Sigma$ centred at $x$. Then the integral kernel of $P$ reads locally
	\begin{align}\label{Equation: local expression for the parametrix}
		P^{ab}(x,x^\prime)=H^{ab}(x,x^\prime)+W_P^{ab}(x,x^\prime)\,,\quad H^{ab}(x,x^\prime)=V^{ab}\log\frac{\sigma(x,x^\prime)}{\ell_H^2},
	\end{align}
	where $\sigma(x,x^\prime)$ is the halved squared geodesic distance between $x$ and $x^\prime$, $V\in\mathrm{S}\Gamma(\psi^*T\mathcal{O}^{\boxtimes 2})$ is a suitable symmetric tensor, while $H$ codifies the singular part of the parametrix and $W_P\in \mathrm{S}\Gamma(\psi^*T\mathcal{O}^{\boxtimes 2})$.
	It is important to keep in mind that, although \eqref{Equation: local expression for the parametrix} is meaningful only locally, one can use $[W_P]\in\Gamma(\mathrm{S}^{\otimes 2}\psi^*T\mathcal{O})$ together with a partition of unity argument in order to identify a globally defined $[W_P]\in\Gamma(\mathrm{S}^{\otimes 2}\psi^*TM)$ -- which does not depend on the chosen partition of unity -- where the subscript is used to highlight the dependence on the choice of the parametrix $P$ -- \textit{cf.} remark \ref{Remark: on pull-back bundle and pull-back connection}.
	The Hadamard representation and $[W_P]$ will be particularly important in the following construction as well as in the definition of locally covariant Wick powers -- \textit{cf.} Example \ref{Example: Wick powers exploited for Ricci flow}. 
\end{remark}	

In the following we will indicate with $P\in\operatorname{Par}[N;b]$ both the linear operator $P\colon\Gamma(\psi^*T^*M)\to\Gamma(\psi^*TM)$ and its associated distribution $P\in\mathrm{S}\Gamma_{\mathrm{c}}(\psi^*T^*M^{\boxtimes 2})'$ \cite[Thm.8.2.12]{Hormander-83}, the subscript $c$ indicating that we consider distributions over compactly supported test-sections. Recall that with $\Gamma_c$ we are implicitly assuming that the sections are symmetrized also with respect to the base points.	

Now we consider an arbitrary but fixed $P\in\operatorname{Par}[N;b]$, using it first to define a suitable unital $*$-algebra associated to the system whose dynamics is ruled by the operator $E$.
Secondly we show how to built an algebra which is independent from the chosen $P$.

\begin{Definition}\label{Definition: smooth local polynomial functionals}
	We denote with $\mathcal{P}_{\textrm{loc}}[N;b]$ the complex vector space of functionals $F\colon\Gamma(\psi^*TM)\to\mathbb{C}$ spanned by monomial functionals
	\begin{gather}
		F_{\omega_k}(\varphi):=\int_\Sigma\langle\varphi^{[\otimes] k},\omega_k\rangle\,,\quad
		\omega_k\in\Gamma_{\mathrm{c}}(\wedge^{\mathrm{top}}T^*\Sigma\otimes \mathrm{S}^{\otimes k}\psi^*T^*M)\,,\quad k\in\mathbb{N}\cup\{0\}\,,
	\end{gather}
	where $\varphi^{[\otimes]k}\in\Gamma(\mathrm{S}^{\otimes k}\psi^*TM)$ denotes the coinciding point limit of the symmetric tensor product $\mathrm{S}^{\otimes k}\varphi\in\mathrm{S}\Gamma(\psi^*TM^{\boxtimes k})$ -- \textit{cf.} remark \ref{Remark: on pull-back bundle and pull-back connection}.
	Moreover, $\Gamma_{\mathrm{c}}(\wedge^{\mathrm{top}}T^*\Sigma\otimes \mathrm{S}^{\otimes k}\psi^*T^*M)$ denotes the compactly supported sections of the vector bundle $\wedge^{\mathrm{top}}T^*\Sigma\otimes \mathrm{S}^{\otimes k}\psi^*T^*M$ -- here $\wedge^{\mathrm{top}}T^*\Sigma$ denotes the bundle of densities on $\Sigma$ -- while $\langle\varphi^k,\omega_k\rangle$ denotes the pairing $\varphi^{a_1}(x)\dots\varphi^{a_k}(x)\omega_{a_1\ldots a_k}(x)$.
	We refer to $\mathcal{P}_{\textrm{loc}}[N;b]$ as to the space of local polynomial functionals (with no derivatives of the configurations).
	For future convenience we set $\mathcal{P}[N;b]:=\bigoplus_{n\geq 0}\mathcal{P}_{\textrm{loc}}^{\otimes n}[N;b]$ with $\mathcal{P}_{\textrm{loc}}^{\otimes 0}\equiv\mathbb{C}$.
\end{Definition}
\begin{remark}\label{Remark: on properties of functionals}
	Notice that any $F\in\mathcal{P}_{\textrm{loc}}[N;b]$ enjoys the following remarkable properties which will be exploited in the forthcoming discussion:
	\begin{enumerate}
		\item $F$ is \textit{smooth}, namely, for all $\varphi,\varphi_1,...,\varphi_n\in\Gamma(\psi^*TM)$, $n\geq 1$,
		the $n$-th functional derivative $F^{(n)}[\varphi]$, defined as 
		\begin{align}
			\big\langle F^{(n)}[\varphi],\varphi_1\otimes\ldots\otimes\varphi_n\big\rangle:=
			\frac{\partial^n}{\partial s_1\dots\partial s_n}F\left(\varphi+\sum_{i=1}^{n}s_i\varphi_i\right)\bigg|_{s_1=\ldots s_n=0}\,,
		\end{align}
		identifies a symmetric, compactly supported distribution $F^{(n)}[\varphi]\in\mathrm{S}\Gamma(\psi^*TM^{\boxtimes n})'$.
		\item
		$F$ is \textit{compactly supported}, that is, $\overline{\bigcup_{\varphi}\textrm{supp}(F^{(1)}[\varphi])}$ is compact;
		\item 
		$F$ is \textit{local}, because for all $\varphi\in\Gamma(\psi^*TM)$, the $n$-th functional derivative $F^{(n)}[\varphi]$ is supported on the thin diagonal of $\Sigma^n=\underbrace{\Sigma\times...\times\Sigma}_n$, that is $\textrm{supp}(F^{(n)}[\varphi])\subset D_n:=\{(x_1,\ldots,x_n)\in\Sigma^n|\, x_1=\ldots=x_n\}$.
		Moreover $\textrm{WF}(F^{(n)}[\varphi])$ is transversal $T^*D_n=$, where $\textrm{WF}(F^{(n)}[\varphi])$ stands for the wave front set of $F^{(n)}[\varphi]$, \cite[Def. 8.1.2]{Hormander-83}.
	\end{enumerate}
For simplicity -- \textit{cf.} the proof of Proposition \ref{Proposition: Pw-product on fiber algebra} and Remark \ref{Remark: on the log-singularity of the parametrices} -- our definition excludes local polynomial functionals which contain derivatives of the configuration $\varphi$; the latter class will play no r\^ole in what follows.
\end{remark}

\begin{proposition}\label{Proposition: Pw-product on fiber algebra}
	The vector space $\mathcal{P}[N;b]$ consisting of smooth, local, polynomial functionals is an associative and commutative $*$-algebra if endowed with 
	the product
	\begin{align}\label{Equation: definition of Pw-product on fiber algebra}
		(F\cdot_PG)(\varphi)=\left(\mathcal{M}\circ\exp(\Upsilon_P)(F\otimes G)\right)(\varphi):=F(\varphi)G(\varphi)+\sum_{n\geq 1}\frac{1}{n!}\big\langle F^{(n)}[\varphi],P^{\otimes n}G^{(n)}[\varphi]\big\rangle\,,
	\end{align}
	where $P^{\otimes n}G^{(n)}[\varphi]\in\mathrm{S}\Gamma_{\mathrm{c}}(\psi^*T^*M^{\boxtimes n})'$ is the extension of $\underbrace{P\otimes...\otimes P}_n$ to $G^{(n)}[\varphi]$ according to \cite[Thm. 8.2.13]{Hormander-83}, while 
	$$\mathcal{M}(F\otimes G)(\varphi)=F(\varphi)G(\varphi),$$
	is the pointwise product. Here $\Upsilon_P$ is such that, for all $\varphi_1,\varphi_2\in\Gamma(\psi^*TM)$,
	\begin{align*}
		\Upsilon_P(F\otimes G)(\varphi_1,\varphi_2):=\langle F^{(1)}[\varphi_1],PG^{(1)}[\varphi_2]\rangle\,.
	\end{align*}
	The $*$-involution is completely characterized on $\mathcal{P}[N;b]$ by requiring $F^*(\varphi):=\overline{F(\varphi)}$.
	We denote with $\mathcal{F}_P[N;b]$ the $*$-algebra $(\mathcal{P}[N;b],\cdot_P,*)$.
\end{proposition}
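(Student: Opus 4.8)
The plan is to isolate the one genuinely analytic point -- that the contractions in \eqref{Equation: definition of Pw-product on fiber algebra} are well-posed and do not leave $\mathcal{P}[N;b]$ -- from the verification of associativity, commutativity and $*$-invariance, which are formal once the product is known to make sense. In contrast with the Lorentzian situation, no intertwiner of the form $\exp(\tfrac12\langle P,\delta^2/\delta\varphi^2\rangle)$ is available here, since the parametrix $P$ has no coinciding-point limit (cf.\ Remark \ref{Remark: local Hadamard representation of parametrices}); hence associativity will be obtained by a direct multi-slot computation.

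\emph{Step 1: well-posedness and closure.} Since every $F\in\mathcal{P}[N;b]$ is polynomial in $\varphi$, one has $F^{(n)}[\varphi]=0$ beyond the degree of $F$, so the series in \eqref{Equation: definition of Pw-product on fiber algebra} is a finite sum. For each fixed $n$ the summand is controlled microlocally: being a parametrix of the elliptic operator $E$ (Lemma \ref{Lem:E is elliptic}), $P$ is a pseudodifferential operator, so $\mathrm{WF}(P)$ is contained in the conormal bundle of the diagonal $D_2\subset\Sigma\times\Sigma$, while by Remark \ref{Remark: on properties of functionals} the distributions $F^{(n)}[\varphi]$ and $G^{(n)}[\varphi]$ are compactly supported on the thin diagonal $D_n$ and have wave front sets transversal to $T^*D_n$. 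These are exactly the wave-front-set conditions under which \cite[Thm.~8.2.13]{Hormander-83} makes $P^{\otimes n}G^{(n)}[\varphi]$ a well-defined distribution and its pairing with the compactly supported $F^{(n)}[\varphi]$ finite. To complete the step I would check that $F\cdot_P G$ is again polynomial in $\varphi$ with functional derivatives enjoying the properties of Remark \ref{Remark: on properties of functionals}, hence an element of $\mathcal{P}[N;b]$; here the local Hadamard form of $P$ (Remark \ref{Remark: local Hadamard representation of parametrices}) is convenient, since it exhibits the singularity of $P$, and of each of its powers, as logarithmic, hence locally integrable on the two-dimensional $\Sigma$. Finally, $1\in\mathcal{P}_{\mathrm{loc}}^{\otimes 0}\equiv\mathbb{C}$ satisfies $1^{(n)}=0$ for $n\ge1$, so $1\cdot_P F=F\cdot_P 1=F$ and $1$ is a unit.

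\emph{Step 2: commutativity and associativity.} Commutativity I would deduce from the formal self-adjointness of the parametrices in $\operatorname{Par}[N;b]$ (Remark \ref{Remark: adjoint of parametrix}): in kernel form this is the symmetry of $P$ under simultaneous exchange of base points and bundle indices, so $\langle F^{(n)}[\varphi],P^{\otimes n}G^{(n)}[\varphi]\rangle=\langle G^{(n)}[\varphi],P^{\otimes n}F^{(n)}[\varphi]\rangle$ for all $n$ (also using that $F^{(n)}[\varphi]$ is symmetric in its entries), and, combined with the commutativity of the pointwise product $\mathcal{M}$, \eqref{Equation: definition of Pw-product on fiber algebra} gives $F\cdot_P G=G\cdot_P F$. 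For associativity I would run the standard multi-slot argument: letting $\Upsilon_P^{(ij)}$ denote the bidifferential operator acting as $\Upsilon_P$ on the $i$-th and $j$-th of three tensor slots and trivially on the third, these operators commute pairwise since distinct functional derivatives commute, and the Leibniz rule along the diagonal turns $\Upsilon_P$ between a pair of merged slots and the remaining one into $\Upsilon_P^{(13)}+\Upsilon_P^{(23)}$; together these facts yield
\begin{align*}
(F\cdot_P G)\cdot_P H=\mathcal{M}_3\circ\exp\big(\Upsilon_P^{(12)}+\Upsilon_P^{(13)}+\Upsilon_P^{(23)}\big)(F\otimes G\otimes H)=F\cdot_P(G\cdot_P H)\,,
\end{align*}
where $\mathcal{M}_3$ is the $3$-fold pointwise product and all intermediate pairings are legitimate by Step 1.

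\emph{Step 3: the $*$-operation.} Since $E$, hence every parametrix in $\operatorname{Par}[N;b]$, has real coefficients, I would take $\overline{P}=P$; then, as $\varphi$ is a real section, $\overline{F^{(n)}[\varphi]}=(F^*)^{(n)}[\varphi]$ and $\overline{P^{\otimes n}}=P^{\otimes n}$, so conjugating \eqref{Equation: definition of Pw-product on fiber algebra} gives $(F\cdot_P G)^*=F^*\cdot_P G^*=G^*\cdot_P F^*$. Together with anti-linearity, $(F^*)^*=F$ and $1^*=1$ this shows that $\mathcal{F}_P[N;b]=(\mathcal{P}[N;b],\cdot_P,*)$ is a unital, associative, commutative $*$-algebra. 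I expect Step 1 to be the only real obstacle: checking that the contractions make sense and that $\mathcal{P}[N;b]$ is closed under $\cdot_P$ is where the elliptic character of $E$ -- through the wave front set of its parametrix -- and the transversality of the wave front sets of the functional derivatives enter decisively, whereas the symmetry and reality of $P$ render the algebraic properties essentially automatic, the commutativity of $\cdot_P$ in particular being a feature that fails for the Lorentzian Wick product.
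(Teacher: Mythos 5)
Your Steps 2 and 3 are fine and match the paper's (terse) argument: commutativity follows from the symmetry of $P\in\operatorname{Par}[N;b]$, and associativity from the exponential/multi-slot structure of $\mathcal{M}\circ\exp(\Upsilon_P)$. The gap is in Step 1, at the one point you yourself single out as ``the only real obstacle''. You justify the finiteness of $\big\langle F^{(n)}[\varphi],P^{\otimes n}G^{(n)}[\varphi]\big\rangle$ by a wave-front-set argument, but in the Euclidean setting that argument does not close. Since $E$ is elliptic, $\mathrm{WF}(P)$ is the full conormal bundle of the diagonal, $\{(x,x;\xi,-\xi):\xi\neq0\}$, which is invariant under $\xi\mapsto-\xi$: there is no analogue of the one-sided (positive-frequency) Hadamard condition that makes such pairings work in the Lorentzian case. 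Concretely, because $F^{(n)}[\varphi]$ and $G^{(n)}[\varphi]$ are smooth densities times $\delta$-distributions on the thin diagonals, the contraction reduces to integrating the pointwise power $P(x,x')^{n}$ against test densities, and H\"ormander's criterion for that product requires $\mathrm{WF}(P)+\dots+\mathrm{WF}(P)$ to avoid the zero section --- which it does not, precisely because of the sign symmetry above. Theorem 8.2.13 lets you define $P^{\otimes n}G^{(n)}[\varphi]$ as a distribution, but it says nothing about pairing the result (whose wave front set again sits over the conormal of $D_n$) with the equally singular $F^{(n)}[\varphi]$.

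The ingredient that actually saves the day is the one you relegate to the closure check: the singularity of $P$ is only logarithmic, $H^{ab}=V^{ab}\log(\sigma/\ell_H^2)$, so its Steinmann scaling degree at the diagonal is smaller than $2=\operatorname{codim}D_2$, and likewise for all of its powers; here it matters both that $\dim\Sigma=2$ and that elements of $\mathcal{P}_{\textrm{loc}}[N;b]$ contain no derivatives of $\varphi$ --- \textit{cf.} Remark \ref{Remark: on the log-singularity of the parametrices}. The paper invokes \cite[Thm.~5.2]{Brunetti:1999jn} at exactly this point: the scaling-degree bound guarantees that the contraction of $P^{\otimes n}$ with $F^{(n)}\otimes G^{(n)}$ admits a unique extension to the diagonal and yields a compactly supported distribution in $C^\infty(\Sigma)'$, which can then be integrated against the constant function. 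Equivalently, powers of $\log\sigma$ are locally integrable on $\Sigma\times\Sigma$, so the pointwise powers of $P$ are honest distributions without any microlocal input. If you promote your parenthetical observation about local integrability from the closure check to the definition of the pairing itself, and drop the claim that the wave-front-set conditions render the pairing finite, your proof becomes correct and coincides with the paper's.
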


\begin{proof}
  The first step consists of observing that \eqref{Equation: definition of Pw-product on fiber algebra} is well-defined.
  Convergence of the sum is guaranteed since, the functionals being polynomial, only a finite set of terms contributes.
  The only problem might arise from $\langle F^{(n)}[\varphi],P^{\otimes n}G^{(n)}[\varphi]\rangle$.
  Yet, in the case at hand, the singular behaviour of $F^{(n)}[\varphi],G^{(n)}[\varphi]$ is known -- \textit{cf.} remark \ref{Remark: on properties of functionals} -- while that of $P$ is that of the Hadamard parametrix $H$ whose local behaviour is logarithmic in the halved squared geodesic distance $\sigma$, see equations \eqref{Equation: ansaltz for Hadamard parametrix}.
  Hence, using that the scaling degree of $H$ is smaller than $2$ and that $F,G$ do not contain derivatives of $\varphi$, we can use \cite[Thm. 5.2]{Brunetti:1999jn} to infer that the contraction of $P^{\otimes n}\in\mathrm{S}\Gamma_{\mathrm{c}}(\psi^*T^*M^{\boxtimes 2n})'$ with $F^{(n)}\otimes G^{(n)}\in\mathrm{S}\Gamma(\psi^*TM^{\boxtimes 2n})'$ yields a well-defined distribution with compact support in $C^\infty(\Sigma)'$, which can thus be integrated against the constant function.
  To conclude the proof, we observe that associativity is guaranteed per construction while commutativity is a by-product of the fact that the parametrix $P$ is symmetric -- \textit{cf.} remark \ref{Remark: adjoint of parametrix}.
\end{proof}
	
\begin{remark}\label{Remark: action of parametrix on top densities}
	Equation \eqref{Equation: definition of Pw-product on fiber algebra} is well-defined as a consequence of \cite[Thm. 8.2.13]{Hormander-83} and of the extension of the parametrix $P$ as a distribution on $\Gamma_{\mathrm{c}}(\wedge^{\mathrm{top}}T^*\Sigma\otimes\psi^*T^*M)$.
	This is defined setting $P\alpha:=\mu_\gamma P(*^{-1}_\gamma\alpha)$ for $\alpha\in\Gamma_{\mathrm{c}}(\wedge^{\mathrm{top}}T^*\Sigma\otimes\psi^*T^*M)$ and 
	\begin{gather}
		\langle \alpha,P\beta\rangle:=\int_\Sigma\langle\alpha,\ast_\gamma P\alpha\rangle\,,
	\end{gather}
	for $\alpha,\beta\in\Gamma_{\mathrm{c}}(\wedge^{\mathrm{top}}T^*\Sigma\otimes\psi^*T^*M)$ while $\ast_\gamma$ denotes the Hodge operator $\ast_\gamma\colon\Gamma(\wedge^\bullet T^*\Sigma)\to\Gamma(\wedge^{\dim\Sigma-\bullet}T^*\Sigma)$.
\end{remark}

\begin{remark}\label{Remark: on the log-singularity of the parametrices}
	Notice that the previous Proposition strongly relies on the assumption $\dim\Sigma=2$ as well as on Definition \ref{Definition: smooth local polynomial functionals} of smooth polynomial local functionals without derivatives.
	As a matter of fact, for higher dimension or considering polynomial functionals including derivatives of the configuration $\varphi$, the contraction between $P^{\otimes n}$ and $F^{(n)}\otimes G^{(n)}$ would not be uniquely defined.
	In this case, different extensions exist as one can infer following \cite{Brunetti:1999jn} and thus one has to cope with \textit{families} of well-defined products $\cdot_P$.
	An application of these ideas has already been studied in \cite{FR12,FR13} in the context of gauge theories.
	The discussion of such scenario is behind the scopes of this paper and it is postponed to a future work \cite{Dappiaggi-Drago-Rinaldi-19}, see also \cite{Keller-09,Keller:2010xq} and \cite{Dang-14}.
\end{remark}

\begin{remark}\label{Rem: Wick_powers_fixed_P}
	It is worth observing that the algebra of local polynomial functionals $\mathcal{F}_P[N;b]$ already includes elements which can be interpreted as Wick powers of a field $\varphi$.
	As a concrete example, thought especially for a reader who is more familiar with the standard point splitting procedure, consider the functional $F_\omega(\varphi)=\int_\Sigma d\mu_\gamma\,\varphi^a(x)\varphi^b(x)\omega_{ab}(x)$ with $\omega\in\Gamma_{\mathrm{c}}(\mathrm{S}^{\otimes 2}\psi^*T^*M)$.
	One can in turn pick any sequence $(g_n)_a(x)(f_n)_b(x^\prime)$ with $f_n,g_n\in\Gamma_{\mathrm{c}}(\psi^*T^*M)$ for all $n\in\mathbb{N}$ such that, in the weak topology, $\lim\limits_{n\to\infty}(f_n)_a(x)(g_n)_b(x^\prime)=\omega_{ab}(x)\delta(x,x^\prime)$.
	As a consequence one can rewrite
	$$F_\omega(\varphi)=\lim_{n\to\infty} F^\prime_{f_n}(\varphi)F^\prime_{g_n}(\varphi)=\lim_{n\to\infty}\left(\left(F^\prime_{f_n}\cdot_P F^\prime_{g_n}\right)(\varphi)-P(f_n,g_n)\right),$$
	where $F^\prime_{f_n}(\varphi)=\int_\Sigma d\mu_\gamma\,f_n(x)\varphi(x)$.
The right hand side of this last chain of equalities translates in the functional language the standard expression yielding the definition of a Wick ordered, squared field via a point splitting procedure.
\end{remark}

Notice that $(N;b)\to\mathcal{F}_P[N;b]$ does \textit{not} identify an Euclidean locally covariant theory as per Definition \ref{Definition: locally covariant theory} due to the choice of an arbitrary $P\in\operatorname{Par}[N;b]$.
Our next goal is to overcome this hurdle and the first step in this direction consists of showing that, for a fixed object in $\mathsf{Bkg}$, all choices of $P$ are equivalent.
The following Proposition generalizes to the case in hand a well-known property, see \textit{e.g.} \cite[Lemma 2.1]{Hollands-Wald-01} for the counterpart in a Lorentzian setting.
Since the proof is identical, mutatis mutandis to that of \cite[Prop. 1.4.7]{Lindner:2013ila}, \cite[Prop. II.4]{Keller-09}, we omit it.

\begin{proposition}\label{Prop:alpha_map}
Let $(N;b)\in\operatorname{Obj}(\mathsf{Bkg})$ be arbitrary but fixed and let  $P,\widetilde{P}\in\operatorname{Par}[N;b]$.
Then the algebras $\mathcal{F}_P[N;b]$ and $\mathcal{F}_{\widetilde{P}}[N;b]$ are $*$-isomorphic, the $*$-isomorphism being realized by
\begin{align}\label{Equation: star-isomorphism between different fiber algebras}
	\alpha_P^{\widetilde{P}}\colon\mathcal{F}_{\widetilde{P}}[N;b]\to\mathcal{F}_P[N;b]\,,\quad
	(\alpha_P^{\widetilde{P}}F)(\varphi):=\bigg[\exp\big[\Upsilon_{P-\widetilde{P}}\big]F\bigg](\varphi)\,,
\end{align}
where 
\begin{align}\label{Equation: Gamma contraction operator}
\bigg[\exp\big[\Upsilon_{P-\widetilde{P}}\big]F\bigg](\varphi)&=\sum\limits_{n=0}^\infty\frac{1}{2^n n!}\langle(P-\widetilde{P})^{\otimes n},F^{(2n)}[\varphi]\rangle
\end{align}
and where $\Upsilon_{P-\widetilde{P}}$ is such that
\begin{equation*}
(\Upsilon_{P-\widetilde{P}}F)(\varphi):=\frac{1}{2}\big\langle P-\widetilde{P},F^{(2)}[\varphi]\big\rangle\,.
\end{equation*}
\end{proposition}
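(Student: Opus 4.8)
The plan is to verify that the map $\alpha_P^{\widetilde P}$ defined in \eqref{Equation: star-isomorphism between different fiber algebras} is a well-defined linear bijection which intertwines the two products $\cdot_{\widetilde P}$ and $\cdot_P$ and commutes with the $*$-involution. First I would check that the formula $\exp[\Upsilon_{P-\widetilde P}]F$ makes sense: by Remark~\ref{Remark: adjoint of parametrix} the difference $P-\widetilde P$ is a \emph{smooth} symmetric section in $\mathrm{S}\Gamma(\psi^*TM^{\boxtimes 2})$, so the pairing $\langle (P-\widetilde P)^{\otimes n},F^{(2n)}[\varphi]\rangle$ is the contraction of a smooth (hence everywhere-defined) symmetric section with a compactly supported distribution, which is unambiguous; since $F$ is polynomial only finitely many terms survive, so the sum converges. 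Linearity is immediate, and bijectivity follows because $\alpha_{\widetilde P}^{P}=(\alpha_P^{\widetilde P})^{-1}$, as $\Upsilon_{P-\widetilde P}$ and $\Upsilon_{\widetilde P-P}$ are commuting operators with $\exp[\Upsilon_{P-\widetilde P}]\exp[\Upsilon_{\widetilde P-P}]=\exp[\Upsilon_{0}]=\operatorname{Id}$.

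The heart of the argument is the intertwining identity $\alpha_P^{\widetilde P}(F\cdot_{\widetilde P}G)=(\alpha_P^{\widetilde P}F)\cdot_P(\alpha_P^{\widetilde P}G)$. I would carry this out at the level of the generating operators, writing $\cdot_{\widetilde P}=\mathcal{M}\circ\exp(\Upsilon_{\widetilde P})$ and $\cdot_P=\mathcal{M}\circ\exp(\Upsilon_P)$ as in Proposition~\ref{Proposition: Pw-product on fiber algebra}. The key combinatorial fact is that on $\mathcal{P}[N;b]\otimes\mathcal{P}[N;b]$ one has the ``completing the square'' relation
\begin{align*}
  \exp\big[\Upsilon_{P-\widetilde P}\big]\otimes\exp\big[\Upsilon_{P-\widetilde P}\big]\circ\exp(\Upsilon_{\widetilde P})
  =\exp(\Upsilon_{P})\circ\exp\big[\Upsilon_{P-\widetilde P}\big]^{(2)}\,,
\end{align*}
where $\exp[\Upsilon_{P-\widetilde P}]^{(2)}$ denotes the operator acting on the product $F\otimes G$ via the full second functional derivative (mixing the two factors). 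Since all the operators $\Upsilon_{\widetilde P}$, $\Upsilon_{P-\widetilde P}$ (acting within a factor or across the two factors) are given by contractions with \emph{symmetric} kernels and hence pairwise commute, this is a purely formal manipulation of exponentials of commuting ``second order'' operators — exactly the functional-analytic shadow of the classical identity relating normal orderings with respect to two different two-point functions. Applying $\mathcal{M}$ to both sides and using $\mathcal{M}\circ\exp[\Upsilon_{P-\widetilde P}]^{(2)}=\exp[\Upsilon_{P-\widetilde P}]\circ\mathcal{M}$ gives the claim. Compatibility with $*$ is trivial since $\alpha_P^{\widetilde P}F$ is again defined pointwise by $\varphi\mapsto$ a polynomial in $\varphi$ with the conjugate coefficients, so $(\alpha_P^{\widetilde P}F)^*(\varphi)=\overline{(\alpha_P^{\widetilde P}F)(\varphi)}=(\alpha_P^{\widetilde P}F^*)(\varphi)$, using that $P-\widetilde P$ is real.

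The main obstacle I expect is purely bookkeeping: keeping track of which functional derivatives are contracted with which kernel when one expands the exponentials, so that the ``cross'' contractions between $F$ and $G$ in $(\alpha_P^{\widetilde P}F)\cdot_P(\alpha_P^{\widetilde P}G)$ are correctly resummed into the $\cdot_{\widetilde P}$-product after applying $\alpha_P^{\widetilde P}$ to the result. This is the standard diagrammatic identity underlying the equivalence of Wick orderings, and it is precisely the reason the authors can invoke \cite[Prop. 1.4.7]{Lindner:2013ila} or \cite[Prop. II.4]{Keller-09}: once one observes that the only Riemannian-specific input — namely that $P$, and hence $P-\widetilde P$, is \emph{symmetric} rather than antisymmetric in its arguments, by Remark~\ref{Remark: adjoint of parametrix} — is used only to guarantee that the relevant operators commute and that $\cdot_P$, $\cdot_{\widetilde P}$ are commutative, the combinatorial core of the proof is verbatim the Lorentzian one. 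Hence, as the authors state, the proof may be safely omitted.
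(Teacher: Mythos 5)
Your strategy is exactly the standard one that the paper defers to (\cite[Prop.~1.4.7]{Lindner:2013ila}, \cite[Prop.~II.4]{Keller-09}): well-definedness of $\exp[\Upsilon_{P-\widetilde{P}}]$ from the smoothness of $P-\widetilde{P}$ guaranteed by Remark~\ref{Remark: adjoint of parametrix}, invertibility from $\Upsilon_{P-\widetilde{P}}+\Upsilon_{\widetilde{P}-P}=\Upsilon_0$ together with commutativity, the homomorphism property from a ``completing the square'' identity among commuting exponentials, and compatibility with $*$ from the reality and symmetry of $P-\widetilde{P}$. Your identification of the symmetry of the parametrix as the only Riemannian-specific input is also correct.

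However, the one formula you actually display — by your own description the heart of the argument — is mis-stated: the cross-contraction ends up grouped with the wrong product. To see this, write $\Upsilon^{(12)}_K$ for the operator $(F\otimes G)\mapsto\langle F^{(1)},KG^{(1)}\rangle$ of \eqref{Equation: definition of Pw-product on fiber algebra}, and $\Upsilon^{(11)}_K,\Upsilon^{(22)}_K$ for the contraction $\frac12\langle K,\cdot^{(2)}\rangle$ acting within the first, respectively second, tensor factor; then $\alpha_P^{\widetilde{P}}\otimes\alpha_P^{\widetilde{P}}=\exp[\Upsilon^{(11)}_{P-\widetilde{P}}+\Upsilon^{(22)}_{P-\widetilde{P}}]$, while the Leibniz rule for the second functional derivative of $FG$ gives $\alpha_P^{\widetilde{P}}\circ\mathcal{M}=\mathcal{M}\circ\exp[\Upsilon^{(11)}_{P-\widetilde{P}}+\Upsilon^{(12)}_{P-\widetilde{P}}+\Upsilon^{(22)}_{P-\widetilde{P}}]$, which is your relation $\mathcal{M}\circ\exp[\Upsilon_{P-\widetilde{P}}]^{(2)}=\exp[\Upsilon_{P-\widetilde{P}}]\circ\mathcal{M}$. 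Since all these operators commute, only the total exponent on each side matters, and the identity you need is
\begin{equation*}
\exp\big[\Upsilon^{(12)}_{P}\big]\circ\big(\alpha_P^{\widetilde{P}}\otimes\alpha_P^{\widetilde{P}}\big)
=\exp\big[\Upsilon_{P-\widetilde{P}}\big]^{(2)}\circ\exp\big[\Upsilon^{(12)}_{\widetilde{P}}\big]\,,
\end{equation*}
whose two exponents agree precisely because $\Upsilon^{(12)}_{\widetilde{P}}+\Upsilon^{(12)}_{P-\widetilde{P}}=\Upsilon^{(12)}_{P}$; composing with $\mathcal{M}$ and using the Leibniz relation then yields $(\alpha_P^{\widetilde{P}}F)\cdot_P(\alpha_P^{\widetilde{P}}G)=\alpha_P^{\widetilde{P}}(F\cdot_{\widetilde{P}}G)$. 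In the version you wrote, $\exp(\Upsilon_{\widetilde{P}})$ is grouped with $\alpha_P^{\widetilde{P}}\otimes\alpha_P^{\widetilde{P}}$ and $\exp(\Upsilon_{P})$ with $\exp[\Upsilon_{P-\widetilde{P}}]^{(2)}$; comparing exponents, the two sides then differ by $\exp[2\Upsilon^{(12)}_{P-\widetilde{P}}]$, so the displayed equation is false unless $P=\widetilde{P}$. This is a bookkeeping slip rather than a conceptual gap — the surrounding prose shows you have the right mechanism in mind — but as written the central step does not hold and must be corrected for the proof to be complete.
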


\noindent In view of this last Proposition we can recollect all $*$-algebras $\mathcal{F}_P[N;b]$ in a single object:
\begin{Definition}\label{Definition: bundle of fiber algebras}
	We call $\mathcal{E}[N;b]$ the bundle 
	\begin{align}\label{Equation: definition of the bundle of fiber algebras}
	\mathcal{E}[N;b]:=\bigcup_P\mathcal{F}_P[N;b],
	\end{align}
	with base space $\operatorname{Par}[N;b]$ and projection map $\pi_{\mathcal{E}[N;b]}(F_P):=P$.
\end{Definition}
\begin{remark}\label{Remark: lifted action on the bundle of fiber algebras}
	Notice that the action $P\to P+W$ of $\mathrm{S}\Gamma(\psi^*TM^{\boxtimes 2})$ on $\operatorname{Par}[N;b]$ can be lifted to $\mathcal{E}[N;b]$ via the $*$-isomorphism \eqref{Equation: star-isomorphism between different fiber algebras}:
	\begin{align}\label{Equation: lift to the bundle of the action on the base}
	\alpha_W(F_{P}):=\alpha_{P+W}^{P}F_{P}\,,\qquad\forall F_{P}\in\mathcal{E}[N;b]\,.
	\end{align}
\end{remark}

\begin{Definition}\label{Definition: locally covariant theory of interest}
	Let $\Gamma_{\textrm{eq}}(\mathcal{E}[N;b])$ be the the complex vector space of equivariant sections on $\mathcal{E}[N;b]$	
	\begin{align}\label{Equation: definition of locally covariant algebra of interest}
	\Gamma_{\textrm{eq}}(\mathcal{E}[N;b]):=\big\lbrace F\in\Gamma(\mathcal{E}[N;b])\;|\;F(P)=\alpha_{P}^{\widetilde{P}}F(\widetilde{P})\quad\forall P,\widetilde{P}\in\operatorname{Par}[N;b]\big\rbrace\,.
	\end{align}
	We denote with $\mathcal{A}[N;b]\equiv(\Gamma_{\textrm{eq}}(\mathcal{E}[N;b]),\cdot,*)$ the unital $*$-algebra with the pointwise product $\cdot_P$ as in \eqref{Equation: definition of Pw-product on fiber algebra} and with the fiberwise involution
	\begin{align}\label{Equation: product and involution on the locally covariant algebra of interest}
	(F\cdot G)(P):=F(P)\cdot_P G(P)\,,\qquad
	F^*(P):=F(P)^*\,,
	\end{align}
	for all $F,G\in\Gamma_{\textrm{eq}}(\mathcal{E}[N;b])$.
\end{Definition}

\noindent We can now prove that $\mathcal{A}[N;b]$ is the sought algebra.

\begin{remark}
	The algebra $\mathcal{A}[N;b]$ can also be read as a concrete realization of the unique (up to $*$-isomorphism) $*$-algebra $\mathfrak{A}[N;b]$ for which there exists a family of $*$-isomorphisms $\alpha_{P}\colon\mathfrak{A}[N;b]\to\mathcal{F}_P[N;b]$ for all $P\in\operatorname{Par}[N;b]$ such that, $\alpha_{P}=\alpha^P_Q\circ\alpha_Q$ for all $P,Q\in\operatorname{Par}[N;b]$. In addition we observe that the concrete algebras that we have constructed do not carry any topology. Following \cite{Brouder:2017ygf} one can bypass this limitation. Yet, working with topological $*$-algebras would not change significantly the properties and the constructions in this paper. On the contrary it would play a key role whenever one looks for algebraic states on $\mathcal{A}[N;b]$ and for the associated GNS representation. Since this issue goes well beyond the scope of this work, we shall not further comment about it.
\end{remark}

\begin{remark}\label{Rem: additional_functors}
In the following we need to specify a few additional functor. We call
\begin{align*}
&\Gamma\colon\operatorname{Obj}(\mathsf{Bkg})\to\operatorname{Obj}(\mathsf{Vec})\qquad
\Gamma[N;b]:=\Gamma(\psi^*TM)\,,\\
&\Gamma_{\mathrm{c}}\colon\operatorname{Obj}(\mathsf{Bkg})\to\operatorname{Obj}(\mathsf{Vec})\qquad
\Gamma_{\mathrm{c}}[N;b]:=\Gamma_{\mathrm{c}}(\psi^*T^*M)\,,\\
&\operatorname{Par}\colon\operatorname{Obj}(\mathsf{Bkg})\to\operatorname{Obj}(\mathsf{Vec})\qquad
[N;b]\mapsto\operatorname{Par}[N;b]\,.
\end{align*}
Let $(\tau,t)\in\operatorname{Ar}(\mathsf{Bkg})$ be an arrow from $(N;b)$ to $(\widetilde{N};\widetilde{b})$, that is, $\tau\colon\Sigma\to\widetilde{\Sigma}$ and $t\colon M\to\widetilde{M}$ are isometric, orientation preserving embeddings such that $\widetilde{\psi}\circ\tau=t\circ\psi$.
The map $\tau$ can be lifted to an isomorphism of vector bundles $\widehat{\tau}\colon\tau^*\widetilde{\psi}^*T\widetilde{M}\to\widetilde{\psi}^*T^*\widetilde{M}|_{\tau(\Sigma)}$ by setting $\widehat{\tau}(x,\xi):=(\tau(x),\xi)$ -- \textit{cf.} Remark \ref{Remark: on pull-back bundle and pull-back connection}.
In addition the compatibility condition \eqref{Equation: compatibility condition for arrows in BkgG} implies
\begin{align*}
\tau^*\widetilde{\psi}^*T\widetilde{M}=
(\widetilde{\psi}\circ\tau)^*T\widetilde{M}=
(t\circ\psi)^*T\widetilde{M}=\psi^*t^*T\widetilde{M}=\mathrm{d}t\circ\widehat{\psi}(\psi^*TM)\,,
\end{align*}
where $\mathrm{d}t\colon TM\to T\widetilde{M}$ is the push-forward along $t$, while $\widehat{\psi}\colon\psi^*TM\to TM|_{\psi(\Sigma)}$ has been defined in Remark \ref{Remark: on pull-back bundle and pull-back connection}.
The composition $\widehat{\tau}_t:=\widehat{\tau}\circ\mathrm{d}t\circ\widehat{\psi}\colon\psi^*TM\to\widetilde{\psi}^*T\widetilde{M}|_{\tau(\Sigma)}$ is thus an injective morphism of vector bundles and the same applies to $\widehat{\tau}_{t,\mathrm{c}}\colon\psi^*T^*M\to\widetilde{\psi}^*T^*\widetilde{M}|_{\tau(\Sigma)}$.
Hence, we can consider
\begin{align*}
&\Gamma[\tau,t]\colon\Gamma[\widetilde{N};\widetilde{b}]\to\Gamma[N;b]\qquad
\Gamma[\tau,t]\widetilde{\varphi}:=\widehat{\tau_t}^{-1}\circ\widetilde{\varphi}\circ\tau\,,\\
&\Gamma_{\mathrm{c}}[\tau,t]\colon\Gamma_{\mathrm{c}}[N;b]\to\Gamma_{\mathrm{c}}[\widetilde{N};\widetilde{b}]\qquad
\Gamma_{\mathrm{c}}[\tau,t]\omega:=\widehat{\tau}_{t,\mathrm{c}}\circ\omega\circ\tau^{-1}\,,\\
&\operatorname{Par}[\tau,t]\colon\operatorname{Par}[\widetilde{N};\widetilde{b}]\to\operatorname{Par}[N;b]\qquad
\operatorname{Par}[\tau,t]\widetilde{P}:=\Gamma[\tau,t]\circ \widetilde{P}\circ\Gamma_{\mathrm{c}}[\tau,t]\,.
\end{align*}
Notice that $\Gamma_{\mathrm{c}}[\tau,t]\omega$ is well-defined on account of the support properties of $\omega$, in particular $\Gamma_{\mathrm{c}}(\tau,t)\omega|_x=0$ if $x\notin\tau(\Sigma)$.
In other words $\Gamma,\operatorname{Par}$ (\textit{resp}. $\Gamma_{\mathrm{c}}$) are contravariant (\textit{resp}. covariant) functors from $\mathsf{Bkg}$ to $\mathsf{Vec}$.
\\
\end{remark}

\begin{proposition}\label{Proposition: A is a functor}
	For all $(N;b)\in\operatorname{Obj}(\mathsf{Bkg})$, let $\mathcal{A}\colon\mathsf{Bkg}\to\mathsf{Alg}$ be such that, for all $(N;b)\in\operatorname{Obj}(\mathsf{Bkg})$, $\mathcal{A}[N;b]$ is the unital $*$-algebra as per Definition \ref{Definition: locally covariant theory of interest}, while, for every $(\tau,t)\in\operatorname{Arr}(\mathsf{Bkg})$, $\mathcal{A}[\tau,t]\in\operatorname{Ar}(\mathsf{Alg})$ as
	\begin{align*}
	\mathcal{A}[\tau,t]\colon\mathcal{A}[N;b]\to\mathcal{A}[\widetilde{N};\widetilde{b}]\qquad
	\mathcal{A}[\tau,t]F:=\Gamma[\tau,t]^\sharp\circ F\circ\operatorname{\operatorname{Par}}[\tau,t]\,,
	\end{align*}
	where 	
	\begin{align*}
	\Gamma[\tau,t]^\sharp\colon\mathcal{F}_{\operatorname{Par}[\tau,t]\widetilde{P}}[N;b]\to\mathcal{F}_{\widetilde{P}}[\widetilde{N},\widetilde{b}]\qquad
	\Gamma[\tau,t]^\sharp F:=F\circ\Gamma[\tau,t]\,.
	\end{align*}
	Then $\mathcal{A}$ is a covariant functor.
\end{proposition}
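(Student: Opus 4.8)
The plan is to unpack the three things that "$\mathcal{A}$ is a covariant functor into $\mathsf{Alg}$" requires: (i) that for each arrow $(\tau,t)$ the prescription $\mathcal{A}[\tau,t]F=\Gamma[\tau,t]^\sharp\circ F\circ\operatorname{Par}[\tau,t]$ really produces an element of $\mathcal{A}[\widetilde N;\widetilde b]$, i.e. an \emph{equivariant} section of $\mathcal{E}[\widetilde N;\widetilde b]$; (ii) that $\mathcal{A}[\tau,t]$ is a unit-preserving, injective $*$-homomorphism; and (iii) that $\mathcal{A}[\operatorname{Id}]=\operatorname{Id}$ and $\mathcal{A}$ respects composition. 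The single organising observation behind (i) and (ii) is that $\Gamma[\tau,t]$ is the composition of restriction along the embedding $\tau$ with the fibrewise bundle isomorphism $\widehat{\tau_t}$ over $\tau(\Sigma)$; hence it is a \emph{local} linear map, it preserves compact supports and wave-front transversality, and its transpose, under the canonical integration pairings (well defined because $\tau,t$ are isometric and orientation preserving), is exactly the extension-by-zero map dual to $\Gamma_{\mathrm c}[\tau,t]$. Throughout I would use the functoriality of $\Gamma,\Gamma_{\mathrm c},\operatorname{Par}$ and the defining identity $\operatorname{Par}[\tau,t]\widetilde P=\Gamma[\tau,t]\circ\widetilde P\circ\Gamma_{\mathrm c}[\tau,t]$ recorded in Remark~\ref{Rem: additional_functors}, as well as the $*$-isomorphisms $\alpha_P^{\widetilde P}$ of Proposition~\ref{Prop:alpha_map}.

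First I would check well-definedness as a section. Given $F\in\Gamma_{\mathrm{eq}}(\mathcal{E}[N;b])$ and $\widetilde P\in\operatorname{Par}[\widetilde N;\widetilde b]$, we have $F(\operatorname{Par}[\tau,t]\widetilde P)\in\mathcal{F}_{\operatorname{Par}[\tau,t]\widetilde P}[N;b]$, and on a monomial $F_{\omega_k}$ one computes $F_{\omega_k}\circ\Gamma[\tau,t]=F_{\widetilde\omega_k}$ with $\widetilde\omega_k$ the push-forward of $\omega_k$ along $\tau$ (and the bundle map), extended by zero off $\tau(\Sigma)$; since $\tau$ is an embedding this is smooth, compactly supported, and by the chain rule the locality and wave-front conditions of Remark~\ref{Remark: on properties of functionals} persist, so $\Gamma[\tau,t]^\sharp$ maps $\mathcal{F}_{\operatorname{Par}[\tau,t]\widetilde P}[N;b]$ into $\mathcal{F}_{\widetilde P}[\widetilde N;\widetilde b]$. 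It then remains to verify equivariance, i.e. $\mathcal{A}[\tau,t]F(\widetilde P)=\alpha_{\widetilde P}^{\widetilde P'}\,\mathcal{A}[\tau,t]F(\widetilde P')$; by \eqref{Equation: definition of locally covariant algebra of interest} this reduces to the intertwining $\Gamma[\tau,t]^\sharp\circ\alpha_{\operatorname{Par}[\tau,t]\widetilde P}^{\operatorname{Par}[\tau,t]\widetilde P'}=\alpha_{\widetilde P}^{\widetilde P'}\circ\Gamma[\tau,t]^\sharp$. Since $\operatorname{Par}[\tau,t]$ is affine with linear part the pull-back of symmetric sections, $\operatorname{Par}[\tau,t]\widetilde P'-\operatorname{Par}[\tau,t]\widetilde P$ is the pull-back of $\widetilde P'-\widetilde P$, and expanding the series \eqref{Equation: Gamma contraction operator} and using the chain rule $(F\circ\Gamma[\tau,t])^{(m)}[\widetilde\varphi]=(\Gamma[\tau,t]^{\otimes m})'F^{(m)}[\Gamma[\tau,t]\widetilde\varphi]$ together with the transpose identity relating $\Gamma[\tau,t]$ and $\Gamma_{\mathrm c}[\tau,t]$, the two sides match term by term.

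Next I would establish (ii). The unit (constant functional) is manifestly preserved, and $(\Gamma[\tau,t]^\sharp F)^*=\Gamma[\tau,t]^\sharp(F^*)$ is immediate from $F^*(\varphi)=\overline{F(\varphi)}$. For the product I would compare the two expansions \eqref{Equation: definition of Pw-product on fiber algebra} term by term: the $n$-th term of $\Gamma[\tau,t]^\sharp(F\cdot_{\operatorname{Par}[\tau,t]\widetilde P}G)$ evaluated at $\widetilde\varphi$ is $\tfrac1{n!}\langle F^{(n)}[\Gamma[\tau,t]\widetilde\varphi],(\operatorname{Par}[\tau,t]\widetilde P)^{\otimes n}G^{(n)}[\Gamma[\tau,t]\widetilde\varphi]\rangle$, and substituting $\operatorname{Par}[\tau,t]\widetilde P=\Gamma[\tau,t]\circ\widetilde P\circ\Gamma_{\mathrm c}[\tau,t]$ and applying the chain rule plus the adjointness of $\Gamma[\tau,t]$ and $\Gamma_{\mathrm c}[\tau,t]$ turns this into the $n$-th term of $(\Gamma[\tau,t]^\sharp F)\cdot_{\widetilde P}(\Gamma[\tau,t]^\sharp G)$; all the contractions are well defined by the same scaling-degree and wave-front argument as in the proof of Proposition~\ref{Proposition: Pw-product on fiber algebra}, with the forced support on $\tau(\Sigma)$ causing no harm. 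Injectivity then follows because $\Gamma[\tau,t]$ is \emph{surjective}: any $\varphi\in\Gamma(\psi^*TM)$ equals $\Gamma[\tau,t]\widetilde\varphi$ for $\widetilde\varphi$ any smooth extension to $\widetilde\Sigma$ of $\widehat{\tau_t}\circ\varphi\circ\tau^{-1}$, which exists since $\tau(\Sigma)$ is an embedded submanifold; hence $\Gamma[\tau,t]^\sharp$ is injective, and if $\mathcal{A}[\tau,t]F=0$ then $F(\operatorname{Par}[\tau,t]\widetilde P)\circ\Gamma[\tau,t]=0$, so $F(\operatorname{Par}[\tau,t]\widetilde P)=0$, and equivariance forces $F\equiv 0$.

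Finally, (iii) is essentially formal: $\Gamma[\operatorname{Id}]=\operatorname{Id}$ and $\operatorname{Par}[\operatorname{Id}]=\operatorname{Id}$ give $\mathcal{A}[\operatorname{Id}]=\operatorname{Id}$, while for composable arrows $(\tau,t)$ and $(\tau',t')$ the contravariance of $\Gamma$ and $\operatorname{Par}$ and the covariance of $\Gamma_{\mathrm c}$ (Remark~\ref{Rem: additional_functors}), together with $(\Gamma[\tau,t]\circ\Gamma[\tau',t'])^\sharp=\Gamma[\tau',t']^\sharp\circ\Gamma[\tau,t]^\sharp$, yield $\mathcal{A}[\tau'\circ\tau,t'\circ t]=\mathcal{A}[\tau',t']\circ\mathcal{A}[\tau,t]$. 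I expect the main obstacle to be the distributional bookkeeping in step (ii): one has to make sure the contractions defining $\cdot_{\operatorname{Par}[\tau,t]\widetilde P}$ and $\cdot_{\widetilde P}$ genuinely correspond under the chain rule in spite of the support restriction to $\tau(\Sigma)$ and the wave-front constraints needed for the extension of tensor products of the parametrix, and to keep track of the fact—already built into Remark~\ref{Rem: additional_functors} via the naturality of $E$ under the $\psi$-compatible isometric embedding \eqref{Equation: compatibility condition for arrows in BkgG}—that $\operatorname{Par}[\tau,t]\widetilde P$ is indeed a formally self-adjoint parametrix for $E$.
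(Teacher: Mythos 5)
Your proposal follows the same route as the paper: the paper's own proof is essentially a two-line appeal to Remark \ref{Rem: additional_functors} (``all structures used to define $\mathcal{A}$ act covariantly''), and everything you write --- well-definedness via the functors $\Gamma$, $\Gamma_{\mathrm{c}}$, $\operatorname{Par}$, the intertwining of $\Gamma[\tau,t]^\sharp$ with the isomorphisms $\alpha_P^{\widetilde P}$ needed for equivariance, the term-by-term comparison of the products $\cdot_{\operatorname{Par}[\tau,t]\widetilde P}$ and $\cdot_{\widetilde P}$, and the formal verification of identities and composition --- is precisely the bookkeeping the authors leave implicit. On the whole this is a correct and considerably more explicit version of their argument.

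One step is wrong as stated, namely your justification of injectivity. Since $\dim\Sigma=\dim\widetilde\Sigma=2$ and $\tau$ is an isometric embedding, $\tau$ is an \emph{open} embedding, so $\tau(\Sigma)$ is an open, codimension-zero submanifold of $\widetilde\Sigma$; a smooth section over an open submanifold need not extend smoothly to the ambient manifold, so $\Gamma[\tau,t]$ is in general \emph{not} surjective, contrary to your claim that ``any smooth extension \dots exists since $\tau(\Sigma)$ is an embedded submanifold''. The conclusion is nevertheless salvageable precisely because the elements of $\mathcal{P}[N;b]$ are local polynomial functionals with compactly supported coefficients: given $\varphi\in\Gamma(\psi^*TM)$ and a functional $F$ whose support $K:=\overline{\bigcup_\varphi\operatorname{supp}(F^{(1)}[\varphi])}$ is compact, pick $\chi\in C^\infty_{\mathrm{c}}(\Sigma)$ equal to $1$ on a neighbourhood of $K$; then $\chi\varphi$ pushes forward along $(\tau,t)$, extended by zero off $\tau(\Sigma)$, to a global section $\widetilde\varphi\in\Gamma(\widetilde\psi^*T\widetilde M)$ with $\Gamma[\tau,t]\widetilde\varphi=\varphi$ on a neighbourhood of $K$, whence $F(\varphi)=F(\Gamma[\tau,t]\widetilde\varphi)$ by locality. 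This gives injectivity of $\Gamma[\tau,t]^\sharp$ on $\mathcal{P}[N;b]$, after which your deduction that $\mathcal{A}[\tau,t]F=0$ forces $F=0$ (using equivariance in $P$) goes through unchanged.
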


\begin{proof}
	It suffices to observe that $\mathcal{A}$ is well-defined when acting on objects since $\mathcal{A}[N;b]=\Gamma_{\textrm{eq}}[N;b]$ is per construction a unital $*$-algebra, whereas the analysis in Remark \ref{Rem: additional_functors} entails that $\mathcal{A}[\tau,t]\in\operatorname{Arr}(\mathsf{Alg})$ for all $(\tau,t)\in\operatorname{Ar}(\mathsf{Bkg})$. Since all structure used to define $\mathcal{A}$ act covariantly, $\mathcal{A}$ is a covariant functor.
\end{proof}

\noindent In order to conclude that the functor $\mathcal{A}$ introduced in Definition \ref{Definition: locally covariant theory of interest} identifies an Euclidean locally covariant theory as per Definition \ref{Definition: locally covariant theory}, the scaling property remains to be discussed.
It is particularly important to stress the relation between such property and the local Hadamard representation of the parametrix -- \textit{cf.} remark \ref{Remark: different local representation of a parametrix under scaling}.
\begin{proposition}\label{Proposition: scaling map}
	Let $(N;b)=(\Sigma,M;\psi,\gamma,g)\in\operatorname{Obj}(\mathsf{Bkg})$, $(N;b_\lambda):=(\Sigma,M;\psi,\lambda^{-2}\gamma,g)\in\operatorname{Obj}(\mathsf{Bkg})$, for $\lambda>0$ -- \textmd{cf.} Remark \ref{Remark: definition of scaling in BkgG}.
	Let $\mathcal{A}[N;b], \mathcal{A}[N;b_\lambda]$ be the associated $*$-algebras as per definition \ref{Definition: locally covariant theory of interest}.
	Then the map
	\begin{align}\label{Equation: scaling map between parametrices}
		\mathrm{s}_\lambda\colon\operatorname{Par}[N;b]\to\operatorname{Par}[N;b_\lambda]\,,\qquad
		\mathrm{s}_\lambda P:=P_\lambda:=\lambda^{-2}P\,,
	\end{align}
	is an isomorphism of affine spaces. Furthermore, the map
	\begin{align}\label{Equation: scaling map}
		\varsigma_\lambda\colon\mathcal{A}[N;b_\lambda]\to\mathcal{A}[N;b]\,,\qquad(\varsigma_\lambda F)(P,\varphi):=\widehat{\mathrm{s}}_\lambda F(P_\lambda,\varphi)\,,
	\end{align}
	is an isomorphism of $*$-algebras such that condition \eqref{Equation: properties of the scaling map} holds true.
	Here $\widehat{\mathrm{s}}_\lambda\colon\mathcal{E}[N;b_\lambda]\to\mathcal{E}[N;b]$ denotes the unique lift of $\mathrm{s}_\lambda$ to an isomorphism of vector bundles such that $\pi_{\mathcal{E}[N;b]}\circ\widehat{\mathrm{s}}_\lambda=\mathrm{s}_\lambda\circ\pi_{\mathcal{E}[N;b_\lambda]}$.
\end{proposition}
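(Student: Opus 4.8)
Everything is derived from the single identity $E_\lambda=\lambda^2E$, where $E_\lambda$ denotes the operator \eqref{Equation: definition of the elliptic operator} associated with the scaled datum $b_\lambda=(\psi,\lambda^{-2}\gamma,g)$ and $E=E_1$. Indeed, in \eqref{Equation: definition of the elliptic operator} the pull-back connection $\nabla^\psi$ is unaffected by $\gamma\mapsto\lambda^{-2}\gamma$ (a constant rescaling of a metric leaves its Levi-Civita connection, hence all Christoffel symbols, unchanged), while $h$ carries one factor $\gamma^{\alpha\beta}$ and thus scales as $h\mapsto\lambda^2h$, so that $\operatorname{tr}_h\mapsto\lambda^2\operatorname{tr}_h$; equivalently one reads this off the local formula in the proof of Lemma~\ref{Lem:E is elliptic} using $\Delta_{\lambda^{-2}\gamma}=\lambda^2\Delta_\gamma$. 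From $E_\lambda=\lambda^2E$ and \eqref{Equation: defining property of parametrix} it is immediate that $\lambda^{-2}P$ is a parametrix of $E_\lambda$ as soon as $P$ is one of $E$, and conversely; moreover formal self-adjointness is preserved, since it is formulated with respect to $g$ (unchanged, because $D^\prime=2$) and an $L^2$-pairing which is insensitive to multiplying the density $\mu_\gamma$ and the operator by positive constants. Hence $\mathrm{s}_\lambda$ maps $\operatorname{Par}[N;b]$ bijectively onto $\operatorname{Par}[N;b_\lambda]$, with inverse $P\mapsto\lambda^2P$. Finally $\operatorname{Par}$ is modelled on $\mathrm{S}\Gamma(\psi^*TM^{\boxtimes 2})$ (Remark~\ref{Remark: adjoint of parametrix}) and $\mathrm{s}_\lambda(P+W)=\mathrm{s}_\lambda P+\lambda^{-2}W$, so $\mathrm{s}_\lambda$ is affine with linear part the isomorphism $W\mapsto\lambda^{-2}W$, whence it is an isomorphism of affine spaces.

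\textbf{Step 2 (the lift and the fibrewise algebra isomorphism).} By Proposition~\ref{Proposition: Pw-product on fiber algebra} the $*$-algebra $\mathcal{F}_Q$ depends on $Q$ alone, so the lift must be, fibrewise, a $*$-isomorphism $\mathcal{F}_{\lambda^{-2}P}\to\mathcal{F}_P$. The plan is to take $\widehat{\mathrm{s}}_\lambda$ to act fibrewise by pull-back of functionals along the rescaling $\varphi\mapsto\lambda^{-1}\varphi$, i.e.\ $(\widehat{\mathrm{s}}_\lambda F)(\varphi):=F(\lambda^{-1}\varphi)$, the exponent being forced by compatibility with $\mathrm{s}_\lambda$. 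Using the homogeneity $(\widehat{\mathrm{s}}_\lambda F)^{(n)}[\varphi]=\lambda^{-n}F^{(n)}[\lambda^{-1}\varphi]$ of the functional derivatives together with $\mathrm{s}_\lambda P=\lambda^{-2}P$, a direct substitution in \eqref{Equation: definition of Pw-product on fiber algebra} gives $\widehat{\mathrm{s}}_\lambda(F\cdot_{\lambda^{-2}P}G)=(\widehat{\mathrm{s}}_\lambda F)\cdot_P(\widehat{\mathrm{s}}_\lambda G)$, while compatibility with the involution $F\mapsto\overline{F(\,\cdot\,)}$ is clear; the same homogeneity applied to \eqref{Equation: Gamma contraction operator} and to $\mathrm{s}_\lambda P-\mathrm{s}_\lambda\widetilde P=\lambda^{-2}(P-\widetilde P)$ yields the intertwining $\widehat{\mathrm{s}}_\lambda\circ\alpha_{P_\lambda}^{\widetilde P_\lambda}=\alpha_{P}^{\widetilde P}\circ\widehat{\mathrm{s}}_\lambda$. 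That this fibrewise prescription is a genuine isomorphism of the vector bundles $\mathcal{E}[N;b_\lambda]$ and $\mathcal{E}[N;b]$, and the unique one covering $\mathrm{s}_\lambda$, rests on the local Hadamard representation \eqref{Equation: local expression for the parametrix}: under $\gamma\mapsto\lambda^{-2}\gamma$ the Hadamard coefficient $V$ is unchanged, the rescaling $\sigma\mapsto\lambda^{-2}\sigma$ of the halved squared geodesic distance is reabsorbed, up to a rescaling of the reference length $\ell_H$, into the coefficients, and $[W_{\lambda^{-2}P}]=\lambda^{-2}[W_P]$ --- all compatible with $\widehat{\mathrm{s}}_\lambda$ (cf.\ Remark~\ref{Remark: different local representation of a parametrix under scaling} and the appendix).

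\textbf{Step 3 (the scaling map and the cocycle relations).} With Step~2 in hand, $\varsigma_\lambda$ is well defined: for $F\in\Gamma_{\mathrm{eq}}(\mathcal{E}[N;b_\lambda])$ the intertwining of the comparison maps gives $(\varsigma_\lambda F)(P)=\widehat{\mathrm{s}}_\lambda\alpha_{P_\lambda}^{\widetilde P_\lambda}F(\widetilde P_\lambda)=\alpha_P^{\widetilde P}(\varsigma_\lambda F)(\widetilde P)$, so $\varsigma_\lambda F\in\Gamma_{\mathrm{eq}}(\mathcal{E}[N;b])$; that $\varsigma_\lambda$ is a unital $*$-homomorphism is then the fibrewise content of Step~2, since product and involution on $\mathcal{A}$ are fibrewise, and $\varsigma_\lambda^{-1}$ is obtained in the same way from $P\mapsto\lambda^2P$ and $\varphi\mapsto\lambda\varphi$, the two composites being the identity by inspection. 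For \eqref{Equation: properties of the scaling map} one introduces, more generally, $\mathrm{s}_{\lambda,\mu}\colon\operatorname{Par}[N;b_\lambda]\to\operatorname{Par}[N;b_\mu]$, $P\mapsto(\lambda/\mu)^2P$ (so $\mathrm{s}_\lambda=\mathrm{s}_{1,\lambda}$), with lift the pull-back along $\varphi\mapsto(\lambda/\mu)\varphi$, and $\varsigma_{\lambda,\mu}$ the induced map on equivariant sections; then $\mathrm{s}_{\lambda,\lambda}$ and its lift are the identity, and $\mathrm{s}_{\sigma,\mu}\circ\mathrm{s}_{\lambda,\sigma}=\mathrm{s}_{\lambda,\mu}$ together with the analogous composition law for the lifts (composition of scalar rescalings) translate at once into the two identities in \eqref{Equation: properties of the scaling map}.

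\textbf{Expected obstacle.} Step~1 and the categorical bookkeeping of Step~3 are routine; the substance is in Step~2, namely isolating the correct lift and verifying that the \emph{same} rescaling of configurations simultaneously intertwines the deformed products $\cdot_{\lambda^{-2}P}\leftrightarrow\cdot_P$ and the comparison maps $\alpha_{P_\lambda}^{\widetilde P_\lambda}\leftrightarrow\alpha_P^{\widetilde P}$, and does so compatibly with the smooth/bundle structure of $\mathcal{E}$. This last requirement is precisely where the interplay between $P\mapsto\lambda^{-2}P$ and the local Hadamard parametrix (the rescaling of $\ell_H$ and the behaviour of $[W_P]$) is indispensable, and it is the only point that is not a formal consequence of $E_\lambda=\lambda^2E$.
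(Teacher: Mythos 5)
Your Step 1 and the bookkeeping in Step 3 match the paper's argument, but Step 2 contains a genuine error: the lift you choose is not the one the proposition is about, and within the conventions of the paper it fails to be an algebra homomorphism. The paper's lift is simply $\widehat{\mathrm{s}}_\lambda(P,F):=(P_\lambda,F)$, i.e.\ the \emph{identity} on the functional, with no rescaling of the configuration. The reason this works is that the contraction $\langle F^{(n)},P^{\otimes n}G^{(n)}\rangle$ is not a bare distributional pairing in which $P_\lambda=\lambda^{-2}P$ enters as a scalar multiple: by Remark \ref{Remark: action of parametrix on top densities} the extension of $P$ to density-valued test sections is dressed with $\mu_\gamma$ and $\ast_\gamma$, and under $\gamma\to\lambda^{-2}\gamma$ the scaling $\ast_\gamma\to\lambda^2\ast_\gamma$ exactly absorbs the factor $\lambda^{-2}$ in $P_\lambda$ (equivalently, the integral kernel satisfies $P_\lambda^{bc}(x,x')=P^{bc}(x,x')$, cf.\ Remark \ref{Remark: different local representation of a parametrix under scaling}). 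Hence $\langle\,\cdot\,,P_\lambda^{\otimes n}\,\cdot\,\rangle_{b_\lambda}=\langle\,\cdot\,,P^{\otimes n}\,\cdot\,\rangle_{b}$ already, and your additional substitution $\varphi\mapsto\lambda^{-1}\varphi$ overcounts: it introduces an uncompensated factor $\lambda^{-2n}$ in the $n$-th term of $\widehat{\mathrm{s}}_\lambda F\cdot_P\widehat{\mathrm{s}}_\lambda G$ relative to $\widehat{\mathrm{s}}_\lambda(F\cdot_{P_\lambda}G)$, and likewise spoils the intertwining of $\alpha_{P_\lambda}^{\widetilde P_\lambda}$ with $\alpha_P^{\widetilde P}$. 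Your claim that the exponent $-1$ is ``forced by compatibility with $\mathrm{s}_\lambda$'' presupposes a background-independent pairing that the paper does not use; it is also inconsistent with the engineering dimension $\mathrm{d}_\varphi=0$ in \eqref{Equation: engineer dimension of psi,varphi,g} and with $S_\lambda\Phi=\Phi$ in Example \ref{Exam: scaling}, both of which require the field configuration not to be rescaled.

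A secondary error: you assert $[W_{\lambda^{-2}P}]=\lambda^{-2}[W_P]$, whereas Proposition \ref{Proposition: coincinding point limit of scaled and non-scaled Hadamard parametrix} and Remark \ref{Remark: different local representation of a parametrix under scaling} give the additive logarithmic shift $[W_{P,\lambda}]^{bc}=[W_P]^{bc}-2g^{bc}\log\lambda$, coming from $\sigma\to\lambda^{-2}\sigma$ inside the logarithm of the Hadamard singular part. This part of your argument is not actually needed for the proposition (the Hadamard representation enters later, for Wick powers), but as stated it is wrong. To repair the proof, replace your lift by the identity on fibres and carry the scaling of the Hodge operator through the pairing, as in \eqref{Equation: functional derivative of scaling map} and the displayed chain of equalities in the paper's proof.
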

\begin{proof}
	The first assertion is a direct consequence of the defining properties \eqref{Equation: defining property of parametrix} for $P\in\operatorname{Par}[N;b]$ and of the behaviour under the scaling $\gamma\to\lambda^{-2}\gamma$ of the operator $E$ defined in \eqref{Equation: definition of the elliptic operator}, that is $E\to\lambda^2 E$.
	The associated map $\widehat{\mathrm{s}}_\lambda\colon\mathcal{E}[N;b_\lambda]\to\mathcal{E}[N;b]$ is defined by $\widehat{\mathrm{s}}_\lambda(P,F):=(P_\lambda,F)$. Notice that this guarantees that $\varsigma_\lambda\colon\Gamma_{\textrm{eq}}[N;b_\lambda]\to\Gamma_{\textrm{eq}}[N;b]$ is well-defined and it satisfies condition \eqref{Equation: properties of the scaling map}.
	\\
	It remains to be shown that the map $\varsigma_\lambda$ defined in \eqref{Equation: scaling map} is a $*$-isomorphism between $\mathcal{A}[N;b_\lambda]$ and $\mathcal{A}[N;b]$.
	For that it is enough to show that $\varsigma_\lambda(F\cdot G)=\varsigma_\lambda F\cdot \varsigma_\lambda G$ for all $F,G\in\mathcal{A}[N;b_\lambda]$.
	Let $P\in\operatorname{Par}[N;b]$. A direct computation shows that
	\begin{align}\label{Equation: functional derivative of scaling map}
		(\varsigma_\lambda F)[P]^{(n)}=\widehat{\mathrm{s}}_\lambda F[P_\lambda]^{(n)}\,.
	\end{align}
	Moreover, we have that
	\begin{align}
		\big\langle(\varsigma_\lambda F)[P]^{(n)},P^{\otimes n}(\varsigma_\lambda G)[P]^{(n)}\big\rangle
		=\widehat{\mathrm{s}}_\lambda\big\langle F^{(n)}[P_\lambda],P^{\otimes n}G^{(n)}[P_\lambda]\big\rangle
		=\widehat{\mathrm{s}}_\lambda\big\langle F^{(n)}[P_\lambda],P_\lambda^{\otimes n}G^{(n)}[P_\lambda]\big\rangle\,,
	\end{align}
	where in the first equality we used equation \eqref{Equation: functional derivative of scaling map} and the fact that $\widehat{\mathrm{s}}_\lambda$ commutes with the contraction with $P$, while the second equality follows from the scaling properties of the Hodge operator $\ast_\gamma\to\lambda^2\ast_\gamma$ -- \textit{cf.} remark \ref{Remark: action of parametrix on top densities}.
	By inserting these results in the equations (\ref{Equation: definition of Pw-product on fiber algebra}-\ref{Equation: product and involution on the locally covariant algebra of interest}) for $\cdot$ the equality $\varsigma_\lambda(F\cdot G)[P]=[\varsigma_\lambda(F)\cdot\varsigma_\lambda(G)][P]$ follows.
\end{proof}
\begin{remark}\label{Remark: different local representation of a parametrix under scaling}
	With reference to Remark \ref{Remark: local Hadamard representation of parametrices}, we compare the local Hadamard expansion of the integral kernels $P^{bc}(x,x'),P_\lambda^{bc}(x,x')$ of the parametrices $P,P_\lambda=\lambda^{-2}P$.
	Notice that, although $P_\lambda=\lambda^{-2}P$, at the level of integral kernels it holds $P_\lambda^{bc}(x,x')=P^{bc}(x,x')$ because of the presence of the different volume forms $\mu_\gamma,\mu_{\gamma_{\lambda}}=\lambda^2\mu_\gamma$ -- \textit{cf.}, Remark \ref{Remark: action of parametrix on top densities}.
	Yet the Hadamard parametrix $H,H_\lambda$, appearing in equation \eqref{Equation: local expression for the parametrix}, and the associated smooth remainders $W_P,W_{P_\lambda}$ do change. More precisely, under the scaling $\gamma\to\lambda^{-2}\gamma$, in any geodesic neighbourhood $\mathcal{O}\subset\Sigma$, the singular part of the parametrix transforms as
	\begin{align*}
		H^{ab}(x,x^\prime)\to H^{ab}_\lambda(x,x^\prime)=V_\lambda^{ab}(x,x^\prime)\log\frac{\sigma_\lambda(x,x^\prime)}{\ell_H^2}\,,
	\end{align*}
	where $V_{\lambda}\in\mathrm{S}\Gamma(\psi^*TM^{\boxtimes 2})$. As a consequence, whenever we choose a parametrix $P$, which decomposes as $P^{ab}=H^{ab}+W^{ab}$, the counterpart associated with the rescaled Hadamard parametrix $H_\lambda$ reads
	\begin{equation}\label{Eq:rescaled_parametrix}
		P^{ab}(x,x^\prime)=H_\lambda^{ab}(x,x^\prime)+W_\lambda^{ab}(x,x^\prime)\,,
	\end{equation}
	As already highlighted in Remark \ref{Remark: local Hadamard representation of parametrices}, one can consider the coinciding point limit $x\to x^\prime$ to construct $[W_P]\in\Gamma(\mathrm{S}^{\otimes 2}\psi^*TM)$.
	On account of proposition \ref{Proposition: coincinding point limit of scaled and non-scaled Hadamard parametrix} of Appendix \ref{Appendix: Hadamard expansion for the parametrix of E}, under scaling the global section $[W_P]$ transforms as $[W_{P,\lambda}]^{bc}=[W_P]^{bc}-2g^{bc}\log\lambda$.
\end{remark}

Collecting definition \eqref{Definition: locally covariant theory of interest} and propositions \ref{Prop:alpha_map}-\ref{Proposition: A is a functor}-\ref{Proposition: scaling map} we have the following result which concludes the construction of an Euclidean locally covariant theory as per Definition \ref{Definition: locally covariant theory}.
\begin{proposition}\label{Prop: ELCFT}
	The functor $\mathcal{A}\colon\mathsf{Bkg}\to\mathsf{Alg}$ identifies an Euclidean locally covariant theory as per Definition \ref{Definition: locally covariant theory}.
\end{proposition}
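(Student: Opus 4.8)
The statement merely assembles the results obtained so far, so the plan is to check, one by one, the two requirements of Definition~\ref{Definition: locally covariant theory}: that $\mathcal{A}$ is a covariant functor $\mathsf{Bkg}\to\mathsf{Alg}$, and that it obeys the scaling hypothesis. The first is exactly the content of Proposition~\ref{Proposition: A is a functor}. For the scaling hypothesis, I would first observe that, since we have specialised to $\dim\Sigma=2$, the action \eqref{Equation: definition of scaling in BkgG} on objects collapses to $(N;b)\mapsto(\Sigma,M;\psi,\lambda^{-2}\gamma,g)$ — the $g$-component is unaffected because $\mathrm{d}_g=D^\prime-2=0$ — which is precisely the transformation analysed in Proposition~\ref{Proposition: scaling map}; moreover $\rho_\lambda$ is by definition the identity on morphisms, so $\mathcal{A}_\lambda:=\mathcal{A}\circ\rho_\lambda$ is a well-defined covariant functor for every $\lambda>0$.

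Next I would produce the natural isomorphisms $\varsigma_{\lambda,\mu}$. Proposition~\ref{Proposition: scaling map} already furnishes, for every object, the $*$-isomorphism $\varsigma_\lambda[N;b]=\varsigma_{1,\lambda}[N;b]\colon\mathcal{A}_\lambda[N;b]\to\mathcal{A}[N;b]$, built from the affine rescaling $\mathrm{s}_\lambda P=\lambda^{-2}P$ of parametrices together with its canonical lift $\widehat{\mathrm{s}}_\lambda$ to the bundle $\mathcal{E}[N;b]$; the general $\varsigma_{\lambda,\mu}$ is obtained mutatis mutandis, replacing $\mathrm{s}_\lambda$ by the rescaling of affine spaces of parametrices given by multiplication by the appropriate power of $\lambda/\mu$. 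That each component is a $*$-isomorphism, and that the cocycle relations \eqref{Equation: properties of the scaling map} (including $\varsigma_{\lambda,\lambda}=\operatorname{Id}$) hold, then follows exactly as in the proof of Proposition~\ref{Proposition: scaling map}: the rescalings compose multiplicatively, the lift leaves the underlying functional untouched, and the only metric-dependent datum entering the product, the Hodge operator, scales as $\ast_\gamma\to\lambda^2\ast_\gamma$ in a way already accounted for there.

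The only property not contained in the earlier statements is that each $\varsigma_{\lambda,\mu}$ is a \emph{natural} transformation, i.e.\ that for every $(\tau,t)\in\operatorname{Ar}(\mathsf{Bkg})$ from $(N;b)$ to $(\widetilde N;\widetilde b)$ one has $\mathcal{A}_\lambda[\tau,t]\circ\varsigma_{\lambda,\mu}[N;b]=\varsigma_{\lambda,\mu}[\widetilde N;\widetilde b]\circ\mathcal{A}_\mu[\tau,t]$. The verification is the same for all $\lambda,\mu$ as for $\varsigma_\lambda=\varsigma_{1,\lambda}$, so I would spell out the latter. Since $\rho_\lambda$ acts trivially on morphisms, $\mathcal{A}_\lambda[\tau,t]$ is $\mathcal{A}[\tau,t]$ read between the rescaled backgrounds, and $(\tau,t)\colon(N;b_\lambda)\to(\widetilde N;\widetilde b_\lambda)$ is still an admissible arrow, because an orientation-preserving isometric embedding of $(\Sigma,\gamma)$ into $(\widetilde\Sigma,\widetilde\gamma)$ remains one after rescaling both metrics by $\lambda^{-2}$ and the compatibility condition \eqref{Equation: compatibility condition for arrows in BkgG} is metric-independent. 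Then, using $\mathcal{A}[\tau,t]F=\Gamma[\tau,t]^\sharp\circ F\circ\operatorname{Par}[\tau,t]$ and $(\varsigma_\lambda F)(P,\varphi)=\widehat{\mathrm{s}}_\lambda F(\mathrm{s}_\lambda P,\varphi)$, the square commutes because (i) $\Gamma[\tau,t]$, $\Gamma_{\mathrm{c}}[\tau,t]$ and hence $\operatorname{Par}[\tau,t]$ depend only on $\tau,t$ and the bundle structure, so they are literally unchanged by the rescaling; (ii) $\operatorname{Par}[\tau,t]$ is linear, hence commutes with multiplication by $\lambda^{-2}$; and (iii) the lift $\widehat{\mathrm{s}}_\lambda$ leaves the underlying functional unchanged, merely relabelling the parametrix fibre over which it sits, so it commutes with $\Gamma[\tau,t]^\sharp$. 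I expect this essentially bookkeeping verification to be the only genuine — and mild — obstacle; combining it with Propositions~\ref{Prop:alpha_map}, \ref{Proposition: A is a functor} and \ref{Proposition: scaling map} concludes the proof.
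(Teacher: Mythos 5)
Your proposal is correct and follows essentially the same route as the paper, whose own ``proof'' consists of nothing more than collecting Definition \ref{Definition: locally covariant theory of interest} together with Propositions \ref{Prop:alpha_map}, \ref{Proposition: A is a functor} and \ref{Proposition: scaling map}. The one ingredient you rightly identify as not contained in those statements --- the naturality of $\varsigma_{\lambda,\mu}$ with respect to arrows $(\tau,t)$ --- is indeed left implicit in the paper, and your verification of it (that $(\tau,t)$ remains an admissible arrow after rescaling, that $\operatorname{Par}[\tau,t]$ is metric-independent and linear, hence commutes with $\mathrm{s}_\lambda$, and that $\widehat{\mathrm{s}}_\lambda$ commutes with $\Gamma[\tau,t]^\sharp$) is sound.
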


\begin{remark}\label{Remark: on the general case of dimension D greater than two}
	Notice that the whole construction of the functor $\mathcal{A}$ profits from a simplification due to the dimensional restriction $D=2$.
	Indeed, as pointed out in remark \ref{Remark: on the log-singularity of the parametrices}, for $D>2$ the singularity behaviour of the parametrices $P$ would spoil the possibility to define the product $\cdot$ as per Definition \ref{Definition: locally covariant theory of interest} on the whole set of local polynomials $\mathcal{P}[N;b]$.
	In this latter case the product would have been defined on the subset $\mathcal{P}_{\textrm{reg}}[N;b]\subset\mathcal{P}[N;b]$ made of those elements $F\in\mathcal{P}[N;b]$ with smooth functional derivatives at any order.
	An extension procedure should be applied to define the product $\cdot$ among local polynomial functionals in the same spirit of \cite{Brunetti-Duetsch-Fredenhagen-09,Hollands-Wald-02,Keller-09}.
	We will refrain from describing such a procedure here, see however \cite{Dappiaggi-Drago-Rinaldi-19}.
\end{remark}

\noindent For the rest of this paper we will consider the local covariant theory $\mathcal{A}$ introduced in Definition \ref{Definition: locally covariant theory of interest}.

\subsection{Local covariance of observables and of quantum fields}\label{Sec: LCO}
In this section we will be especially interested in identifying a distinguished class of elements of $\mathcal{A}[N;b]$ yielding notion of locally covariant observable.
For future convenience we first introduce the following functors.
\begin{Definition}\label{Definition: action of the functor Gammak on Obj and Arr}
	Let $(N;b)=(\Sigma,M;\psi,\gamma,g),(\widetilde{N},\widetilde{b})=(\widetilde{\Sigma},\widetilde{M};\widetilde{\psi},\widetilde{\gamma},\widetilde{g})\in\operatorname{Obj}(\mathsf{Bkg})$ and let $(\tau,t)\in\operatorname{Ar}(\mathsf{Bkg})$ be an arrow from $(N;b)$ to $(\widetilde{N};\widetilde{b})$. We call  $C^\infty_{\mathrm{c}}\colon\mathsf{Bkg}\to\mathsf{Alg}$ the covariant functor
	\begin{align}\label{Equation: definition of smooth and compactly supported functions functor}
		C^\infty_{\mathrm{c}}[N;b]:=C^\infty_{\mathrm{c}}(\Sigma)\qquad C^\infty_{\mathrm{c}}[\tau,t]f:=f\circ\tau^{-1}
		\quad\forall f\in C^\infty_{\mathrm{c}}[N;b]\,.
	\end{align}
\end{Definition}	

Notice that $C^\infty_{\mathrm{c}}[\tau,t]f$ is well-defined on account of the support properties of $f$, in particular $C^\infty_{\mathrm{c}}[\tau,t]f(x)=0$ whenever $x\notin\tau(\Sigma)$. Let $\wedge^{\mathrm{top}}T^*\Sigma$ denotes the bundle of densities on $\Sigma$. Then we can 

\begin{Definition}\label{Def: new_functors}
	Let $k\in\mathbb{N}$.
	We call $\mathrm{S}\Gamma_{\mathrm{c}}^{k}\colon\mathsf{Bkg}\to\mathsf{Alg}$, the covariant functor such that, for all $(N;b)\in\operatorname{Obj}(\mathsf{Bkg})$
	\begin{align}\label{Equation: definition of the functor Gammak with compact support}
		\mathrm{S}\Gamma_{\mathrm{c}}^{k}[N;b]:=
		\bigoplus_{m=0}^\infty \mathrm{S}\Gamma_{\mathrm{c}}((\wedge^{\mathrm{top}}T^*\Sigma\otimes \mathrm{S}^{\otimes k}\psi^*T^*M)^{\boxtimes m})\,,\quad \mathrm{S}\Gamma_{\mathrm{c}}((\wedge^{\mathrm{top}}T^*\Sigma\otimes \mathrm{S}^{\otimes k}\psi^*T^*M)^{\boxtimes 0})\equiv\mathbb{C}
	\end{align}
	where $\mathrm{S}\Gamma_{\mathrm{c}}((\wedge^{\mathrm{top}}T^*\Sigma\otimes \mathrm{S}^{\otimes k}\psi^*T^*M)^{\boxtimes m})$ denotes the compactly supported symmetric sections of the $m$-th exterior tensor product of the vector bundle $\wedge^{\mathrm{top}}T^*\Sigma\otimes \mathrm{S}^{\otimes k}\psi^*T^*M$ -- \textit{cf.} Remark \ref{Remark: on exterior tensor product and symmetrix sections}. In addition, for all $(\tau,t)\in\operatorname{Arr}(\mathsf{Bkg})$, 
 $$\mathrm{S}\Gamma_{\mathrm{c}}^{k}[\tau,t]\colon\Gamma^k_{\mathrm{c}}[N;b]\to\Gamma^k_{\mathrm{c}}[\widetilde{N};\widetilde{b}],$$
 where $S\Gamma^k_{\mathrm{c}}[\tau,t]\omega:=\widehat{\tau}_{t,\mathrm{c}}^{m,k}\circ\omega\circ\tau^{-1}$ for all $\omega\in\mathrm{S}\Gamma_{\mathrm{c}}^k((\wedge^{\mathrm{top}}T^*\Sigma\otimes \mathrm{S}^{\otimes k}\psi^*T^*M)^{\boxtimes m})$. Here $\widehat{\tau}_{t,\mathrm{c}}\colon\psi^*T^*M\to\widetilde{\psi}^*T^*\widetilde{M}|_{\tau(\Sigma)}$ is an injective morphism of vector bundles-- \textit{cf.} proof of proposition \ref{Proposition: A is a functor}, which extends to a map $\widehat{\tau}_{t,\mathrm{c}}^{m,k}\colon (\wedge^{\mathrm{top}}T^*\Sigma\otimes \mathrm{S}^{\otimes k}\psi^*T^*M)^{\boxtimes m}\to (\wedge^{\mathrm{top}}T^*\widetilde{\Sigma}\otimes \mathrm{S}^{\otimes k}\widetilde{\psi}^*T^*\widetilde{M})^{\boxtimes m}|_{\tau(\Sigma)}$,  $m\in\mathbb{N}\cup\{0\}$, by considering a suitable symmetrized tensor product and pull-back for top-densities.
	 
\end{Definition}

\noindent Observe that similarly we define the contravariant functor $\Gamma^k\colon\mathsf{Bkg}\to\mathsf{Alg}$ as
	\begin{align}\label{Equation: definition of the functor Gammak}
		\mathrm{S}\Gamma^k[N;b]:=\bigoplus_{m=0}^\infty \mathrm{S}\Gamma_{\mathrm{c}}((\mathrm{S}^{\otimes k}\psi^*TM)^{\boxtimes m})\,,
	\end{align}
	The associated arrow $\Gamma^k[\tau,t]\colon\Gamma[\widetilde{N};\widetilde{b}]\to\Gamma^k[N;b]$ is obtained by considering the injective morphism of vector bundle $\widehat{\tau}_{t}\colon\psi^*TM\to\widetilde{\psi}^*T\widetilde{M}|_{\tau(\Sigma)}$ -- \textit{cf.} proof of Proposition \ref{Proposition: A is a functor} -- and its extension to $\widehat{\tau}_{t}^{m,k}\colon (\mathrm{S}^{\otimes k}\psi^*TM)^{\boxtimes m}\to (\mathrm{S}^{\otimes k}\widetilde{\psi}^*T\widetilde{M})^{\boxtimes m}|_{\tau(\Sigma)}$ for all $m\in\mathbb{N}\cup\{0\}$.
	The arrow $\mathrm{S}\Gamma^k[\tau,t]\colon\mathrm{S}\Gamma^k[\widetilde{N};\widetilde{b}]\to\mathrm{S}\Gamma^k[N;b]$ is then defined as $\mathrm{S}\Gamma^k[\tau,t]\widetilde{C}:=(\widehat{\tau}_{t}^{m,k})^{-1}\circ\widetilde{C}\circ\tau$ for all $\widetilde{C}\in\mathrm{S}\Gamma(( \mathrm{S}^{\otimes k}\widetilde{\psi}^*T\widetilde{M})^{\boxtimes m})$.

\begin{remark}\label{Rem: Gammak}
	Since, from time to time, it is convenient to focus on a fixed $m$-th symmetric, exterior tensor product of $\wedge^{\mathrm{top}}T^*\Sigma\otimes \mathrm{S}^{\otimes k}\psi^*T^*M$, we can work also with the covariant functors $\mathrm{S}\Gamma_{\mathrm{c}}^{k,m}\colon\mathsf{Bkg}\to\mathsf{Vec}$, $m\in\mathbb{N}\cup\{0\}$, such that for all $(N;b)\in\operatorname{Obj}(\mathsf{Bkg})$ and for all $(\tau,t)\in\operatorname{Ar}(\mathsf{Bkg})$,
	\begin{align}\label{Eq: Gammak}
		\mathrm{S}\Gamma_{\mathrm{c}}^{k,m}[N;b]:= \mathrm{S}\Gamma_{\mathrm{c}}((\wedge^{\mathrm{top}}T^*\Sigma\otimes \mathrm{S}^{\otimes k}\psi^*T^*M)^{\boxtimes m})\,,\qquad
		\mathrm{S}\Gamma_{\mathrm{c}}^{k,m}[\tau,t]:=\mathrm{S}\Gamma_{\mathrm{c}}^{k}[\tau,t]\big|_{\mathrm{S}\Gamma_{\mathrm{c}}^{k,m}[N;b]}\,.
	\end{align}
\end{remark}

In the following we will be mainly interested in functionals which are constructed out of compactly supported sections of an arbitrary $(\wedge^{\mathrm{top}}T^*\Sigma\otimes \mathrm{S}^{\otimes k}\psi^*T^*M)^{\boxtimes m}$.
We stress that our analysis is tied to the functor $\mathcal{A}$ introduced in Definition \ref{Definition: locally covariant theory of interest}, though the procedure can be extended to any Euclidean locally covariant theory as per Definition \ref{Definition: locally covariant theory}.
As a first step we need to relate the functors $\mathcal{A}$ and $\mathrm{S}\Gamma^{k}_{\mathrm{c}}$.
Inspired by \cite{Brunetti:2001dx}, we introduce
\begin{Definition}\label{Def:nat_transf}
	Let $k\in\mathbb{N}$ and let $\mathcal{A}\colon\mathsf{Bkg}\to\mathsf{Alg}$ and $\mathrm{S}\Gamma_{\mathrm{c}}^k\colon\mathsf{Bkg}\to\mathsf{Alg}$ be the functors as per Definition \ref{Definition: locally covariant theory of interest} and \ref{Def: new_functors} respectively.
	We call {\em locally covariant observable of degree $k$} a natural transformation $\mathcal{O}_k\colon\mathrm{S}\Gamma_{\mathrm{c}}^k\to\mathcal{A}$ \textit{i.e.}, for every $(N;b)\in\mathrm{Obj}(\mathsf{Bkg})$,  $\mathcal{O}_k[N;b]\colon\mathrm{S}\Gamma_{\mathrm{c}}^k[N;b]\to\mathcal{A}[N;b]$ is an arrow in $\mathsf{Alg}$ such that, for every $(\tau,t)\in\operatorname{Ar}(\mathsf{Bkg})$ mapping $(N;b)$ to $(\widetilde{N};\widetilde{b})$, it holds that 
	\begin{equation}\label{eq: comp_nat_trans}
	\mathcal{O}_k[\widetilde{N};\widetilde{b}]\circ\mathrm{S}\Gamma_{\mathrm{c}}^{k}[\tau,t]=\mathcal{A}[\tau,t]\circ\mathcal{O}_k[N;b]\,.
	\end{equation}
\end{Definition}

For concreteness, we underline that, for all $(N;b)\in\operatorname{Obj}(\mathsf{Bkg})$, $\mathcal{O}_k[N;b]$ can be read as an algebra-valued distribution, that is, for all $P\in\operatorname{Par}[N;b]$, for all $\varphi\in\Gamma(\psi^*TM)$ and for all $m\in\mathbb{N}$,
\begin{align}\label{eq:notation}
	\mathcal{O}_k[N;b](\bullet,P,\varphi)\colon\mathrm{S}\Gamma_{\mathrm{c}}((\wedge^{\mathrm{top}}T^*\Sigma\otimes \mathrm{S}^{\otimes k}\psi^*T^*M)^{\boxtimes m})\ni\omega_m\mapsto
	\mathcal{O}_k[N;b](\omega_m,P,\varphi)\in\mathbb{C}\,,
\end{align}
defines a distribution $\mathcal{O}_k[N;b](\bullet,P,\varphi)\in\mathrm{S}\Gamma_{\mathrm{c}}((\wedge^{\mathrm{top}}T^*\Sigma\otimes \mathrm{S}^{\otimes k}\psi^*T^*M)^{\boxtimes m})'$.
Henceforth we will follow \eqref{eq:notation} writing for notational simplicity $\mathcal{O}_k[N;b](\omega,P,\varphi)$ in place $\big[\mathcal{O}_k[N;b](\omega)\big](P,\varphi)$.

\begin{remark}
	The previous definition can be generalized by substituting the functor $\mathrm{S}\Gamma_{\mathrm{c}}^{k}$ with an arbitrary functor $F\colon\mathsf{Bkg}\to\mathsf{Alg}$.
	In this case we still call local and covariant observable any natural transformation $\mathcal{O}\colon F\to\mathcal{A}$ -- \textit{cf.} equation \eqref{Equation: locally covariant interacting Lagrangean density for a fixed family of Wick powers} in section \ref{Section: Renormalization and Ricci flow}.
\end{remark}

\begin{remark}\label{Remark: on the alg-morphism property of locally covariant observables}
	Notice that, since $\mathcal{O}_k[N;b]\in\operatorname{Ar}(\mathsf{Alg})$, for all $m\in\mathbb{N}$ and $\omega_1\otimes\ldots\otimes\omega_m\in\mathrm{S}\Gamma_{\mathrm{c}}^{k,m}[N;b]$ with $\omega_j\in\mathrm{S}\Gamma_{\mathrm{c}}^{k,1}[N;b]$ for all $j=1,\ldots,m$ it holds
	\begin{align*}
		\mathcal{O}_k[N;b](\omega_1\otimes\dots\otimes\omega_m)=\mathcal{O}_k[N;b](\omega_1)\cdots\mathcal{O}_k[N;b](\omega_m)\,.
	\end{align*}
	This property implies that a locally covariant observable as per Definition \ref{Def:nat_transf} is known once it is known its value on degree $m\in\{0,1\}$.
	In this sense, a locally covariant observable consists of a locally covariant polynomial in the field -- $\mathcal{O}_k[N;b]$ at degree $m=1$ -- together with its powers according to the product $\cdot$ of $\mathcal{A}[N;b]$ -- see Definition \ref{Definition: locally covariant theory of interest}.
	As already stressed in Remark \ref{Remark: on the general case of dimension D greater than two} these observations depend crucially on the dimensional restriction $D=2$.
	For generic $D$, the identification of a locally covariant observable $\mathcal{O}_k$ can be interpreted as: (a) the identification of a local and covariant polynomial functional in the field configuration $\varphi$, namely of $\mathcal{O}_k[N;b]$ at degree $m=1$; (b) the identification of an extension of the product $\cdot$, which allows to define the product between $\mathcal{O}_k[N;b]$ with itself.
\end{remark}

\begin{Example}\label{Ex: loc. cov. obs.}
	Consider $(N;b)\in\operatorname{Obj}(\mathsf{Bkg})$ and let $\Phi[N;b]\colon\mathrm{S}\Gamma_{\mathrm{c}}^1[N;b]\to\mathcal{A}[N;b]$ be defined as follows.
	If $\omega_1\in\mathrm{S}\Gamma_{\mathrm{c}}^{1,1}[N;b]$ then $\Phi[N;b](\omega_1)$ is the linear functional such that, for all $(P,\varphi)\in\operatorname{Par}[N;b]\times\Gamma(\psi^*TM)$,
	\begin{align}\label{Equation: definition of locally covariant Phi}
		\Phi[N;b](\omega_1,P,\varphi):=\int_\Sigma\langle\omega_1,\varphi\rangle\,.
	\end{align}
	As pointed out in Remark \ref{Remark: on the alg-morphism property of locally covariant observables}, this fixes completely $\Phi[N;b]$ on the whole $\mathrm{S}\Gamma_{\mathrm{c}}^{1,1}[N;b]$. 
	Let now $(\tau,t)\in\operatorname{Ar}(\mathsf{Bkg})$ be a mapping from $(N;b)$ to $(\widetilde{N},\widetilde{b})\in\mathrm{Obj}(\mathsf{Bkg})$.
	To conclude that $\Phi$ is a natural transformation, we need to show that $[\mathcal{A}(\tau,t)]\circ\Phi[N;b]=\Phi[\widetilde{N},\widetilde{b}]\circ\mathrm{S}\Gamma_{\mathrm{c}}^1(\tau,t)$.
	This is a direct consequence of the definition, as one can readily infer, since, for every $\omega_1\in\mathrm{S}\Gamma_{\mathrm{c}}^{1,1}[N;b]$ and $\widetilde{P}\in\operatorname{Par}(\widetilde{N};\widetilde{b})$, $\widetilde{\varphi}\in\Gamma(\widetilde{\psi}^*T\widetilde{M})$
	\begin{align*}
		\bigg[\mathcal{A}[\tau,t]\Phi[N;b](\omega_1)\bigg](\widetilde{P},\widetilde{\varphi})&=
		\Phi[N;b](\omega_1,\operatorname{Par}[\tau,t]\widetilde{P},\mathrm{S}\Gamma^{1}[\tau,t]\widetilde{\varphi})=
		\int_{\widetilde{\Sigma}}\langle\widetilde{\varphi},\mathrm{S}\Gamma_{\mathrm{c}}^{1}[\tau,t]\omega_1\rangle\\&=
		\Phi[\widetilde{N},\widetilde{b}](\mathrm{S}\Gamma_{\mathrm{c}}^1(\tau,t)\omega_1,\widetilde{P},\widetilde{\varphi})\,,
	\end{align*}
	where $\operatorname{Par}[\tau,t]\widetilde{P}\in\operatorname{Par}[\widetilde{N};\widetilde{b}]$ has been defined in the proof of proposition \ref{Proposition: A is a functor}.
In view of its definition and of its properties $\Phi[N;b]$ identifies a locally covariant observable of degree $1$ to which we refer as a {\em locally covariant quantum field}.	
\end{Example}

\begin{Example}\label{Ex: Phi^2}
	In order to define powers of a locally covariant quantum field, which could be interpreted also as locally covariant observables, the starting point is Remark \ref{Rem: Wick_powers_fixed_P}.
	Here a candidate for a well-defined Wick ordered, squared field is introduced, but the definition depends on the choice of a parametrix $P$, a procedure which is intrinsically non locally covariant.
	In order to bypass this hurdle, constructing at the same time an equivariant section of $\mathcal{E}[N;b]$, we need to rely on the Hadamard representation of any parametrix $P$ as in Equation \eqref{Equation: local expression for the parametrix}.
	As outlined in Example \ref{Remark: local Hadamard representation of parametrices}, we can use such representation to identify from each parametrix $P$, $[W_P]\in\Gamma(\mathrm{S}^{\otimes 2}\psi^*TM)$.
	Bearing in mind this information, consider $(N;b)\in\operatorname{Obj}(\mathsf{Bkg})$ and let $\Phi^2[N;b]\colon\mathrm{S}\Gamma^{2}_{\mathrm{c}}[N;b]\to\mathcal{A}[N;b]$ be defined as follows.
	If $\omega_1\in\mathrm{S}\Gamma_{\mathrm{c}}^{2,1}[N;b]$ then $\Phi^2[N;b](\omega_1)$ is the linear functional such that, for all $(P,\varphi)\in\operatorname{Par}[N;b]\times\Gamma(\psi^*TM)$,
	\begin{align}\label{eq: Equiv. Phi^2}
		\Phi^2[N;b](\omega_1,P,\varphi):=\int_\Sigma\langle\varphi^{[\otimes]2}+[W_P],\omega_1\rangle\,,
	\end{align}
	where $\varphi^{[\otimes]2}\in\Gamma(\mathrm{S}^{\otimes 2}\psi^*TM)$ -- \textit{cf.} remark \ref{Remark: on exterior tensor product and symmetrix sections}.
	In order to realize that $\Phi^2$ identifies a locally covariant observable of degree $1$, it suffices to proceed as in Example \ref{Ex: loc. cov. obs.} and thus we shall not dwell into the details. It is important to observe that \eqref{eq: Equiv. Phi^2} is a possible realization of a Wick power of $\Phi$, but it is not the unique one. We will discuss this issue in detail in the next section.
\end{Example}

For later convenience we introduce a notion which intertwines locally covariant observables with scaling yielding as a by-product an abstract  notion of engineer dimension which matches the one discussed at the end of Section \ref{Section: Geometrical setting}.

\begin{Definition}\label{Definition: scaled locally covariant observable}
	Let $k\in\mathbb{N}$, and let $\mathcal{O}_k\colon\mathrm{S}\Gamma^{k}\to\mathcal{A}$ be a locally covariant observable of degree $k$ as per Definition \ref{Def:nat_transf}.
	For any $[N;b]\in\mathrm{Obj}(\mathsf{Bkg})$ we call {\em rescaled locally covariant observable} at scale $\lambda>0$, $S_\lambda\mathcal{O}_k$ the locally covariant observable defined by
	\begin{align}\label{Equation: scaled locally covariant observable}
		\big(S_\lambda\mathcal{O}_k\big)[N;b]:=\mathcal{O}_k[N;b_\lambda]\,,
	\end{align}
	where $[N;b_\lambda]$ is defined in \eqref{Equation: definition of scaling in BkgG}.
	In addition we say that $\mathcal{O}_k[N;b]$ has engineering dimension $\mathrm{d}_{\mathcal{O}_k}\in\mathbb{R}$ if
	\begin{align}\label{Equation: defining equation for engineer dimension}
		(S_\lambda\mathcal{O}_k)[N;b](\omega_m)=\lambda^{\mathrm{d}_{\mathcal{O}_k}m}\mathcal{O}_k[N;b](\omega_m)\,,	
	\end{align}
	holds for all $[N;b]\in\operatorname{Obj}(\mathsf{Bkg})$ and $\omega_m\in\mathrm{S}\Gamma_{\mathrm{c}}^{k,m}[N;b]$.
	On the contrary we say that $\mathcal{O}_k$ scales {\em almost homogeneously} with dimension $\kappa\in\mathbb{R}$ and order $\ell\in\mathbb{N}$ if
	\begin{align}\label{Eq: scaling alm_hom}
		S_\lambda\mathcal{O}_k[N;b](\omega_m)=\lambda^{\kappa m}\mathcal{O}_k[N;b](\omega_m)+
		\lambda^{\kappa m}\sum_{j=0}^\ell\log(\lambda)^j\mathcal{O}_j[N;b](\omega_m)\,,
	\end{align}
	holds for all $[N;b]\in\operatorname{Obj}(\mathsf{Bkg})$ and $\omega_m\in\mathrm{S}\Gamma_{\mathrm{c}}^{k,m}[N;b]$ where, for all $j\in\{0,\ldots,\ell\}$, $\mathcal{O}_j$ is a locally covariant observables which scales almost homogeneously with degree $\kappa$ and order $\ell-j$.
\end{Definition}
\begin{remark}\label{Remark: scaled locally covariant observable in D-dim}
	We stress, that while our analysis could be slavishly applied to models for which the dimension of $\Sigma$ is arbitrary, Definition \ref{Definition: scaled locally covariant observable} relies on $\dim\Sigma=2$. In the general case, \eqref{Equation: scaled locally covariant observable} should be modified as follows
	\begin{align*}
    \big(S_\lambda\mathcal{O}\big)[N;b]:=\varsigma_\lambda[N;b]\circ\mathcal{O}[N;b_\lambda]\,,
    \end{align*}
    where $\dim\Sigma=D$ while $\varsigma_{\lambda}[N;b]\colon\mathcal{A}[N;b_\lambda]\to\mathcal{A}[N;b]$ is the scaling transformation introduced in Definition \ref{Definition: locally covariant theory}.
\end{remark}

\noindent The engineering dimension can be computed explicitly in many notable instances:

\begin{Example}\label{Exam: scaling}
	Consider the locally covariant observables defined in Example \ref{Ex: loc. cov. obs.} via the natural transformations $\Phi$.
	Putting together \eqref{Equation: definition of locally covariant Phi} and \eqref{Equation: definition of scaling in BkgG}, one can compute that $S_\lambda\Phi=\Phi$ that is the engineering dimension of $\Phi$ is $0$.
	
	At the same time, if we consider the locally covariant observable $\Phi^2$ as per Example \ref{Ex: Phi^2}, in order to evaluate its behaviour under scaling we need to take into account Remark \ref{Remark: different local representation of a parametrix under scaling} according to which $[W_{P,\lambda}]^{bc}=[W_P]^{bc}-2g^{bc}\log\lambda$. Hence, for all $\lambda>0$
	$$(S_\lambda\Phi^2)=\Phi^2+\mathcal{V}\log\lambda,$$
	where $\mathcal{V}$ is the locally covariant observable of degree $0$ such that 
	$$\mathcal{V}[N;b](\omega_1,P,\varphi)=-2\int\limits_\Sigma \langle g^\sharp,\omega_1\rangle\,,\qquad\omega_1\in\mathrm{S}\Gamma_{\mathrm{c}}^{2,1}[N;b]\,.$$
	In other words $\Phi^2$ scales almost homogeneously with dimension $0$ and order $1$.
\end{Example}	

\subsection{Wick ordered powers of quantum fields}\label{Section: Wick powers and ambiguities}
Following our previous analysis, in this section we address the issue of Wick ordering in order to construct, for any $[N;b]\in\mathrm{Obj}(\mathsf{Bkg})$, well-defined algebra valued distributions, which can be read as locally covariant powers of the underlying, locally covariant quantum field $\Phi$ as the one introduced in Example \ref{Ex: loc. cov. obs.}. 

Although, in the Lorentzian framework, this is an overkilled topic starting from the seminal work \cite{Hollands-Wald-01}, here we will be mainly interested in the Euclidean setting and in vector-valued fields. For this reason we shall follow mainly the rationale used in \cite{Khavkine-Melati-Moretti-17}. In particular, tackling the problem of Wick ordering can be divided in two separate issues, the first concerning the existence of a well-defined ordering scheme, the second addressing the question of classifying the possible ambiguities in the construction of Wick ordered observables, while keeping track of local covariance. 

In the following we give an abstract definition of Wick ordered powers of a quantum field adapting to the case in hand \cite[Def. 5.2]{Khavkine-Melati-Moretti-17}.

\begin{Definition}\label{Definition: Wick power associated with a locally covariant observable}
	Let $\Phi$ be a locally covariant observable defined in Example \ref{Ex: loc. cov. obs.}.
	A family of Wick powers associated to $\Phi$ is a family of natural transformations $\Phi^\bullet=\{\Phi^k\}_{k\in\mathbb{N}}$ with $\Phi^k\colon\mathrm{S}\Gamma_{\mathrm{c}}^k\to\mathcal{A}$ such that it holds
	\begin{enumerate}
		\item
		For all $k\in\mathbb{N}\cup\{0\}$, $\Phi^k$ is a locally covariant observable which scales almost homogeneously with dimension $\kappa=0$.
		\item 
		If $k=1$, $\Phi^1=\Phi$ while, if $k=0$ $\Phi^0:=1_{\mathcal{A}}$, where for all $(N;b)\in\mathrm{Obj}(\mathsf{Bkg})$ and for all $z\in\mathbb{C}$, $1_{\mathcal{A}}[N;b](z):=z\,1_{\mathcal{A}[N;b]}$, the right hand side of this equality being the identity element of $\mathcal{A}[N;b]$.
		\item For all $k\in\mathbb{N}\cup\{0\}$, it holds, that, for all $(N;b)\in\operatorname{Obj}(\mathsf{Bkg})$, $\omega_1\in\mathrm{S}\Gamma_{\mathrm{c}}^{k,1}[N;b]$, $P\in\operatorname{Par}[N;b]$ and $\varphi_1,\varphi_2\in\Gamma(\psi^*TM)$,
		then, 
		\begin{align}\label{Equation: inductive condition on derivative for Wick powers}
			\big\langle\Phi^k[N;b](\omega_1,P)^{(1)}[\varphi_1],\varphi_2\big\rangle=
			k\,\Phi^{k-1}[N;b](\varphi_2\lrcorner\omega_1,P,\varphi_1)\,,
		\end{align}
		where $\varphi_2\lrcorner\omega_1\in\mathrm{S}\Gamma_{\mathrm{c}}^{k-1,1}[N;b]$ denotes the section which reads locally $\varphi_2^{a_1}(\omega_1)_{a_1\ldots a_k}$, while the superscript $(1)$ refers to the functional derivative as per Definition \ref{Definition: smooth local polynomial functionals}.
		\item 
		Let $d\in\mathbb{N}$ and let $(N;b_s)\in\operatorname{Obj}(\mathsf{Bkg})$ be such that $\{b_s=(\psi,\gamma_s,g_s)\}_{s\in\mathbb{R}^d}$ is a smooth, compactly supported $d$-dimensional family of variations of $b=(\psi,\gamma,g)$ as per Definition \ref{Definition: smooth compactly supported d-dimensional family of variations}.
		For all smooth family $\{P_s\}_{s\in\mathbb{R}^d}$ where $P_s\in\operatorname{Par}(N,b_s)$ for all $s\in\mathbb{R}^d$, let $\mathcal{U}_k\in\Gamma_{\mathrm{c}}(\pi_d^*\mathrm{S}^{\otimes k}\psi^*T^*M)'$ be the distribution on the pullback bundle $\pi_d^*\mathrm{S}^{\otimes k}\psi^*T^*M$ over the base space $\mathbb{R}^d\times\Sigma$ -- here $\pi_d\colon\mathbb{R}^d\times\Sigma\to\Sigma$ denotes the canonical projection -- defined by
		\begin{align}\label{Equation: distribution associated with Wick powers}
		\mathcal{U}_k(\chi\otimes\omega_1):=\int_{\mathbb{R}^d}\mathrm{d}s\,\Phi^k[N;b_s](\omega_1,P_s,0)\chi(s)\,,\qquad
		\omega_1\in\mathrm{S}\Gamma_{\mathrm{c}}^{k,1}[N;b]\,,\chi\in C^\infty_{\mathrm{c}}(\mathbb{R}^d)\,.
		\end{align}
		We require that, for all $k\in\mathbb{N}\cup\{0\}$.
		\begin{align}\label{Equation: regularity assumption on Wick powers}
		\textrm{WF}(\mathcal{U}_k)=\emptyset\,,
		\end{align}
		where $\textrm{WF}(\mathcal{U}_k)$ denotes the wavefront set of $\mathcal{U}_k$, \cite[Def. 8.1.2]{Hormander-83}.
	\end{enumerate}
\end{Definition}
\begin{remark}\label{Remark: on the regularity property}
	Notice that condition 4 in Definition \ref{Definition: Wick power associated with a locally covariant observable} exploits a smooth, compactly supported $d$-dimensional family of variations $(\gamma_s,g_s)$ of the metrics $\gamma$ and $g$ while the background configuration $\psi$ has been fixed.
	The choice of a smooth family of parametrices $\{P_s\}_{s\in\mathbb{R}^d}$ should be compared with the smooth class of states $\omega\circ\tau_s^{-1}$ introduced in \cite[Def. 5.2]{Khavkine-Melati-Moretti-17}.
	In particular $\{P_s\}$ is associated with a unique $P\in\mathrm{S}\Gamma_{\mathrm{c}}(\pi_d^*\psi^*T^*M^{\boxtimes 2})'$ -- \textit{cf.} Definition \ref{Definition: smooth compactly supported d-dimensional family of variations} in Appendix \ref{Appendix: Peetre-Slovak theorem}.
	The existence of the family $\{P_s\}_{s\in\mathbb{R}^d}$ is a consequence of the smoothness in the parameter $s\in\mathbb{R}^d$ of the elliptic operator $E_s$ associated to the background data $b_s$ -- \textit{cf.} equation 
	\eqref{Equation: definition of the elliptic operator} -- and of the construction of $P_s$ as a pseudodifferential operator \cite[Thm. 5.1]{Shubin}.
	Notice that, given $\{P_s\}_{s\in\mathbb{R}^d}$, any other family of parametrices is of the form $\{P_s+W_s\}_{s\in\mathbb{R}^d}$ where $\{W_s\}_{s\in\mathbb{R}^d}$ is a smooth family such that $W_s\in\mathrm{S}\Gamma(\psi^*TM^{\boxtimes 2})$ -- that is, $\{W_s\}_{s\in\mathbb{R}^d}$ is associated with a unique $W\in\mathrm{S}\Gamma(\pi_d^*\psi^*TM^{\boxtimes 2})$.
\end{remark}
\begin{remark}\label{Remark: on the W-dependence of the regularity property}
	Observe that, if condition \eqref{Equation: regularity assumption on Wick powers} holds true for $\Phi^\ell$ with $\ell\leq k$ then, to verify it for $\Phi^k$, it suffices to check it for any, but fixed choice of the family of $\{P_s\}_{s\in\mathbb{R}^d}$.

	The proof of this statement goes by induction: Condition \eqref{Equation: regularity assumption on Wick powers} holds true for $k=0,1$ independently from $\{P_s\}_{s\in\mathbb{R}^s}$ since both $\Phi^0[N;b]$ and $\Phi^1[N;b]=\Phi[N;b]$ identify per construction constant sections over $\mathcal{E}[N;b]$.
	Let now assume that condition \eqref{Equation: regularity assumption on Wick powers} holds true for $\mathcal{U}_{\ell}$ for all $\ell\leq k$ and for all smooth family $\{P_s\}_{s\in\mathbb{R}^d}$.
	We now show that, if \eqref{Equation: regularity assumption on Wick powers} holds true for $\mathcal{U}_{k}$ built out of a particular smooth family $\{P_s\}_{s\in\mathbb{R}^d}$, then it holds true for all distributions $\widetilde{\mathcal{U}}_k$ associated with any other smooth family $\{\widetilde{P}_s\}_{s\in\mathbb{R}^d}$.
	From the equivariance condition -- see \eqref{Equation: definition of locally covariant algebra of interest} -- it descends
	\begin{align*}
		\widetilde{\mathcal{U}}_{k}(\chi\otimes\omega)&=
		\int_{\mathbb{R}^d}\mathrm{d}s\,\Phi^k[N;b_s](\omega,\widetilde{P}_s,0)\chi(s)=
		\int_{\mathbb{R}^d}\mathrm{d}s\, \alpha^{P_s}_{\widetilde{P}_s}\big[\Phi^k[N;b_s](\omega,P_s)\big](0)\chi(s)\,.
	\end{align*}
	Definition \eqref{Equation: star-isomorphism between different fiber algebras} entails that $\alpha^{P_s}_{\widetilde{P}_s}\big[\Phi^k[N;b_s](\omega,P_s)\big]$ is a linear combination of functional derivatives with respect to $\varphi\in\Gamma(\psi^*TM)$ of $\Phi^k[N;b_s](\omega,P_s)$ evaluated at $\varphi=0$.
	By condition \eqref{Equation: inductive condition on derivative for Wick powers} each of such derivatives which is non trivial can be reduced to a Wick power $\Phi^\ell$ with $\ell\leq k$.
	Explicitly it holds
	\begin{align*}
		\alpha^{P_s}_{\widetilde{P}_s}\big[\Phi^k[N;b_s](\omega,P_s)\big](0)=
		\Phi^k[N;b_s](\omega,P_s,0)+
		\sum_{2\ell=2}^{k}\frac{k!!}{(2\ell)!!(k-2\ell)!!}\Phi^{k-2\ell}[N;b_s]([\widetilde{P}_s-P_s]^{\otimes \ell}\lrcorner\omega,P_s,0)\,,
	\end{align*}
	where locally $([\widetilde{P}_s-P_s]^{\otimes\ell}\lrcorner \omega_k)(x)=(\widetilde{P}_s-P_s)^{a_1a_2}(x,x)\dots (\widetilde{P}_s-P_s)^{a_{\ell-1} a_\ell}(x,x)(\omega_k)_{a_1\ldots a_k}(x)$.
	Observe that $\widetilde{P}_s-P_s\in\mathrm{S}\Gamma(\psi^*TM^{\boxtimes 2})$ is smooth in $s\in\mathbb{R}^d$ and moreover the last expression contains only terms of the form $\mathcal{U}_\ell$, $\ell\leq k$, where \eqref{Equation: regularity assumption on Wick powers} holds true by the inductive hypothesis.
\end{remark}

\begin{remark}\label{Remark: scaled Wick powers}
	Notice that, if $\Phi^\bullet$ identifies a family of Wick powers as per Definition \ref{Definition: Wick power associated with a locally covariant observable}, then for all $\lambda>0$ we can construct another family of Wick powers $\Phi^\bullet_\lambda$ via scaling -- \textit{cf.} Definition \ref{Definition: scaled locally covariant observable}.
	Actually we set
	\begin{equation}\label{eq: rescaled Wick}
		\Phi^k_\lambda[N;b]:=(S_\lambda\Phi^k)[N;b]:=\Phi^k[N;b_\lambda]\,,
	\end{equation}
	where $(N;b_\lambda)$ is defined in \eqref{Equation: definition of scaling in BkgG}.
	This fact will play a crucial r\^ole in Section \ref{Section: Renormalization and Ricci flow}. 
\end{remark}
\begin{Example}\label{Example: Wick powers exploited for Ricci flow}
	We provide a constructive scheme yielding a natural candidate to play the role of a family of Wick powers.
	Let $k\in\mathbb{N}$ and let $(N;b)\in\operatorname{Obj}(\mathsf{Bkg})$ and $\omega\in\mathrm{S}\Gamma_{\mathrm{c}}^{k,1}[N;b]$ be arbitrary.
	Define $\phi^k[N;b]\in\mathcal{P}_{\textrm{loc}}[N;b]$ -- see Definition \ref{Definition: smooth local polynomial functionals} -- to be 
	\begin{align}\label{Equation: k-th power polynomial functional}
		\phi^k[N;b](\omega,\varphi):=\int_\Sigma\langle\varphi^{[\otimes] k},\omega\rangle\,,
	\end{align}
	where $\varphi^{[\otimes] k}\in\Gamma(\mathrm{S}^{\otimes k}\psi^*TM)$ has been defined in remark \ref{Remark: on exterior tensor product and symmetrix sections}, while locally $\langle\varphi^{[\otimes] k},\omega\rangle=\varphi^{a_1}\dots\varphi^{a_k}\omega_{a_1\ldots a_k}$.
	Since we are interested in functionals which are equivariant with respect to the choice of $P\in\operatorname{Par}[N;b]$, we set for all $\omega\in\mathrm{S}\Gamma_{\mathrm{c}}^{k,1}[N;b]$ and $\varphi\in\Gamma(\psi^*TM)$,
	\begin{align}\label{Equation: definition of Wick powers exploited for Ricci flow}
		\wick{\Phi^k}[N;b](\omega,P,\varphi):=\bigg[\exp\big[\Upsilon_{[W_P]}]\phi^k[N;b](\omega)\bigg](\varphi)\,,
	\end{align}
	where $\Gamma_{[W_P]}$ is defined as in \eqref{Equation: Gamma contraction operator} where we also exploited the support property of the functional derivatives of any local functionals -- see Defnition \ref{Definition: smooth local polynomial functionals}.
	Moreover $[W_P]\in\Gamma(\mathrm{S}^{\otimes 2}\psi^*TM)$ is defined as in Remark \ref{Remark: local Hadamard representation of parametrices}.
	Notice that such a local decomposition depends on the chosen background geometry $(N;b)$ out of which $H$ is identified.
	
	By extending $\Phi^k[N;b]$ to a locally covariant observable -- see Remark \ref{Remark: on the alg-morphism property of locally covariant observables} -- the collection of all $\Phi^k$ defines a family of Wick powers as per Definition \ref{Definition: Wick power associated with a locally covariant observable}. Indeed observe that, adapting \eqref{eq: rescaled Wick} to $\Phi^k$, this scales almost homogeneously with dimension $\kappa=0$ while the second and third condition in Definition \ref{Definition: Wick power associated with a locally covariant observable} follow per construction. 
	Finally \eqref{Equation: regularity assumption on Wick powers} holds true since
	\begin{align*}
	\wick{\Phi^{2\ell+1}}[N;b](\omega,P,0)=0\,,\qquad
	\wick{\Phi^{2\ell}}[N;b](\omega,P,0)=\int_\Sigma\langle[W_P]^{[\otimes]\ell},\omega\rangle\,,
	\end{align*}
	where $[W_P]^{[\otimes]\ell}:=[[W_P]^{\otimes \ell}]$ -- \textit{cf.} remark \ref{Remark: on exterior tensor product and symmetrix sections}.
	The smoothness of the associated distribution $\mathcal{U}_{k}$ -- see equation \eqref{Equation: distribution associated with Wick powers} -- follows.
\end{Example}

Our next step consists of addressing the question concerning the characterization and the classification of the freedom in the construction of a family of Wick powers. In the Lorentzian setting this question has already been thoroughly investigated for a large class of field theories, see \cite{Hollands-Wald-01,Khavkine-Melati-Moretti-17,Khavkine:2014zsa}, while here we tackle the same problem for the model in hand, introduced in Section \ref{Section: Geometrical setting}.

In the same spirit of \cite[Thm. 5.2, Thm. 6.2]{Khavkine-Melati-Moretti-17} the result is divided in two parts -- see Theorems \ref{Theorem: ambiguities for Wick powers} and \ref{Theorem: structural form of ambiguities}.
In the first we prove a general formula -- Equation \eqref{Equation: ambiguities for Wick powers} -- which starts from two families of Wick powers, say $\widehat{\Phi}^\bullet$ and $\Phi^\bullet$, relating each $\widehat{\Phi}^k$ to a linear combination of $\{\Phi^\ell\}_{\ell\leq k}$ whose coefficients are a collection of locally covariant observables $\{C_\ell\}_{1\leq\ell\leq k-2}$.
This result profits of the Peetre-Slov\'ak theorem which we briefly recall in Appendix \ref{Appendix: Peetre-Slovak theorem}.
In the second part, we prove additional structural properties of the coefficients $C_\ell$, recasting in this framework \cite[Thm. 6.2]{Khavkine-Melati-Moretti-17}.

\noindent Before stating the key results of this section, we prove a key lemma.

\begin{lemma}\label{Lemma: psi,g dependence of locally covariant tensor}
	Let $k\in\mathbb{N}$ and for all $(N;b)\in\operatorname{Obj}(\mathsf{Bkg})$ let $c_k[N;b]\in\mathrm{S}\Gamma^{k,1}[N;b]$ be such that
	\begin{align*}
		c_k[\widetilde{N};\widetilde{b}]=\tau^*c_k[N,b]\,,
	\end{align*}
	for all $[\tau,t]\in\textrm{Ar}(\mathsf{Bkg})$ between $(N,b)$ and $(\widetilde{N};\widetilde{b})$.
	Then, for any $(N;b)=(\Sigma,M;\psi,\gamma,g)\in\mathrm{Obj}(\mathsf{Bkg})$, there exists a map $D_{k,\Sigma,M}\colon\Gamma(\mathrm{S}^{\otimes 2} T^*\Sigma\otimes \mathrm{S}^{\otimes 2}\psi^*T^*M)\to\Gamma^{k,1}[N;b]$ such that
	\begin{align}\label{Equation: psi,g dependence of locally covariant tensor}
	c_k[N;b]=c_k[\Sigma,M;\psi,\gamma,g]=D_{k,\Sigma,M}(\gamma,\psi^*g)\,.
	\end{align}
\end{lemma}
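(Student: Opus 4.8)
The plan is to invoke the Peetre--Slov\'ak theorem (recalled in Appendix~\ref{Appendix: Peetre-Slovak theorem}) to turn the naturality/covariance hypothesis on $c_k$ into a statement that $c_k[N;b]$ is a local differential operator applied to the background data $(\psi,\gamma,g)$, and then use the isometric-embedding arrows of $\mathsf{Bkg}$ --- in particular the freedom in choosing $t\colon M\to\widetilde M$ --- to eliminate all dependence on $g$ except through the pullback combination $\psi^*g$. First I would fix $(N;b)=(\Sigma,M;\psi,\gamma,g)$ and observe that the covariance condition $c_k[\widetilde N;\widetilde b]=\tau^*c_k[N,b]$, applied to the special arrows $(\mathrm{id}_\Sigma,t)$ for isometric embeddings $t\colon(M,g)\to(\widetilde M,\widetilde g)$ with $\widetilde\psi\circ\mathrm{id}_\Sigma=t\circ\psi$, forces $c_k[N;b]$ to depend on $g$ and on $\psi$ only through data intrinsic to the image submanifold $\psi(\Sigma)$ together with the embedding --- and since $c_k[N;b]\in\mathrm{S}\Gamma^{k,1}[N;b]=\Gamma_{\mathrm c}(\wedge^{\mathrm{top}}T^*\Sigma\otimes\mathrm{S}^{\otimes k}\psi^*T^*M)$ takes values in a bundle built from $\psi^*T^*M$, the only surviving metric datum on the target side is the pulled-back metric $\psi^*g\in\Gamma(\mathrm{S}^{\otimes 2}\psi^*T^*M)$.

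Next I would make the locality precise. The assignment $b\mapsto c_k[N;b]$ is natural with respect to all arrows of $\mathsf{Bkg}$, hence in particular with respect to restrictions to open subsets of $\Sigma$; combined with the smoothness in families used elsewhere in the paper (cf.\ the $d$-dimensional variations of Definition~\ref{Definition: smooth compactly supported d-dimensional family of variations}), the hypotheses of the Peetre--Slov\'ak theorem are met. It therefore yields that $c_k[N;b]$ is given, at each point $x\in\Sigma$, by a smooth function of the jets of $\gamma$ and of $\psi^*g$ at $x$ --- equivalently there is a differential operator $D_{k,\Sigma,M}$ on $\Gamma(\mathrm{S}^{\otimes 2}T^*\Sigma\otimes\mathrm{S}^{\otimes 2}\psi^*T^*M)$, of locally finite order, with
\begin{align*}
c_k[\Sigma,M;\psi,\gamma,g]=D_{k,\Sigma,M}(\gamma,\psi^*g)\,.
\end{align*}
That $\psi$ itself enters only through $\psi^*g$ (and through the bundle $\psi^*T^*M$ underlying the target of $D_{k,\Sigma,M}$) is exactly the content extracted in the previous paragraph: any additional $\psi$-dependence would not be preserved by the target-side arrows $(\mathrm{id}_\Sigma,t)$, since two background data with the same $\psi^*g$ can always be related --- locally, which suffices by locality --- by such an isometric embedding of a neighbourhood of $\psi(\Sigma)$. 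One also checks that $D_{k,\Sigma,M}$ is independent of $t$ and depends on $(\Sigma,M)$ only through their smooth structures, which is why the subscript records only $\Sigma,M$.

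The main obstacle I expect is the bookkeeping in the second paragraph: carefully verifying that, for two target metrics agreeing on $\psi^*T^*M$ along $\psi(\Sigma)$ to the relevant jet order, there is an admissible arrow of $\mathsf{Bkg}$ (an orientation-preserving isometric embedding satisfying the compatibility condition~\eqref{Equation: compatibility condition for arrows in BkgG}) intertwining them --- this is where one must be attentive to the fact that the arrows are required to be \emph{isometric} embeddings, not merely smooth ones, so one genuinely uses that only $\psi^*g$ (not $g$ off $\psi(\Sigma)$) can influence a section of a $\psi^*T^*M$-bundle. A secondary technical point is confirming that the Peetre--Slov\'ak machinery applies with values in the twisted bundle $\wedge^{\mathrm{top}}T^*\Sigma\otimes\mathrm{S}^{\otimes k}\psi^*T^*M$ rather than a trivial one; this is routine once one trivialises locally and uses naturality to glue, exactly as in \cite{Khavkine-Melati-Moretti-17}.
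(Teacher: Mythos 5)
Your core argument is the same as the paper's: both proofs extract the dependence on $(\gamma,\psi^*g)$ by specialising the covariance condition to target-side arrows of the form $(\operatorname{Id}_\Sigma,t)$. The paper does this in two clean steps: first it takes $t\colon M\to M$ a diffeomorphism with $t|_{\psi(\Sigma)}=\operatorname{Id}$, so that the compatibility condition \eqref{Equation: compatibility condition for arrows in BkgG} forces $\widetilde\psi=\psi$ and $g=t^*\widetilde g$, giving $c_k(\Sigma,M;\psi,\gamma,\widetilde g)=c_k(\Sigma,M;\psi,\gamma,t^*\widetilde g)$ and hence dependence on $g$ only through $\psi^*g$; second it takes $t$ an arbitrary diffeomorphism, under which $\widetilde\psi=t\circ\psi$ and $\psi^*t^*\widetilde g=\widetilde\psi^*\widetilde g$, which removes the residual $\psi$-dependence. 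Your first and third paragraphs reproduce exactly this mechanism, with the transitivity step (``two background data with the same $\psi^*g$ can always be related by such an arrow'') left at essentially the same level of detail as the paper's own ``it follows''.

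Where you diverge is in invoking the Peetre--Slov\'ak theorem \emph{inside} the proof of the lemma. Two remarks. First, it is not needed here: the lemma asserts only the existence of a \emph{map} $D_{k,\Sigma,M}$ with $c_k=D_{k,\Sigma,M}(\gamma,\psi^*g)$, i.e.\ that $c_k$ factors through $(\gamma,\psi^*g)$, and this is a purely set-theoretic consequence of the covariance argument. Second, and more importantly, it is not justified by the lemma's hypotheses: Peetre--Slov\'ak requires weak regularity in the sense of Definition \ref{Definition: weak regularity condition}, i.e.\ good behaviour under smooth compactly supported families of variations, whereas the lemma assumes only the covariance condition $c_k[\widetilde N;\widetilde b]=\tau^*c_k[N;b]$. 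In the paper that regularity input is supplied only later, by condition \eqref{Equation: regularity assumption on Wick powers} on Wick powers, which is precisely why the upgrade of $D_{k,\Sigma,M}$ to a differential operator of locally bounded order is carried out in the proof of Theorem \ref{Theorem: ambiguities for Wick powers} rather than in the lemma itself. As written, your second paragraph proves a stronger conclusion under stronger, unstated hypotheses; either drop that step or relocate it to the point where the regularity assumption is actually in force.
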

\begin{proof}
	Per hypothesis, for every pair  $[N;b]=(\Sigma,M;\psi,\gamma,g)$, $[\widetilde{N},\widetilde{b}]=(\widetilde{\Sigma},\widetilde{M};\widetilde{\psi},\widetilde{\gamma},\widetilde{g})\in\operatorname{Obj}(\mathsf{Bkg})$ such that there exists $(\tau,t)\in\operatorname{Ar}(\mathsf{Bkg})$ from $[N;b]$ to $[\widetilde{N},\widetilde{b}]$,  it holds
	\begin{align}\label{Equation: naturality condition exploited in the proof of the Lemma}
	c_k(\widetilde{\Sigma},\widetilde{M};\widetilde{\psi},\widetilde{\gamma},\widetilde{g})=\tau^*c_k(\Sigma,M;\psi,\gamma,g)\,.
	\end{align} 
	Consider the special case where $[\widetilde{N},\widetilde{b}]$ and $(\tau,t)$ are such that $\widetilde{\Sigma}=\Sigma$, $\widetilde{M}=M$, $\tau=\operatorname{Id}_\Sigma$ while $t\colon M\to M$ is any diffeomorphism in $M$ such that $t|_{\psi(\Sigma)}=\operatorname{Id}|_{\Sigma}$. Condition \eqref{Equation: compatibility condition for arrows in BkgG} entails $\widetilde{\psi}=\widetilde{\psi}\circ\tau=t\circ\psi=\psi$, while $g=t^*\widetilde{g}$, $\widetilde{\gamma}=\gamma$.
	Equation \eqref{Equation: naturality condition exploited in the proof of the Lemma} implies
	\begin{align*}
	c_k[N;b]=c_k(\Sigma,M;\psi,\gamma,\widetilde{g})=c_k(\Sigma,M;\psi,\gamma,t^*\widetilde{g})\,.
	\end{align*}
	It follows that $c_k(\Sigma,M;\psi,\gamma,g)$ depends on $g$ only via $\psi^*g$, that is
	\begin{align*}
	c_k(\Sigma,M;\psi,\gamma,g)=:d_k(\Sigma,M;\psi,\gamma,\psi^*g)\,.
	\end{align*}
	We prove that $c_k[N;b]$ depends on $\psi$ only via $\psi^*g$.
	As above let $\tau=\operatorname{Id}_\Sigma$ and let $t\colon M\to M$ be any diffeomorphism: by condition \eqref{Equation: psi,g dependence of locally covariant tensor} we find
	\begin{align*}
	d_k(\Sigma,M;\widetilde{\psi},\gamma,\widetilde{\psi}^*\widetilde{g})=c_k(\Sigma,M;\widetilde{\psi},\gamma,\widetilde{g})
	&=c_k(\Sigma,M;\psi,\gamma,t^*\widetilde{g})\\&
	=d_k(\Sigma,M;\psi,\gamma,\psi^*t^*\widetilde{g})
	=d_k(\Sigma,M;\psi,\gamma,\widetilde{\psi}^*\widetilde{g})\,,
	\end{align*}
	where we exploited equation \eqref{Equation: compatibility condition for arrows in BkgG} so that $\psi^*t^*\widetilde{g}=(t\circ\psi)^*\widetilde{g}=\widetilde{\psi}^*\widetilde{g}$.
	This implies that
	\begin{align*}
	c_k(\Sigma,M;\psi,\gamma,g)=:D_{k,\Sigma,M}(\gamma,\psi^*g)\,,
	\end{align*}
	which entails the sought result.
\end{proof}

\begin{theorem}\label{Theorem: ambiguities for Wick powers}
	Let $\widehat{\Phi}^\bullet$ and $\Phi^\bullet$ be two families of Wick powers associated to $\Phi$ as per Definition \ref{Definition: Wick power associated with a locally covariant observable}.
	Then, for all integers $k> 2$, there exists a collection $\{C_\ell\}_{2\leq\ell\leq k}$ of locally covariant
	observables	$C_\ell\colon\mathrm{S}\Gamma_{\mathrm{c}}^{\ell}\to\mathcal{A}$, each of which scales almost homogeneously with dimension $\kappa=0$ so that, for all $(N,b)\in\operatorname{Obj}(\mathsf{Bkg})$ and for all $\omega_1\in\mathrm{S}\Gamma_{\mathrm{c}}^{k,1}[N;b]$
	\begin{align}\label{Equation: phi-independence structure of coefficients Ck}
		C_\ell[N;b](\omega_1)=\int_\Sigma \langle c_\ell[N;b], \omega_1\rangle \;1_{\mathcal{A}[N;b]}\,,
	\end{align}
	where $c_\ell[N;b]\in\mathrm{S}\Gamma^{\ell,1}[N;b]$ for all $(N;b)\in\mathrm{Obj}(\mathsf{Bkg})$.
	Furthermore, if $(N;b)=(\Sigma,M;\psi,\gamma,g)$, then
	\begin{align}\label{Equation: differential structure of tensors coefficients}
		c_\ell(\Sigma,M;\psi,\gamma,g)=D_{\ell,\Sigma,M}(\gamma,\psi^*g)\,,
	\end{align}
	where $D_{\ell,\Sigma,M}\colon\Gamma(\mathrm{S}^{\otimes 2}T^*\Sigma\otimes \mathrm{S}^{\otimes 2}\psi^*T^*M)\to\Gamma(\mathrm{S}^{\otimes\ell}\psi^*TM)$ is a differential operator of locally bounded order in the sense of Definition \ref{Definition: differential operator} in Appendix \ref{Appendix: Peetre-Slovak theorem}.
	In addition, for all $(N;b)\in\operatorname{Obj}(\mathsf{Bkg})$ and for all $\omega_1\in\mathrm{S}\Gamma_{\mathrm{c}}^{k,1}[N;b]$,
	\begin{align}\label{Equation: ambiguities for Wick powers}
		\widehat{\Phi}^k[N;b](\omega_1)=\Phi^k[N;b](\omega_1)+\sum_{\ell=0}^{k-2}{k\choose \ell}\Phi^\ell[N;b](c_{k-\ell}[N;b]\lrcorner\omega_1)\,,
	\end{align}
	where $c_{k-\ell}[N;b]\lrcorner\omega_1\in\mathrm{S}\Gamma_{\mathrm{c}}^{\ell,1}[N;b]$ reads locally $\big(c_{k-\ell}[N;b]\lrcorner\omega_1\big)_{a_1\ldots a_\ell}=c_{k-\ell}^{a_{\ell+1}\ldots a_{k}}[N;b](\omega_1)_{a_1\ldots a_k}$.
\end{theorem}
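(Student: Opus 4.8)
The plan is an induction on $k$ modelled on \cite[Thm.~5.2]{Khavkine-Melati-Moretti-17}, using the inductive derivative axiom \eqref{Equation: inductive condition on derivative for Wick powers} to reconstruct the difference $\widehat{\Phi}^k-\Phi^k$ from lower--degree data, and then analysing the resulting ``integration constant''. Fix $(N;b)\in\operatorname{Obj}(\mathsf{Bkg})$ and $P\in\operatorname{Par}[N;b]$, and set $D^k(\omega_1,P,\varphi):=\widehat{\Phi}^k[N;b](\omega_1,P,\varphi)-\Phi^k[N;b](\omega_1,P,\varphi)$. For $k\le 2$ the normalisations $\Phi^0=1_{\mathcal{A}}$, $\Phi^1=\Phi$ together with \eqref{Equation: inductive condition on derivative for Wick powers} force $D^0=D^1=0$ and $D^2(\omega_1,P,\varphi)=D^2(\omega_1,P,0)$; this is the base case, and one sets $\int_\Sigma\langle c_2[N;b],\omega_1\rangle:=D^2(\omega_1,P,0)$, which is a smooth section of $\mathrm{S}\Gamma^{2,1}[N;b]$ because the regularity requirement \eqref{Equation: regularity assumption on Wick powers} forces $\omega_1\mapsto D^2(\omega_1,P,0)$ to be smooth (evaluating the wavefront condition on $\mathcal{U}_2$ at a point of a constant family). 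For the inductive step, assume \eqref{Equation: ambiguities for Wick powers} up to $k-1$; applying \eqref{Equation: inductive condition on derivative for Wick powers} to both $\widehat{\Phi}^k$ and $\Phi^k$ gives $\langle D^k(\omega_1,P)^{(1)}[\varphi_1],\varphi_2\rangle=k\,D^{k-1}(\varphi_2\lrcorner\omega_1,P,\varphi_1)$. Inserting the inductive hypothesis for $D^{k-1}$ and using \eqref{Equation: inductive condition on derivative for Wick powers} ``backwards'' on each $\Phi^\ell$ appearing there, the elementary identity $\tfrac{k}{m}\binom{k-1}{m-1}=\binom{k}{m}$ shows $D^k(\omega_1,P)^{(1)}=\big(\sum_{m=1}^{k-2}\binom{k}{m}\Phi^m[N;b](c_{k-m}[N;b]\lrcorner\omega_1)\big)^{(1)}$. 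Integrating in $\varphi$ and defining $\int_\Sigma\langle c_k[N;b],\omega_1\rangle:=D^k(\omega_1,P,0)-\sum_{m=1}^{k-2}\binom{k}{m}\Phi^m[N;b](c_{k-m}[N;b]\lrcorner\omega_1,P,0)$ — again a smooth section of $\mathrm{S}\Gamma^{k,1}[N;b]$ by \eqref{Equation: regularity assumption on Wick powers} — yields \eqref{Equation: ambiguities for Wick powers} as an identity of functionals at the fixed $P$. Since both sides of \eqref{Equation: ambiguities for Wick powers} are equivariant sections of $\mathcal{E}[N;b]$ — the $\Phi^m$ being locally covariant observables — agreeing at one parametrix, the identity holds for every $P$; in particular $c_k[N;b]$ is parametrix--independent, and \eqref{Equation: phi-independence structure of coefficients Ck} holds with $C_\ell[N;b](\omega_1):=\int_\Sigma\langle c_\ell[N;b],\omega_1\rangle\,1_{\mathcal{A}[N;b]}$.

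Next I would establish the geometric structure of the $c_\ell$. Since $\widehat{\Phi}^\bullet$ and $\Phi^\bullet$ are families of natural transformations, $D^k$ is natural; unravelling \eqref{eq: comp_nat_trans} at $\varphi=0$ and comparing with the formula defining $c_k[N;b]$ (using, inductively, that $c_2,\dots,c_{k-1}$ are already known to be natural) shows that $c_k[\widetilde{N};\widetilde{b}]=\tau^*c_k[N;b]$ for every arrow $(\tau,t)$ of $\mathsf{Bkg}$, the pull--back being understood through the identification $\tau^*\widetilde{\psi}^*T\widetilde{M}\cong\psi^*TM$ coming from the compatibility condition. Lemma \ref{Lemma: psi,g dependence of locally covariant tensor} then applies and produces a map $D_{\ell,\Sigma,M}\colon\Gamma(\mathrm{S}^{\otimes 2}T^*\Sigma\otimes\mathrm{S}^{\otimes 2}\psi^*T^*M)\to\Gamma(\mathrm{S}^{\otimes\ell}\psi^*TM)$ with $c_\ell(\Sigma,M;\psi,\gamma,g)=D_{\ell,\Sigma,M}(\gamma,\psi^*g)$, which is \eqref{Equation: differential structure of tensors coefficients}. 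To promote $D_{\ell,\Sigma,M}$ to a differential operator of locally bounded order I would invoke the Peetre--Slov\'ak theorem recalled in Appendix \ref{Appendix: Peetre-Slovak theorem}: its support--non--increase hypothesis holds because the coincidence--point data defining $c_\ell$ are local in the background metrics — locality of the local functionals entering the $\Phi^k$, together with the local nature of the Hadamard subtraction producing $[W_P]$ — while its regularity hypothesis is exactly the content of \eqref{Equation: regularity assumption on Wick powers}: applying the vanishing of $\textrm{WF}(\mathcal{U}_k)$ along a smooth $d$--dimensional family of variations $b_s=(\psi,\gamma_s,g_s)$ to the difference $\widehat{\Phi}^\bullet,\Phi^\bullet$ (cf.\ Remark \ref{Remark: on the W-dependence of the regularity property}) gives the smooth dependence of $c_\ell$ on smooth families of $(\gamma_s,\psi^*g_s)$ required by the theorem.

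The almost homogeneous scaling of each $C_\ell$ with dimension $\kappa=0$ then follows from condition~1 in Definition \ref{Definition: Wick power associated with a locally covariant observable}: $S_\lambda D^k=S_\lambda\widehat{\Phi}^k-S_\lambda\Phi^k$ is a difference of observables scaling almost homogeneously with $\kappa=0$, hence scales almost homogeneously with $\kappa=0$; reading off the coefficients in the scaled version of \eqref{Equation: ambiguities for Wick powers} and using that the $\Phi^m$ scale almost homogeneously with $\kappa=0$ transfers the property to each $C_\ell$.

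The routine part is the binomial bookkeeping in the inductive step. The genuinely delicate points, which I expect to be the main obstacle, are (i) that the ``integration constants'' are smooth sections and not merely distributions — this is precisely where the regularity axiom \eqref{Equation: regularity assumption on Wick powers} is used, via the wavefront analysis of $\mathcal{U}_k$ along families, in the spirit of Remark \ref{Remark: on the W-dependence of the regularity property} — and (ii) the verification of the hypotheses of the Peetre--Slov\'ak theorem, i.e.\ that $c_\ell$ genuinely depends on the background data both locally and smoothly--in--families, so that the a priori merely set--theoretic map $D_{\ell,\Sigma,M}$ of Lemma \ref{Lemma: psi,g dependence of locally covariant tensor} is in fact a differential operator of locally bounded order.
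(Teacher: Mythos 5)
Your proposal is correct and follows essentially the same route as the paper: induction on $k$, using the derivative axiom \eqref{Equation: inductive condition on derivative for Wick powers} to show that the corrected difference $\widehat{\Phi}^k-\Phi^k-\sum_{m\geq 1}\binom{k}{m}\Phi^m(c_{k-m}\lrcorner\,\cdot\,)$ has vanishing first functional derivative and is therefore a multiple of the identity, then invoking the regularity axiom \eqref{Equation: regularity assumption on Wick powers} for smoothness of $c_k$, Lemma \ref{Lemma: psi,g dependence of locally covariant tensor} for the dependence on $(\gamma,\psi^*g)$, and the Peetre--Slov\'ak theorem for the locally bounded order. The only differences are presentational (you make the binomial bookkeeping and the parametrix-independence via equivariance explicit, where the paper is terser).
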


\begin{proof}
	The proof proceeds per induction with respect to $k$.
	First of all we prove Equation \eqref{Equation: ambiguities for Wick powers} for $k=2$.
	Hence, we set $C_2:=\widehat{\Phi}^2-\Phi^2$, showing that it is of the wanted form. Let  $(N;b)\in\operatorname{Obj}(\mathsf{Bkg})$ and let $\omega_1\in\mathrm{S}\Gamma_{\mathrm{c}}^{2,1}[N;b]$, while  $\varphi_1,\varphi_2\in\Gamma(\psi^*TM)$ and $P\in\operatorname{Par}[N;b]$.
	Equation \eqref{Equation: inductive condition on derivative for Wick powers} entails
	\begin{align*}
		\big\langle C_2[N;b](\omega_1,P)^{(1)}[\varphi_1],\varphi_2\big\rangle=2(\Phi-\Phi)[N;b](\varphi_2\lrcorner\omega_1,P,\varphi_1)=0\,.
	\end{align*}
It follows that, as an element of $\mathcal{A}[N;b]$, $C_2[N;b](\omega_1)$ does not depend on $(P,\varphi)$, that is, it is a multiple of the identity element:
	\begin{align*}
		C_2[N;b](\omega_1,P,\varphi)=\int_\Sigma \langle c_2[N;b],\omega_1\rangle\; 1_{\mathcal{A}[N;b]}\,,
	\end{align*}
	where $c_2$ is an assignment to $(N;b)\in\mathrm{Obj}(\mathsf{Bkg})$ of an element in $\mathrm{S}\Gamma^{2,1}[N;b]$ on account of the regularity condition \eqref{Equation: regularity assumption on Wick powers}.
	Moreover $C_2$ inherits from $\Phi^2$ and $\widehat{\Phi}^2$ the property of scaling almost homogeneously with degree $\kappa=0$. 
	Since the arrows of $\mathsf{Bkg}$ act on $c_2[N;b]$ via pull-back, the hypotheses of Lemma \ref{Lemma: psi,g dependence of locally covariant tensor} are met and we can conclude that
	$c_2(\Sigma,M;\psi,\gamma,g)=D_{2,\Sigma,M}(\gamma,\psi^*g)$
	for all $(N;b)=(\Sigma,M;\psi,\gamma,g)$.
It descends that, for all $x\in\Sigma$, $D_{2,\Sigma,M}(\gamma,\psi^*g)(x)$ depends only on the germ of $\gamma,\psi^*g$ at $x$.
	Furthermore, condition \eqref{Equation: regularity assumption on Wick powers} ensures that $(\gamma,\psi^*g)\mapsto D_{2,\Sigma,M}(\gamma,\psi^*g)$ is weakly regular as per Definition \ref{Definition: weak regularity condition}.
	By the Peetre-Slov\'ak Theorem -- see Appendix \ref{Appendix: Peetre-Slovak theorem} -- it follows that $D_{2,\Sigma,M}\colon\Gamma(\mathrm{S}^{\otimes 2} T^*\Sigma\otimes \mathrm{S}^{\otimes 2}\psi^*T^*M)\to\Gamma(\mathrm{S}^{\otimes 2}\psi^*TM)$ is a differential operator of locally bounded order. This concludes the proof of the theorem for $k=2$.	
	
	Let us assume that, for all $2\leq p\leq k$, $(N,b)\in\operatorname{Obj}(\mathsf{Bkg})$ and for all $\omega_1\in\mathrm{S}\Gamma_{\mathrm{c}}^{p,1}[N;b]$
	\begin{align}\label{Equation: inductive hypothesis}
		\widehat{\Phi}^p[N;b](\omega_1)=\Phi^p[N;b](\omega_1)+\sum_{\ell=0}^{p-2}{p\choose \ell}\Phi^\ell[N;b](c_{p-\ell}[N;b]\lrcorner\omega_1)\,.
	\end{align}
	Here for all $q\in\{1,\ldots,p-2\}$, $C_q$ is a locally covariant observable which scales almost homogeneously with dimension $\kappa=0$, so that
	\begin{align*}
		C_q[N;b](\omega_1)=\int_\Sigma\langle c_q[N;b],\omega_1\rangle\;1_{\mathcal{A}[N;b]}\qquad\forall \omega_1\in\mathrm{S}\Gamma_{\mathrm{c}}^{q,1}[N;b]\,,
	\end{align*}
	where $c_q[N;b]=c_q(\Sigma,M;\psi,\gamma,g)=D_{q,\Sigma,M}(\gamma,\psi^*g)\in\mathrm{S}\Gamma^{q,1}[N;b]$, being $D_{q,\Sigma,M}$ a differential operator of locally bounded order.
	We prove the inductive step, namely that equation \eqref{Equation: inductive hypothesis} holds true for $p=k+1$. As for the case $k=2$, let $C_{k+1}$ be defined as
	\begin{align}
		C_{k+1}[N;b](\omega_1):=\widehat{\Phi}^{k+1}[N;b](\omega_1)-\Phi^{k+1}[N;b](\omega_1)-\sum_{\ell=0}^{k-1}{k+1\choose \ell}\Phi^\ell[N;b](c_{k+1-\ell}[N;b]\lrcorner\omega_1)\,,
	\end{align}
	for all $\omega_1\in\mathrm{S}\Gamma_{\mathrm{c}}^{k+1,1}[N;b]$.
	Equation \eqref{Equation: inductive condition on derivative for Wick powers} and the inductive hypothesis \eqref{Equation: inductive hypothesis} entail that $C_{k+1}[N;b](\omega_1)$ is an element of $\mathcal{A}[N;b]$ that does not depend on the choice of $(P,\varphi)$.
	Hence there exist an assignment to $(N;b)\in\mathrm{Obj}(\mathsf{Bkg})$ of an element $c_{k+1}[N;b]\in\mathrm{S}\Gamma^{k+1,1}[N;b]$ such that
	\begin{align*}
		C_{k+1}[N;b](\omega_1)=\int_\Sigma\big\langle c_{k+1}[N;b],\omega_1\big\rangle\;1_{\mathcal{A}[N;b]}\,.
	\end{align*}
	where we used the regularity condition \eqref{Equation: regularity assumption on Wick powers}. In addition, still on account of the inductive hypothesis \eqref{Equation: inductive hypothesis}, $C_{k+1}$ scales almost homogeneously with degree $\kappa=0$.
	This implies that $c_{k+1}[N;b]$ satisfies the hypothesis of Lemma \ref{Lemma: psi,g dependence of locally covariant tensor} and, thus, it follows that $c_{k+1}(\Sigma,M;\psi,\gamma,g)=D_{k+1,\Sigma,M}(\gamma,\psi^*g)$.
	The regularity condition \eqref{Equation: regularity assumption on Wick powers} ensures that $D_{k+1,\Sigma,M}\colon\Gamma(\mathrm{S}^{\otimes 2} T^*\Sigma\otimes \mathrm{S}^{\otimes 2} \psi^*T^*M)\to\Gamma(S^{\otimes k+1}\psi^*TM)$ is weakly regular and that, for all $x\in\Sigma$, $D_{k+1,\Sigma,M}(\gamma,\psi^*g)(x)$ depends on $\gamma,\psi^*g$ only via their germs at $x$.
	By the Peetre-Slov\'ak Theorem $D_{k+1,\Sigma,M}$ is a differential operator of locally bounded order. This completes the proof.
\end{proof}

\noindent To conclude we state the last result of this section.

\begin{theorem}\label{Theorem: structural form of ambiguities}
	Under the same assumptions of Theorem \ref{Theorem: ambiguities for Wick powers}, it holds that, for each $k\in\mathbb{N}$ and for each $(N;b)=(\Sigma,M;\psi,\gamma,g)\in\operatorname{Obj}(\mathsf{Bkg})$, the map $D_{
	k,\Sigma,M}$ defined in equation \eqref{Equation: differential structure of tensors coefficients} enjoys the following properties:
	\begin{enumerate}
		\item
		$D_{k,\Sigma,M}\colon\Gamma(\mathrm{S}^{\otimes 2} T^*\Sigma\otimes \mathrm{S}^{\otimes 2}\psi^*TM)\to\Gamma^{k,1}[N;b]$ is a differential operator of globally bounded order -- see Definition \ref{Definition: differential operator} of Appendix \ref{Appendix: Peetre-Slovak theorem} ;
		\item
		for all $x\in\Sigma$, $\gamma\in\Gamma(\mathrm{S}^{\otimes 2} T^*\Sigma)$ and $\psi^*g\in\Gamma(\mathrm{S}^{\otimes 2} \psi^*T^*M)$ it holds
		\begin{align}
		\nonumber
		D_{k,\Sigma,M}(\gamma,\psi^*g)(x)= \mathsf{D}_k\bigg(
		\gamma^{\alpha\beta}(x),\epsilon^{\alpha\beta}(x), R_{\alpha\beta\mu\nu}[\gamma](x),\ldots\nabla^\Sigma_{\alpha_1}\dots\nabla^\Sigma_{\alpha_p}R_{\alpha\beta\mu\nu}[\gamma](x),\ldots,\\
		g^{ab}(\psi(x)),R_{abcd}[g](\psi(x)),\ldots,\nabla^M_{a_1}\dots\nabla^M_{a_r}R_{abcd}[g](\psi(x))
		\bigg)\,,\label{eq:Dk}
		\end{align}
		where $\mathsf{D}_k$ is a tensor, covariantly constructed from its arguments, where the symbol $R$ in the above expression indicates the Riemann tensor while $\epsilon^{\alpha\beta}$ is the totally antisymmetric Levi-Civita tensor.
		\item 
		Each $\mathsf{D}_k$ is an homogeneous of degree $\kappa=0$, linear combination of finitely many covariantly constructed tensors. These are polynomials in all the arguments on which $\mathsf{D}_k$ depends in \eqref{eq:Dk} and the functional form does not depend on the choice of $[N;b]\in\mathrm{Obj}(\mathsf{Bkg})$.		
	\end{enumerate}
\end{theorem}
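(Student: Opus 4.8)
The plan is to follow the blueprint of \cite[Thm. 6.2]{Khavkine-Melati-Moretti-17}, adapting it to the present bi-natural setting in which $(\Sigma,\gamma)$ and $(M,g)$ are linked by $\psi$. The starting point is Theorem \ref{Theorem: ambiguities for Wick powers}: for each $(N;b)=(\Sigma,M;\psi,\gamma,g)$ the coefficient $c_k[N;b]=D_{k,\Sigma,M}(\gamma,\psi^*g)$ comes from a differential operator $D_{k,\Sigma,M}$ of locally bounded order which, since the arrows of $\mathsf{Bkg}$ include all diffeomorphisms $\tau$ of $\Sigma$ and all diffeomorphisms $t$ of $M$ compatible with $\psi$ (exactly as exploited in the proof of Lemma \ref{Lemma: psi,g dependence of locally covariant tensor}), is natural, \textit{i.e.} equivariant under $\mathrm{Diff}(\Sigma)\times\mathrm{Diff}(M)$; moreover it inherits weak regularity from \eqref{Equation: regularity assumption on Wick powers} and almost homogeneous scaling with dimension $\kappa=0$ from the two families $\widehat{\Phi}^\bullet,\Phi^\bullet$.

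First I would upgrade \emph{locally bounded order} to \emph{globally bounded order} (item 1). The idea is to decompose $D_{k,\Sigma,M}$ into components of definite weight under the rescaling $\gamma\mapsto\gamma_\lambda=\lambda^{-2}\gamma$ of \eqref{Equation: definition of scaling in BkgG}: every coordinate derivative of $\gamma$ carries a fixed scaling weight, so, combined with the index structure forced by naturality (all $\Sigma$-indices must be contracted, as $\mathsf{D}_k$ carries none), only finitely many weights can contribute to an object which scales almost homogeneously with $\kappa=0$, and at each such weight the local boundedness of the order becomes a uniform, global bound. The delicate point, and the step I expect to be the main obstacle, is that in two dimensions the target metric has engineer dimension $\mathrm{d}_g=0$ and therefore does not scale, so the rescaling argument controls \emph{a priori} only the order in the $\Sigma$-derivatives. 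To bound the order in the $M$-derivatives one must argue separately, reducing $C_k$ to the lower coefficients via the inductive identity \eqref{Equation: inductive condition on derivative for Wick powers} together with the base case and the weak regularity \eqref{Equation: regularity assumption on Wick powers}, in the same spirit as the inductive structure already set up in Remark \ref{Remark: on the W-dependence of the regularity property}; this is what forces $D_{k,\Sigma,M}$ to have globally bounded order in all of its arguments.

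Granted globally bounded order, item 2 is the classical structure theorem for natural differential operators. A $\mathrm{Diff}(\Sigma)\times\mathrm{Diff}(M)$-equivariant differential operator of bounded order from $(\gamma,\psi^*g)$ to sections of $\mathrm{S}^{\otimes k}\psi^*TM$ is, by the Thomas replacement theorem and the first fundamental theorem of invariant theory for the orthogonal groups (in the form used in \cite{Khavkine-Melati-Moretti-17} and in \cite{Hollands-Wald-05}), given by a \emph{universal} finite formula: one passes to $\gamma$- and $g$-normal coordinates, replaces the jets of the metrics by the Riemann tensors $R[\gamma]$ and $R[g]$ (the latter entering along $\psi$ through the curvature of the pull-back connection $\nabla^\psi$) and their iterated covariant derivatives $\nabla^\Sigma$ and $\nabla^\psi$, and contracts the remaining indices with $\gamma^{\alpha\beta}$, $g^{ab}$ and the Levi-Civita tensor $\epsilon^{\alpha\beta}$ of $\gamma$; the $k$ free indices left over are those of $\mathrm{S}^{\otimes k}\psi^*TM$. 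Boundedness of the order makes the sum finite, which is precisely \eqref{eq:Dk}.

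Finally, for item 3 I would combine weak regularity with scaling once more. Weak regularity makes the universal coefficients in \eqref{eq:Dk} smooth functions of the $0$-jets of the metrics; the rescaling $\gamma\mapsto\lambda^{-2}\gamma$ acts on each covariant monomial with a definite weight, and almost homogeneous scaling with $\kappa=0$ — sharpened by the fact that $c_k[N;b]$ carries neither a density nor a free $\Sigma$-index, so that no genuine positive power of $\lambda$ can survive and the logarithmic corrections collapse to lower Wick powers already accounted for — forces the total expression to be scale invariant. Since the space of covariantly constructed tensors of fixed weight built from the listed arguments is finite dimensional and spanned by such monomials, a smooth, orthogonally equivariant and scale homogeneous operator must be a finite linear combination of them with constant coefficients, hence polynomial; the universality of the replacement formula then yields that the functional form is independent of $(N;b)\in\mathrm{Obj}(\mathsf{Bkg})$. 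This is the standard argument of \cite[Thm. 6.2]{Khavkine-Melati-Moretti-17}, and it completes the proof.
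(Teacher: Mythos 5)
Your proposal follows exactly the route the paper takes: the paper's proof of Theorem \ref{Theorem: structural form of ambiguities} consists of the single remark that, on account of Lemma \ref{Lemma: psi,g dependence of locally covariant tensor}, the argument ``can follow almost slavishly'' that of \cite[Thm. 6.2]{Khavkine-Melati-Moretti-17}, and all details are omitted. Your write-up is a faithful and correct expansion of that same strategy (scaling plus naturality to upgrade to globally bounded order, Thomas replacement and invariant theory for \eqref{eq:Dk}, and scaling again for polynomiality and universality), including a sensible handling of the point that $\mathrm{d}_g=0$ in two dimensions, which the paper does not spell out.
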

\begin{proof}

On account of Lemma \ref{Lemma: psi,g dependence of locally covariant tensor} the proof can follow almost slavishly that of \cite[Thm. 6.2]{Khavkine-Melati-Moretti-17}. For this reason we omit it.
\end{proof}

\section{Renormalization and Ricci flow}\label{Section: Renormalization and Ricci flow}
Our main goal is to apply the results of Section \ref{Section: Wick powers and ambiguities} giving a rigorous derivation of the Ricci flow from the renormalization of (the linearisation of) the non-linear Sigma-model introduced in Section \ref{Section: Geometrical setting} -- see \cite{Carfora-17} and also \cite{Gaw99}.

\paragraph{Perturbative Euclidean statistical field theory.}
In the framework of Euclidean algebraic quantum field theory the expectation value of a (locally covariant) observable $\mathcal{O}$ is typically built out of a Lagrangian density $\mathcal{L}$, as the one introduced in equation \eqref{Equation: second order expansion of harmonic Lagrangean}, which is regarded as the covariance of an infinite dimensional Gaussian measure. However, except for some rather special cases, this approach brings several difficulties in dealing with non-linearities and thus one must resort to a perturbative approach. Fixing $\mathcal{L}$ to be the one of \eqref{Equation: second order expansion of harmonic Lagrangean}, following the discussion and the notation at the beginning of Section \ref{Section: Geometrical setting}, we split $\mathcal{L}$ in two contributions
\begin{align}
	\label{Equation: split of Lagrangean density}
	\mathcal{L}(\psi,\gamma,g;\varphi)&:=
	\mathcal{L}_{\mathrm{free}}(\psi,\gamma,g;\varphi)+\mathcal{L}_{\mathrm{int}}(\psi,\gamma,g;\varphi)\,,\\
	\label{Equation: free part of the Lagrangean density}
	\mathcal{L}_{\mathrm{free}}(\psi,\gamma,g;\varphi)&:=-\frac{\nu^2}{2}\langle\varphi,E\varphi\rangle\mu_\gamma\,,\\
	\label{Equation: interacting part of the Lagrangean density}
	\mathcal{L}_{\mathrm{int}}(\psi,\gamma,g,\varphi)&:=\mathcal{L}_{\mathrm{H}}(\psi,\gamma,g)+
	\bigg[
	\nu g(\varphi,Q(\psi))+
	\frac{\nu^2}{2}h(\mathrm{Riem}(\varphi,\mathrm{d}\psi)\varphi,\mathrm{d}\psi)
	\bigg]\mu_\gamma\,.
\end{align}
As the notation suggests, we interpret $\mathcal{L}_{\mathrm{free}}$ as the Lagrangian density of a free field theory while $\mathcal{L}_{\mathrm{int}}$ is interpreted as an interacting part. 

One has to keep in mind that such subdivision is arbitrary and our choice is dictated by the fact that the dynamics encoded in $\mathcal{L}_{\mathrm{free}}$ is ruled by the elliptic operator $E$. According to Proposition \ref{Prop: ELCFT}, we can associate to it an Euclidean locally covariant theory $\mathcal{A}:\mathsf{Bkg}\to\mathsf{Alg}$. At the same time, to $\mathcal{L}_{\mathrm{int}}$ we can associate a locally covariant observable as per Definition \ref{Def:nat_transf} with the following procedure.

Consider a family of Wick powers $\Phi^\bullet$ as per Definition \ref{Definition: Wick power associated with a locally covariant observable} and, starting from $\mathcal{L}_{\mathrm{int}}$, define the following natural transformation which we indicate for simplicity as $\mathcal{L}_{\mathrm{int}}[\Phi^\bullet]:C^\infty_{\mathrm{c}}\to\mathcal{A}$:
\begin{align}\label{Equation: locally covariant interacting Lagrangean density for a fixed family of Wick powers}
	\mathcal{L}_{\mathrm{int}}[\Phi^\bullet][N;b](f):=
	\mathcal{L}_{\mathrm{H}}[N;b](f)\;1_{\mathcal{A}[N;b]}+
	\nu\Phi[N;b](fQ(\psi)\mu_\gamma)+
	\frac{\nu^2}{2}\Phi^2[N;b](f\theta[N;b]\mu_\gamma)\,,
\end{align}
where $(N;b)\in\operatorname{Obj}(\mathsf{Bkg})$, $f\in C^\infty_{\mathrm{c}}(\Sigma)$.
The covariant functor $C^\infty_{\mathrm{c}}\colon\mathsf{Bkg}\to\mathsf{Alg}$ has been defined in definition \ref{Definition: action of the functor Gammak on Obj and Arr}.
In addition $\theta[N;b]\in\Gamma(\mathrm{S}^{\otimes 2}\psi^*TM)$ is locally defined by
\begin{align}\label{Equation: convenient tensor}
	\theta[N;b]_{cd}(x):=\gamma^{\alpha\beta}(x)g_{\ell b}(\psi(x))R_{cad}^{\phantom{cad}{\ell}}[g](\mathrm{d}\psi)^a_{\alpha}(\mathrm{d}\psi)^b_{\beta}\,, 
\end{align}
while $\mathcal{L}_{\mathrm{H}}[N;b](f):=\int_\Sigma d\mu_\gamma\, f\mathcal{L}_{\mathrm{H}}(\psi,\gamma,g)$, $\mathcal{L}_{\mathrm{H}}$ being the Lagrangian in \eqref{Equation: harmonic Lagrangean density}.

Within the perturbative approach to Euclidean field theory one defines the (generating function) partition function as the natural transformation $\mathcal{Z}[\Phi^\bullet]:C^\infty_{\mathrm{c}}\to\mathcal{A}$ 
\begin{align}\label{Equation: partition function}
	\mathcal{Z}[\Phi^\bullet][N;b](f):=\exp_{\mathcal{A}}\bigg[z\mathcal{L}_{\mathrm{int}}[\Phi^k][N;b](f)\bigg]:=
	\sum_{n\geq 0}\frac{z^n}{n!}\mathcal{L}_{\mathrm{int}}[\Phi^k][N;b](f)^n\in\mathcal{A}[N;b][[z]]\,,
\end{align}
where the exponential series is considered as a formal power series in the formal parameter $z$ and the product is the one defined by $\mathcal{A}[N;b]$, for all $[N;b]\in\mathrm{Obj}(\mathsf{Bkg})$.
Out of $\mathcal{Z}[\Phi^\bullet]$ one can build pertubatively the above mentioned expectation value of any locally covariant observable $\mathcal{O}$.
A complete discussion of the structural properties of the perturbative approach to Euclidean algebraic field theories is beyond the scope of this paper and we postpone it to a forthcoming work \cite{Dappiaggi-Drago-Rinaldi-19}.

\paragraph{Application to Ricci flow.}
As explained at the beginning of this section, once it has been fixed
a family of Wick powers $\Phi^\bullet$
-- {\it cf.} Definition \ref{Definition: Wick power associated with a locally covariant observable} -- we may define a corresponding locally covariant Lagrangian density $\mathcal{L}_{\mathrm{int}}[\Phi^\bullet]$ -- see equation \eqref{Equation: locally covariant interacting Lagrangean density for a fixed family of Wick powers} -- and the associated partition function $\mathcal{Z}[\Phi^\bullet]$ as per equation \eqref{Equation: partition function}.
The key point consists in realizing that different choices of $\Phi^\bullet$ yield different explicit forms for $\mathcal{L}_{\mathrm{int}}[\Phi^\bullet]$, which, on account of theorems \ref{Theorem: ambiguities for Wick powers} and \ref{Theorem: structural form of ambiguities},  differ only by a linear combination of terms proportional to certain locally covariant quantum fields -- see equation \ref{Equation: ambiguities for Wick powers}.

In the framework of the renormalization group approach such ambiguity is studied by choosing, for each real $\lambda>0$, $\widehat{\Phi}^\bullet:=S_\lambda\Phi^\bullet$, see Remark \ref{Remark: scaled Wick powers}, in particular Equation \eqref{eq: rescaled Wick}.
As a consequence we consider an interacting Lagrangean density $\mathcal{L}_{\mathrm{int}}[S_\lambda\Phi^\bullet]$ which, by Theorem \ref{Theorem: ambiguities for Wick powers} can be written as
$\mathcal{L}_{\mathrm{int}}[S_\lambda\Phi^\bullet]=\mathcal{L}_{\mathrm{int}}[\Phi^\bullet]+\mathcal{R}_\lambda[\Phi^\bullet]\,,$
where $\mathcal{R}_\lambda[\Phi^\bullet]$ is a suitable remainder. The main idea behind the renormalization group approach is that $\mathcal{R}_\lambda[\Phi^\bullet]$ can be reabsorbed in the full Lagrangian density, namely, for every $\lambda>0$, there exists a natural transformation, dubbed renormalized Lagrangian at the scale $\lambda$, $\mathcal{L}_{{\mathrm{int}},\lambda}[\Phi^\bullet]:C^\infty_{\mathrm{c}}\to\mathcal{A}$ such that
\begin{align}\label{Equation: renormalized interacting Lagrangean density}
	\mathcal{L}_{\mathrm{int}}[S_\lambda\Phi^\bullet]
	=\mathcal{L}_{\mathrm{int}}[\Phi^\bullet]+\mathcal{R}_\lambda[\Phi^\bullet]	
	=:\mathcal{L}_{\mathrm{int},\lambda}[\Phi^\bullet]\,.
\end{align}

In what follows we will compute explicitly the renormalized Lagrangian density at scale $\lambda>0$, $\mathcal{L}_\lambda[\Phi^\bullet]:=\mathcal{L}_{\mathrm{free}}[\Phi^\bullet]+\mathcal{L}_{\mathrm{int},\lambda}[\Phi^\bullet]$ -- see Theorem \ref{Theorem: renormalized Lagrangean density for Ricci flow application}.
For concreteness we will work with the family of Wick powers defined in Example \ref{Example: Wick powers exploited for Ricci flow}, though any well-defined, different choice can be made without affecting the final result.
Eventually we comment how the result of Theorem \ref{Theorem: renormalized Lagrangean density for Ricci flow application} are linked to the derivation of the Ricci flow \cite{Carfora-17} -- see Lemma \ref{lemma: derivation of the Ricci flow equation}.

\begin{theorem}\label{Theorem: renormalized Lagrangean density for Ricci flow application}
	Let $\wick{\Phi^\bullet}$
	be the family of Wick powers	as per Example \ref{Example: Wick powers exploited for Ricci flow}
	and let $\mathcal{L}_{\mathrm{int}}[\Phi^\bullet]$ be the locally covariant interacting Lagrangian density as per Equation \eqref{Equation: locally covariant interacting Lagrangean density for a fixed family of Wick powers}.
	For all $\lambda>0$, let $\mathcal{L}_{\mathrm{int}}[S_\lambda\Phi^\bullet]$ be the counterpart of $\mathcal{L}_{\mathrm{int}}[\Phi^\bullet]$ -- defined in \eqref{Equation: renormalized interacting Lagrangean density} -- constructed out of the rescaled natural transformation  $S_\lambda\Phi^\bullet$.
	Then it holds
	\begin{align}
	\mathcal{L}_{\mathrm{int},\lambda}[\Phi^\bullet][N;b](f)=
	\mathcal{L}_{\mathrm{H},\lambda}[N;b](f)\;1_{\mathcal{A}[N;b]}+
	\nu\Phi[N;b](fQ(\psi)\mu_\gamma)+
	\frac{\nu^2}{2}\Phi^2[N;b](f\theta[N;b]\mu_\gamma)\,,
	\end{align}
	where $f\in C^\infty_{\mathrm{c}}(\Sigma)$ and $[N;b]=(\Sigma,M;\psi,\gamma,g)$ while  
	\begin{equation}\label{eq:metric_rescaled}
	\mathcal{L}_{\mathrm{H},\lambda}[N;b](f):=\int_\Sigma f\operatorname{tr}_\gamma(\psi^*g_{\log\lambda})\mu_\gamma\,,\qquad
	(g_{\log\lambda})_{ab}(x)=g_{ab}(x)-\nu^2\log(\lambda) R_{ab}[g](x)\,.
	\end{equation}
\end{theorem}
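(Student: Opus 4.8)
The plan is to unwind the definition \eqref{Equation: renormalized interacting Lagrangean density}, which sets $\mathcal{L}_{\mathrm{int},\lambda}[\Phi^\bullet]=\mathcal{L}_{\mathrm{int}}[S_\lambda\Phi^\bullet]$, and then to compute the right-hand side directly by substituting the rescaled family $S_\lambda\Phi^\bullet$ into the explicit expression \eqref{Equation: locally covariant interacting Lagrangean density for a fixed family of Wick powers}. Three summands appear. The harmonic term $\mathcal{L}_{\mathrm{H}}[N;b](f)\,1_{\mathcal{A}[N;b]}$ carries no Wick power, hence is untouched by the replacement $\Phi^\bullet\to S_\lambda\Phi^\bullet$. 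The linear term $\nu\Phi[N;b](fQ(\psi)\mu_\gamma)$ is likewise untouched because $S_\lambda\Phi=\Phi$, as recorded in Example \ref{Exam: scaling} ($\Phi$ has engineering dimension $0$). The only piece that changes is the quadratic term $\tfrac{\nu^2}{2}\Phi^2[N;b](f\theta[N;b]\mu_\gamma)$. One should keep in mind that only the Wick power is rescaled here: the test section $f\theta[N;b]\mu_\gamma$ and the volume form $\mu_\gamma$ are still built from $b=(\psi,\gamma,g)$, and since $b_\lambda=(\psi,\lambda^{-2}\gamma,g)$ shares the same $\psi$ the spaces $\mathrm{S}\Gamma_{\mathrm{c}}^{2,1}[N;b]$ and $\mathrm{S}\Gamma_{\mathrm{c}}^{2,1}[N;b_\lambda]$ coincide, so the substitution makes sense.

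For the quadratic term I would invoke the second computation of Example \ref{Exam: scaling}, namely $S_\lambda\Phi^2=\Phi^2+(\log\lambda)\,\mathcal{V}$, where $\mathcal{V}$ is the degree-$0$ locally covariant observable with $\mathcal{V}[N;b](\omega_1)=-2\int_\Sigma\langle g^\sharp,\omega_1\rangle$; this identity rests in turn on the scaling behaviour $[W_{P,\lambda}]^{bc}=[W_P]^{bc}-2g^{bc}\log\lambda$ of the coinciding-point limit of the Hadamard remainder, \textit{cf.} Remark \ref{Remark: different local representation of a parametrix under scaling} and Proposition \ref{Proposition: coincinding point limit of scaled and non-scaled Hadamard parametrix}. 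Evaluating $\mathcal{V}$ on $\omega_1=f\theta[N;b]\mu_\gamma$ yields $-2\int_\Sigma f\,g^{cd}\theta[N;b]_{cd}\,\mu_\gamma$, so the whole matter reduces to identifying the scalar $g^{cd}\theta[N;b]_{cd}$. Inserting the local form \eqref{Equation: convenient tensor} and contracting the two metric factors with the curvature tensor, $g^{cd}g_{\ell b}R_{cad}^{\phantom{cad}\ell}[g]=g^{cd}R_{cadb}[g]=R_{ab}[g]$ by the definition of the Ricci tensor together with the symmetries of the Riemann tensor; hence $g^{cd}\theta[N;b]_{cd}=\gamma^{\alpha\beta}R_{ab}[g]\,(\mathrm{d}\psi)^a_\alpha(\mathrm{d}\psi)^b_\beta=\operatorname{tr}_\gamma(\psi^*R[g])$.

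Assembling the pieces, $\tfrac{\nu^2}{2}(S_\lambda\Phi^2)[N;b](f\theta[N;b]\mu_\gamma)=\tfrac{\nu^2}{2}\Phi^2[N;b](f\theta[N;b]\mu_\gamma)-\nu^2\log\lambda\big(\int_\Sigma f\operatorname{tr}_\gamma(\psi^*R[g])\mu_\gamma\big)1_{\mathcal{A}[N;b]}$, and the last, field-independent summand combines with $\mathcal{L}_{\mathrm{H}}[N;b](f)1_{\mathcal{A}[N;b]}=\big(\int_\Sigma f\operatorname{tr}_\gamma(\psi^*g)\mu_\gamma\big)1_{\mathcal{A}[N;b]}$ into $\big(\int_\Sigma f\operatorname{tr}_\gamma(\psi^*(g-\nu^2\log\lambda\,R[g]))\mu_\gamma\big)1_{\mathcal{A}[N;b]}=\mathcal{L}_{\mathrm{H},\lambda}[N;b](f)1_{\mathcal{A}[N;b]}$, precisely because $(g_{\log\lambda})_{ab}=g_{ab}-\nu^2\log(\lambda)R_{ab}[g]$. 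This gives the asserted formula, the naturality of the resulting transformation being inherited from the naturality of $\mathcal{L}_{\mathrm{H}}\cdot 1_{\mathcal{A}}$, $\Phi$ and $\Phi^2$ established in Section \ref{Sec: LCO}. The only genuinely computational step — and the place demanding care — is the curvature contraction above: one must check the index placement in \eqref{Equation: convenient tensor} (so that $\theta[N;b]$ is indeed valued in $\mathrm{S}^{\otimes 2}\psi^*T^*M$) and track the sign conventions for the Riemann and Ricci tensors so that the trace produces $+R_{ab}[g]$, hence the sign $-\nu^2\log(\lambda)$ appearing in \eqref{eq:metric_rescaled}; everything else is a rearrangement of the three summands.
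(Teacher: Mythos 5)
Your proposal is correct and follows essentially the same route as the paper: the first two summands are untouched, the quadratic term acquires the correction $-\nu^2\log(\lambda)\int_\Sigma\gamma^{\alpha\beta}R_{ab}[g](\mathrm{d}\psi)^a_\alpha(\mathrm{d}\psi)^b_\beta f\mu_\gamma$ via the shift $[W_{P,\lambda}]^{bc}=[W_P]^{bc}-2g^{bc}\log\lambda$ of Proposition \ref{Proposition: coincinding point limit of scaled and non-scaled Hadamard parametrix}, and the contraction $g^{cd}\theta[N;b]_{cd}=\operatorname{tr}_\gamma(\psi^*\mathrm{Ric}[g])$ lets the correction be reabsorbed into $\mathcal{L}_{\mathrm{H},\lambda}$. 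The only cosmetic difference is that the paper phrases the rescaling as evaluating the whole expression at $b_\lambda$ (so that $\theta[N;b_\lambda]\mu_{\gamma_\lambda}=\theta[N;b]\mu_\gamma$ by scale invariance), whereas you rescale only the Wick power; the two readings agree.
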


\begin{proof}
	Let $[N;b]=(\Sigma,M;\psi,\gamma,g)\in\operatorname{Obj}(\mathsf{Bkg})$, $f\in C^\infty_{\mathrm{c}}(\Sigma,M)$, $P\in\operatorname{Par}[N;b]$ and $\varphi\in\Gamma(\psi^*TM)$.
	Recalling Definition \ref{Definition: scaled locally covariant observable} as well as equation \eqref{Equation: definition of Wick powers exploited for Ricci flow} -- see Example \ref{Example: Wick powers exploited for Ricci flow} -- it holds
	\begin{align}\label{Equation: useful equation into proof of main theorem 1}
	\mathcal{L}_{\mathrm{int}}[S_\lambda\Phi^\bullet][N;b]=\mathcal{L}_\mathrm{int}[\Phi^\bullet][N;b_\lambda]\,.
	\end{align}

	Recalling Proposition \ref{Proposition: coincinding point limit of scaled and non-scaled Hadamard parametrix} and Remark \ref{Remark: different local representation of a parametrix under scaling}, we can use the local Hadamard representation of the parametrix $P=H+W$ for $E$ to realize that, in a geodesic neighbourhood of any point $x\in\Sigma$, $P=H_\lambda+W_\lambda$ where $[W_{P,\lambda}]^{ab}(x):=W^{ab}_\lambda(x,x)=[W_P]^{ab}(x)-2\log(\lambda)g^{ab}(\psi(x))$.
	
	Using \eqref{Equation: definition of Wick powers exploited for Ricci flow}, the first two terms in $\mathcal{L}_{\mathrm{int}}$ in \eqref{Equation: locally covariant interacting Lagrangean density for a fixed family of Wick powers} remain unchanged because they are respectively constant and linear in $\varphi$. On the contrary, the third term yields
	\begin{align}
	\nonumber
	\frac{\nu^2}{2}\Phi^2[N;b_\lambda][f\,\theta[N;b_\lambda]\mu_{\gamma_{\lambda}},P,\varphi] &=
	\frac{\nu^2}{2}\Phi^2[N;b][f\,\theta[N;b]\mu_\gamma,P,\varphi]\\
	\label{Equation: useful equation into proof of main theorem 2}
	&-
	\nu^2\log(\lambda)\int_\Sigma\gamma^{\alpha\beta}R_{ab}[g](\mathrm{d}\psi)^a_{\alpha}(\mathrm{d}\psi)^b_{\beta}f\mu_\gamma\,,
	\end{align}
	where we used \eqref{Equation: convenient tensor} for $\theta[N;b]$. Inserting Equation \eqref{Equation: useful equation into proof of main theorem 2} in \eqref{Equation: useful equation into proof of main theorem 1}, the sought result descends.
\end{proof}
\begin{remark}\label{Remark: on dependence of Ricci flow w.r.t. Wick power family}
	Notice that the results of Theorem \ref{Theorem: renormalized Lagrangean density for Ricci flow application} depend on the particular choice for $\wick{\Phi^\bullet}$.
	As a matter of fact Theorem \ref{Theorem: ambiguities for Wick powers} entails that any other choice, say $\Phi^\bullet$ would be so that $\Phi^2=\wick{\Phi^2}+C_2$, being $C_2$ a locally covariant quantum field which scales almost homogeneously with degree zero -- \textit{cf.} Theorem \ref{Theorem: structural form of ambiguities}.
	As a by product, Equation \eqref{Equation: useful equation into proof of main theorem 2} holds true also for $\Phi^2$ if and only if $C_2$ scales \textit{exactly} homogeneously with degree zero -- that is, if and only if $C_2$ is invariant under scaling.
	Nevertheless, theorems \ref{Theorem: ambiguities for Wick powers}-\ref{Theorem: structural form of ambiguities} allow us to control the variation of equation \eqref{Equation: useful equation into proof of main theorem 2} with respect to the chosen family of Wick powers $\Phi^\bullet$.
\end{remark}
\begin{lemma}[{\em Ricci flow}]\label{lemma: derivation of the Ricci flow equation}
	Under the assumptions of Theorem \ref{Theorem: renormalized Lagrangean density for Ricci flow application}, setting $\lambda:=e^{2\tau}$, the corresponding metric $g(\tau):=g_{2\tau}$ as per Equation \eqref{eq:metric_rescaled} satisfies
	\begin{align}\label{Equation: Ricci flow equation}
	\frac{\mathrm{d}}{\mathrm{d}\tau}g(\tau)=-2\nu^2\mathrm{Ric}[g]=-2\nu^2\mathrm{Ric}[g(\tau)]+O(\nu^3)\,.	
	\end{align}
\end{lemma}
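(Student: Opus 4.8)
The plan is to read off the flow equation directly from the closed form of the renormalised harmonic Lagrangian obtained in Theorem \ref{Theorem: renormalized Lagrangean density for Ricci flow application}. Concretely, equation \eqref{eq:metric_rescaled} exhibits the one-parameter family of target metrics $(g_{\log\lambda})_{ab}(x)=g_{ab}(x)-\nu^2\log(\lambda)\,R_{ab}[g](x)$, where $g=g(0)$ is the fixed background metric entering the Sigma model. First I would substitute the reparametrisation $\lambda=e^{2\tau}$, so that $\log\lambda=2\tau$ and $g(\tau)_{ab}(x)=g_{ab}(x)-2\nu^2\tau\,R_{ab}[g](x)$. Since the right-hand side is affine in $\tau$, differentiation is immediate and yields $\frac{\mathrm{d}}{\mathrm{d}\tau}g(\tau)_{ab}=-2\nu^2 R_{ab}[g]$, which is precisely the first equality in \eqref{Equation: Ricci flow equation}. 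For $|\tau|$ small --- equivalently, for $\nu$ small on any fixed compact $\tau$-interval --- the tensor $g(\tau)$ stays positive definite and hence is a genuine Riemannian metric, so the statement is not vacuous and $g(\tau)$ defines an honest curve in the space of target metrics.

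For the second equality I would invoke the standard fact that the Ricci tensor is a quasilinear second-order differential operator in the metric, hence depends smoothly on $g$. Applied to the perturbation $h:=g(\tau)-g=-2\nu^2\tau\,\mathrm{Ric}[g]$, which is of order $\nu^2$, the Taylor expansion of $\mathrm{Ric}$ about $g$ gives $\mathrm{Ric}[g(\tau)]=\mathrm{Ric}[g]+O(\nu^2)$. Substituting this into the first equality, $\frac{\mathrm{d}}{\mathrm{d}\tau}g(\tau)=-2\nu^2\,\mathrm{Ric}[g]=-2\nu^2\,\mathrm{Ric}[g(\tau)]+O(\nu^4)=-2\nu^2\,\mathrm{Ric}[g(\tau)]+O(\nu^3)$, which is the claimed equation \eqref{Equation: Ricci flow equation}: the perturbative renormalisation group flow, truncated at first order in $\nu^2$, coincides with Hamilton's Ricci flow up to higher-order corrections.

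There is essentially no hard analytic obstacle here once Theorem \ref{Theorem: renormalized Lagrangean density for Ricci flow application} is available; the substance of the result lies in that theorem and in the classification of Wick-power ambiguities (Theorems \ref{Theorem: ambiguities for Wick powers}--\ref{Theorem: structural form of ambiguities}), which guarantee that the logarithmic counterterm multiplying the scale is fixed and covariantly constructed. The only point requiring a little care is bookkeeping the two distinct roles played by $g$: the argument of $\mathrm{Ric}$ in \eqref{eq:metric_rescaled} is the fixed background metric, whereas \eqref{Equation: Ricci flow equation} is phrased in terms of the running metric $g(\tau)$, and reconciling the two is exactly the $O(\nu^2)$ Taylor step above. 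I would also record --- along the lines of Remark \ref{Remark: on dependence of Ricci flow w.r.t. Wick power family} --- that replacing $\wick{\Phi^\bullet}$ by any other admissible family of Wick powers shifts $g(\tau)$ only by a scaling-invariant, covariantly constructed $O(\nu^2)$ tensor, so the first-order flow equation, and hence the identification with the Ricci flow, is independent of that choice.
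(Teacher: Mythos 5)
Your argument is correct and follows essentially the same route as the paper: differentiate the affine-in-$\tau$ family $g(\tau)_{ab}=g_{ab}-2\nu^2\tau R_{ab}[g]$ obtained from \eqref{eq:metric_rescaled} under $\lambda=e^{2\tau}$, then trade $\mathrm{Ric}[g]$ for $\mathrm{Ric}[g(\tau)]$ using that their difference is of higher order in $\nu$. Your version is slightly more explicit about the Taylor step (noting the error is in fact $O(\nu^4)$) and about positive-definiteness of $g(\tau)$, but there is no substantive difference from the paper's proof.
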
	
\begin{proof}	
	Considering Equation \eqref{eq:metric_rescaled} and recalling the approximation made in Section \ref{Section: Geometrical setting},
	\begin{align*}
	g^{ab}_{\log\lambda}= g^{ab}-\nu^2\log(\lambda) R^{ab}[g]+O(\nu^3)\,,\qquad
	\nu^2\log(\lambda)R_{ab}[g_{\log\lambda}]
	=\nu^2\log(\lambda)R_{ab}[g]+O(\nu^3)\,.
	\end{align*}	
	Neglecting $O(\nu^3)$-contributions the previous equation leads to the wanted Ricci flow equation for the renormalized metric $g(\tau)$.
\end{proof}

\begin{remark}\label{Remark: higher order corrections to Ricci flow}
	It appears clear that the above derivation of the Ricci flow equation \eqref{Equation: Ricci flow equation} is linked to the expansion in powers of $\nu$ made in the previous Section \ref{Section: Geometrical setting}.
	It is also possible to consider an higher order expansion for the Lagrangian density \eqref{Equation: second order expansion of harmonic Lagrangean}, which leads to a corresponding improved Ricci flow equation.
	As an example, the expansion up to $o(\nu^4)$ leads to the so-called Ricci flow equation at two loops, also known as $RG-2$ flow \cite{Carfora-Guenther-18}.
	It is noteworthy that, thanks to Theorem \ref{Theorem: ambiguities for Wick powers}, the present framework can be used to obtain an analogous of Theorem \ref{Theorem: renormalized Lagrangean density for Ricci flow application} from which the higher order corrections to the Ricci flow can be explicitly computed. Yet, in this work, we refrain from providing a detailed computation, which follows the lines of the proof of Theorem  \ref{Theorem: renormalized Lagrangean density for Ricci flow application}.
\end{remark}
\begin{remark}\label{Remark: comments on the off-shell psi-independence}
	We stress that, in our derivation of the Ricci flow equation -- see Remark \ref{lemma: derivation of the Ricci flow equation} -- as well as in the proof of all the results of the previous Sections, we only assume that $\psi\in C^\infty(\Sigma;M)$.
	In particular, we do not require $\psi$ to be harmonic and the results of Theorem \ref{Theorem: renormalized Lagrangean density for Ricci flow application} and Remark \ref{lemma: derivation of the Ricci flow equation} do not depend on $\psi$.
	Stated differently, the results of this paper hold true also considering \textit{off-shell} background configuration $\psi$, rather that on-shell (harmonic) background configurations.
\end{remark}

\appendix

\section{Hadamard expansion for the parametrix of $E$}\label{Appendix: Hadamard expansion for the parametrix of E}

Goal of this appendix is to give a finer description of the local structure of a parametrix $P$ associated with the elliptic operator $E$, introduced in Equation \eqref{Equation: definition of the elliptic operator}. Let $[N;b]=(\Sigma,M;\psi,\gamma,g)\in\mathrm{Obj}(\mathsf{Bkg})$ be arbitrary but fixed. In the following, we will be considering convex, geodesic neighbourhoods of $\Sigma$, but at the same time we will be concerned about their image under the action of $\psi$ which is smooth, but not necessarily proper. Hence, whenever we consider convex, geodesic neighbourhoods of a point, we are implicitly constructing them as follows: For any $x\in\Sigma$, consider $\psi(x)\in M$ and any convex, geodesic neighbourhood $U\subset M$ centred at this point. Being $\psi$ smooth, $\psi^{-1}(U)$ is an open subset of $\Sigma$ centred at $x$. If this is not a convex, geodesic neighbourhood, then consider an open subset, which we identify with $\mathcal{O}$, which has this property. In addition $\psi(\mathcal{O})$ is a subset of $U$ and hence any two points therein are connected by a unique geodesic of $(M,g)$.

\vskip .2cm

\noindent We summarize our results in the following proposition:

\begin{proposition}\label{Proposition: coincinding point limit of scaled and non-scaled Hadamard parametrix}
	Let $(N;b)=(\Sigma,M,\gamma,g,\psi)\in\operatorname{Obj}(\mathsf{Bkg})$ and let $E\colon\Gamma(\psi^*TM)\to\Gamma(\psi^*T^*M)$ be the elliptic operator defined in \eqref{Equation: definition of the elliptic operator}.
	For $\lambda>0$ let $H, H_\lambda\in\mathrm{S}\Gamma_{\mathrm{c}}(\psi^*T^*M^{\boxtimes 2})'$ be the Hadamard parametrices associated with background data $(N;b)$ and $(N;b_\lambda)$ respectively -- \textit{cf.} Remark \ref{Remark: local Hadamard representation of parametrices} and Definition \ref{Remark: definition of scaling in BkgG}. It holds
	\begin{align}\label{Equation: coinciding point limit of scaled and non-scaled Hadamard parametrix}
		H^{bc}_\lambda(x)-H^{bc}(x)=-2\log(\lambda)V^{bc}(x)\,,
	\end{align}
	where $V\in\Gamma(\mathrm{S}^{\otimes 2}\psi^*T\mathcal{O})$ is constructed out the background data $(\psi,\gamma,g)$ -- \textit{cf.} equation \eqref{Equation: hierarchy of equation for Vns} -- and it satisfies $[V]^{bc}(x):=g^{bc}(\psi(x))$.
\end{proposition}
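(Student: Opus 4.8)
The strategy is to show that the only effect of the rescaling $(N;b)\to(N;b_\lambda)$ on the local Hadamard representation \eqref{Equation: local expression for the parametrix} is to replace $\sigma$ by $\sigma_\lambda$ in the argument of the logarithm, the Hadamard coefficient $V$ itself being unchanged. First I would record how the ingredients of the construction \eqref{Equation: ansaltz for Hadamard parametrix}--\eqref{Equation: hierarchy of equation for Vns} behave under $\gamma\to\gamma_\lambda=\lambda^{-2}\gamma$, keeping in mind that by \eqref{Equation: engineer dimension of psi,varphi,g} the target metric is untouched ($\mathrm{d}_g=D'-2=0$ for $D'=2$), hence so are $\psi^*g$, the push-forward $\mathrm{d}\psi$ and the Riemannian data of $M$, and that the Levi-Civita connection of $\Sigma$ is invariant under a constant rescaling of $\gamma$. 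Thus $\gamma^{\alpha\beta}\to\lambda^{2}\gamma^{\alpha\beta}$, $\mu_\gamma\to\lambda^{-2}\mu_\gamma$, $\Delta_\gamma\to\lambda^{2}\Delta_\gamma$, the halved squared geodesic distance scales as $\sigma\to\sigma_\lambda=\lambda^{-2}\sigma$, and by \eqref{Equation: definition of the elliptic operator} and Lemma \ref{Lem:E is elliptic} the operator scales as $E\to E_\lambda=\lambda^{2}E$, while the reference length $\ell_H$ is a fixed parameter and does not scale. Consequently
\[
\log\frac{\sigma_\lambda(x,x')}{\ell_H^{2}}=\log\frac{\sigma(x,x')}{\ell_H^{2}}-2\log\lambda .
\]

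The core of the proof is the identity $V_\lambda=V$, which I would deduce from the uniqueness of the Hadamard coefficient. Since $V$ is smooth, the section $V\log(\sigma_\lambda/\ell_H^{2})=V\log(\sigma/\ell_H^{2})-2\log(\lambda)\,V$ differs from $H=V\log(\sigma/\ell_H^{2})$ by the \emph{smooth} section $-2\log(\lambda)\,V$; therefore, using $E_\lambda=\lambda^{2}E$, that $E$ maps smooth sections to smooth sections, and the normalization $E[H]\equiv\delta_\gamma$ modulo smooth which follows from \eqref{Equation: defining property of parametrix} and \eqref{Equation: local expression for the parametrix},
\[
E_\lambda\big[V\log(\sigma_\lambda/\ell_H^{2})\big]\equiv\lambda^{2}E[H]\equiv\lambda^{2}\delta_\gamma=\delta_{\gamma_\lambda}
\]
modulo smooth, the last equality because $\mu_{\gamma_\lambda}=\lambda^{-2}\mu_\gamma$. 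Hence $V\log(\sigma_\lambda/\ell_H^{2})$ is a Hadamard parametrix for $E_\lambda$ whose leading coefficient $[V]^{bc}=g^{bc}\circ\psi$ is the correct one — this normalization is the $n=0$ entry of the hierarchy \eqref{Equation: hierarchy of equation for Vns}, dictated by the principal symbol of Lemma \ref{Lem:E is elliptic}, and it is stable precisely because $g_\lambda=g$. By uniqueness of the solution of \eqref{Equation: hierarchy of equation for Vns} with prescribed leading term it must coincide with $H_\lambda$, i.e. $V_\lambda=V$. The same conclusion can also be reached by solving the hierarchy directly: expanding $V_\lambda=\sum_{n\ge0}V_{\lambda,n}\sigma_\lambda^{\,n}$, the transport operator on the left of each equation is scale invariant since $\gamma_\lambda^{\alpha\beta}\partial_\alpha\sigma_\lambda=\gamma^{\alpha\beta}\partial_\alpha\sigma$, while the source carries the weight $\lambda^{2}$ of $E_\lambda$, so the ansatz $V_{\lambda,n}=\lambda^{2n}V_n$ solves it and gives again $V_\lambda=\sum_n\lambda^{2n}V_n(\lambda^{-2}\sigma)^{n}=V$.

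With $V_\lambda=V$ in hand the statement follows at once:
\[
H_\lambda^{bc}(x,x')=V^{bc}(x,x')\log\frac{\sigma_\lambda(x,x')}{\ell_H^{2}}=H^{bc}(x,x')-2\log(\lambda)\,V^{bc}(x,x'),
\]
so that $H_\lambda-H$ equals the smooth section $-2\log(\lambda)\,V$; passing to the coinciding point limit $x'\to x$ and using $[V]^{bc}(x)=g^{bc}(\psi(x))$ produces \eqref{Equation: coinciding point limit of scaled and non-scaled Hadamard parametrix}. I expect the only genuinely delicate point to be the equality $V_\lambda=V$: one must verify that the hierarchy \eqref{Equation: hierarchy of equation for Vns} really pins down each $V_n$ once $[V_0]$ is fixed — the standard transport-equation argument along the geodesics emanating from $x$, with the integration constants fixed by regularity at $x'=x$ — and keep track of the scaling weights throughout; everything else is routine bookkeeping.
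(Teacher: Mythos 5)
Your proposal is correct and follows essentially the same route as the paper: the paper's proof establishes the scaling behaviour $\sigma_\lambda=\lambda^{-2}\sigma$, $E_\lambda=\lambda^2 E$, observes that the $n=0$ transport equation \eqref{Equation: transport equation for V0} is scale invariant so that $V_{\lambda,0}=V_0$, and shows by induction on \eqref{Equation: transport equation for Vn} that $V_{\lambda,n}=\lambda^{2n}V_n$, whence the resummation gives $H_\lambda-H=-2\log(\lambda)V$ --- exactly the computation you present as your ``alternative'' route. Your primary phrasing via uniqueness of the Hadamard coefficients with prescribed leading term $[V_0]^{bc}=g^{bc}$ is just a repackaging of the same transport-hierarchy argument (and your bookkeeping of the scaling weights, including $\delta_{\gamma_\lambda}=\lambda^2\delta_\gamma$ from $\mu_{\gamma_\lambda}=\lambda^{-2}\mu_\gamma$, is accurate).
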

\begin{proof}
	Let $\mathcal{O}$ be a geodesically convex neighbourhood of $\Sigma$.
	We begin by recalling the construction of the so-called Hadamard parametrix associated to the restriction to $\mathcal{O}$ of $E$ on the background data $(N;b)$.
	This is defined as the bi-distribution $H\in\mathrm{S}\Gamma_{\mathrm{c}}(\psi^*T^*\mathcal{O}^{\boxtimes 2})'$ whose integral kernel reads \cite{Garabedian-98,Moretti-99a,Moretti-99b}
	\begin{align}\label{Equation: ansaltz for Hadamard parametrix}
		H^{bc}(x,x^\prime):=V^{bc}(x,x^\prime)\log\frac{\sigma(x,x^\prime)}{\ell_H^2}:=\sum_{n\geq 0}V_n^{bc}(x,x^\prime)\sigma(x,x^\prime)^n\log\frac{\sigma(x,x^\prime)}{\ell_H^2}\,,
	\end{align}
	where $\sigma(x,x^\prime)$ denotes the halved squared geodesic distance between $x,x^\prime\in\mathcal{O}$, while $\ell_H\in\mathbb{R}$ is an arbitrary reference length, which will play no r\^ole in the proof.
	Before focusing on the tensor coefficients $V_n^{bc}(x,x^\prime)$, observe that Equation \eqref{Equation: ansaltz for Hadamard parametrix} defines $H$ in terms of the so-called Hadamard expansion which is a formal power series in $\sigma$. Hence, with a slight abuse of notation, we have left implicit both the existence of a suitable cut-off which ensures convergence of \eqref{Equation: ansaltz for Hadamard parametrix} and the necessity or replacing $\sigma$ with a regularized counterpart $\sigma+i\epsilon$, which controls the singularity as $x=x^\prime$. Neither the cut-off nor the regularization will play a r\^ole in our analysis.

	We focus now on the remaining unknowns, the tensor coefficients $V_n^{bc}$ of \eqref{Equation: ansaltz for Hadamard parametrix}. Recalling that both $HE$ and $EH$ coincide with the identity operator up to smooth terms, it holds locally that
	\begin{align}
		\nonumber
		(EH)^c_a&=\sum_{n\geq 0}E(V_n)^c_{\phantom{c}a}\sigma^n\log\frac{\sigma}{\ell_H^2}\\ \nonumber
		&+\sum_{n\geq 0}\bigg[
		ng_{ab}V_n^{bc}\big(\Delta_\gamma\sigma+2(n-1)\big)+
		2ng_{ab}\gamma^{\alpha\beta}(\nabla^\psi V_n)^{bc}_{\phantom{bc}\alpha}(\mathrm{d}\sigma)_\beta
		\bigg]\sigma^{n-1}\log\frac{\sigma}{\ell^2_H}\\
		\label{Equation: explicit form of EH}
		&+\sum_{n\geq 0}\bigg[
		2g_{ab}\gamma^{\alpha\beta}(\nabla^\psi V_n)^{bc}_{\phantom{bc}\alpha}(\mathrm{d}\sigma)_\beta+g_{ab}V_n^{bc}\big(\Delta_\gamma\sigma-2+4n\big)
		\bigg]\sigma^{n-1}\,,
	\end{align}
	where we omitted for notational simplicity the explicit dependence on $(x,x^\prime)$ and where we exploited the identity $\gamma^{\alpha\beta}(\mathrm{d}\sigma)_\alpha(\mathrm{d}\sigma)_\beta=2\sigma$, see {\it e.g.} \cite{Poisson:2011nh}.
	To ensure that $EH-\operatorname{Id}_{\Gamma_{\mathrm{c}}(\psi^*T^*M)}\in\Gamma(\psi^*TM\boxtimes\psi^*T^*M)$, the coefficients multiplying $\log\sigma$ and $\sigma^{-1}$ ought to vanish. This leads to the following hierarchy of equations for $V_n^{bc}$:
	\begin{subequations}\label{Equation: hierarchy of equation for Vns}
		\begin{align}
		\label{Equation: transport equation for V0}
		2g_{ab}\gamma^{\alpha\beta}(\nabla^\psi V_0)^{bc}_{\phantom{bc}\alpha}(\mathrm{d}\sigma)_\beta+g_{ab}V_0^{bc}(\Delta_\gamma\sigma-2)&=0\\
		\label{Equation: transport equation for Vn}
		E(V_{n-1})^c_{\phantom{c}a}+2ng_{ab}\gamma^{\alpha\beta}(\nabla^\psi V_n)^{bc}_{\phantom{bc}\alpha}(\mathrm{d}\sigma)_\beta+ng_{ab}V_n^{bc}\big(\Delta_\gamma\sigma+2(n-1)\big)&=0\,.
		\end{align}
	\end{subequations}
	Notice that the latter is a system of transport equations which can be solved recursively once we provide initial conditions for the tensors $V_n^{bc}$.
	The customary choice for the initial data is to consider the limit $x\to x^\prime$ of equation \eqref{Equation: hierarchy of equation for Vns}.
	Denoting with $[A](x):=\lim_{x\to x^\prime}A(x,x^\prime)$ the coinciding point limit of a generic smooth bi-tensor -- \textit{cf.} remark \ref{Remark: on exterior tensor product and symmetrix sections} -- we get
	\begin{align}\label{Equation: initial data for system of transport equations for Vn}
	[E(V_0)^c_{\phantom{c}a}]+2[g_{ab}V_1^{bc}]=0\,,\qquad
	[E(V_{n-1})^c_{\phantom{c}a}]+2n^2[g_{ab}V_n^{bc}]=0\,,
	\end{align}
	where we used the identities
	\begin{align}
	[\sigma]=0\,,\qquad[(\mathrm{d}\sigma)_\alpha]=0\,,\qquad[(\nabla^\Sigma\circ\nabla^\Sigma\sigma)_{\alpha\beta}]=\gamma_{\alpha\beta}\,.
	\end{align}
	Notice that the equations in \eqref{Equation: initial data for system of transport equations for Vn} specify initial data for $V_n^{bc}$ for all $n\geq 1$, leaving us only with an arbitrariness in the choice of the initial datum for $V_0$. In order for $EP-\operatorname{Id}\,,PE-\operatorname{Id}\in\Gamma(\psi^*TM\boxtimes\psi^*T^*M)$, we fix
	\begin{align}\label{Equation: initial condition for V0}
	[V_0^{bc}]=g^{bc}\,.
	\end{align}
	
	We now consider the Hadamard parametrix $H_\lambda$ associated with $E$ and background data $(N;b_\lambda)$.
	Once again we have
	\begin{align*}
		H^{bc}_\lambda(x,x')=\sum_{n\geq 0}V_{\lambda,n}^{bc}(x,x')\sigma_\lambda(x,x')^n\log\sigma_\lambda(x,x')\,,
	\end{align*}
	where $\sigma_\lambda$ is the halved squared geodesic distance built out of the metric $\lambda^{-2}\gamma_{\alpha\beta}$.
	The smooth tensors $V_{\lambda,n}^{bc}$ satisfy the system \eqref{Equation: hierarchy of equation for Vns} with background data $(N;b_\lambda)$.
	Observe that equation \eqref{Equation: transport equation for V0} is invariant under scaling $\gamma_{\alpha\beta}\to\lambda^{-2}\gamma_{\alpha\beta}$ because $\sigma_\lambda=\lambda^{-2}\sigma$.
	Together with the initial conditions $[V_0]^{bc}=[V_{\lambda,0}]^{bc}=g^{bc}$ this entails $V_{\lambda,0}=V_0$.
	By induction it easily follows from the scaling behaviour of equations \eqref{Equation: transport equation for Vn} that $V_{\lambda,n}=\lambda^{2n}V_n$.
	Therefore
	\begin{align*}
		H_\lambda^{bc}-H^{bc}&=
		\sum_{n\geq 0}V_{\lambda,n}^{bc}\lambda^{-2n}\sigma^n\big(\log\sigma-2\log\lambda\big)-
		\sum_{n\geq 0}V_n^{bc}\sigma^n\log\sigma=
		-2\log(\lambda)V^{bc}\,.
	\end{align*}
	Using the initial condition \ref{Equation: initial condition for V0}, equation \eqref{Equation: coinciding point limit of scaled and non-scaled Hadamard parametrix} follows.
\end{proof}

\section{The Peetre-Slov\'ak theorem}\label{Appendix: Peetre-Slovak theorem}
In this section we recall succinctly the Peetre-Slov\'ak theorem as well as all ancillary definitions. For more details, we refer to \cite{Navarro-Sancho-14} and especially to \cite[Appendix A]{Khavkine:2014zsa}, to which this appendix is inspired.
In the following $E\stackrel{\pi_E}{\to} B, F\stackrel{\pi_F}{\to}B$ are smooth bundles over a smooth manifold $B$, while $J^rE$ denotes the $r$-jet bundle over $B$ for $r\in\mathbb{N}$ -- refer to \cite{Kolar-Michor-Slocvak-93} for definitions and properties.
\begin{Definition}\label{Definition: differential operator}
	A map $D\colon\Gamma(E)\to\Gamma(F)$ is a called a {\em differential operator} of globally bounded order $r\in\mathbb{N}$ if there exists a smooth map $d\colon J^rE\to F$ such that $\pi_F\circ d=\pi_{J^rE}$ and 
	\begin{align}\label{Equation: definition of differential operator of globally bounded order}
	D(\varepsilon)=d(j^r\varepsilon)\qquad\forall\varepsilon\in\Gamma(E)\,,
	\end{align}
	where $j^r\varepsilon\in\Gamma(J^rE)$ denotes the $r$-jet extension of $\varepsilon$.
\end{Definition}

\begin{Definition}
	A map $D\colon\Gamma(E)\to\Gamma(F)$ is called a {\em differential operator of locally bounded order} if for all $x_0\in B$ and for al $\varepsilon_0\in\Gamma(E)$, there exists
	\begin{enumerate}
		\item an open subset $U\subseteq B$ containing $x_0$ and with compact closure,
		\item an integer $r\in\mathbb{N}$, as well as a neighbourhood $Z^r\subseteq J^rE$ of $j^r\varepsilon_0(U)$ such that $\pi_{J^rE}Z^r=U$,
		\item a  smooth map $d\colon Z^r\to F$ such that $\pi_F\circ d=\pi_{J^rE}$
	\end{enumerate} 
   so that
	\begin{align}\label{Equation: definition of differential operator of locally bounded order}
	D(\varepsilon)(x)=d(j^r\varepsilon)(x)\,,
	\end{align}
	for all $x\in U$ and $\varepsilon\in\Gamma(E)$ with $j^r\varepsilon(U)\subseteq Z^r$.
\end{Definition}
The Peetre-Slov\'ak's Theorem gives a sufficient condition for a map $D\colon\Gamma(E)\to\Gamma(F)$ to be a differential operator of locally bounded order.

In addition recall that, denoting with $\pi_d\colon B\times\mathbb{R}^d\to B$ the canonical projection to $B$, the pull-back bundle $\pi_d^*E\stackrel{\pi_{\pi_d^*E}}{\to}B\times\mathbb{R}^d$ is the smooth bundle defined by
\begin{align}\label{Equation: pull-back bundle}
\pi^*E:=\{(s,x,e)\in\mathbb{R}^d\times B\times E|\;\pi_E(e)=\pi_d(s,x)\}\simeq\mathbb{R}^d\times E\,.
\end{align}
Denoting with $\pi_{d,E}$ the projection $\pi_{d,E}\colon\pi_d^*E\to E$, each smooth section $\zeta\in\Gamma(\pi_d^*E)$ induces a smooth family of sections $\{\zeta_s\}_{s\in\mathbb{R}^d}$ in $\Gamma(E)$ defined by $\zeta_s(x):=\pi_{d,E}\zeta((s,x))$ which, in turn, depends smoothly on the parameter $s\in\mathbb{R}^d$.
\begin{Definition}\label{Definition: smooth compactly supported d-dimensional family of variations}
	Let $d\in\mathbb{N}$ and let $\{\zeta_s\}_{s\in\mathbb{R}^d}$ be a smooth family of sections in $\Gamma(E)$ induced by a smooth section $\zeta\in\Gamma(\pi_d^*E)$.
	We say that $\{\zeta_s\}_{s\in\mathbb{R}^d}$ is a smooth compactly supported $d$-dimensional family of variations if there exists a compact $K\subseteq B$ such that $\zeta(s,x)=\zeta(s',x)$ for all $x\notin K$ and for all $s,s'\in\mathbb{R}^d$.
\end{Definition}
\begin{Definition}\label{Definition: weak regularity condition}
	A map $D\colon\Gamma(E)\to\Gamma(F)$ is called weakly-regular if, for all  $d\in\mathbb{N}$ and for all smooth compactly supported $d$-dimensional families of variations $\{\zeta_s\}_{s\in\mathbb{R}^d}$ -- see Definition \ref{Definition: smooth compactly supported d-dimensional family of variations} -- $\psi_s:=D\zeta_s$ is a smooth compactly supported $d$-dimensional family of variations.
\end{Definition}
\begin{theorem}[Peetre-Slov\'ak]\label{Theorem: Peetre-Slovak's theorem}
	Let $D\colon\Gamma(E)\to\Gamma(F)$ be a smooth map such that
	\begin{itemize}
		\item
		for all $\varepsilon\in\Gamma(E)$ and for all $x\in B$, $D\varepsilon(x)$ depends only on the germ of $\varepsilon$ at $x\in B$, i.e. $(D\varepsilon)(x)=(D\widetilde{\varepsilon})(x)$ for all $\widetilde{\varepsilon}\in\Gamma(E)$ which coincides with $\varepsilon$ in a neighbourhood of $x$;
		\item
		$D$ is weakly regular as per Definition \ref{Definition: weak regularity condition}.
	\end{itemize}
	Then $D$ is a differential operator of locally bounded order as per Definition \ref{Definition: differential operator}.
\end{theorem}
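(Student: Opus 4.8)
The plan is to prove this as a nonlinear version of Peetre's classical theorem, following the strategy of Slov\'ak, in three stages: a reduction to a purely local statement, the extraction of a \emph{finite order} near each $(x_0,\varepsilon_0)$, and the construction and smoothness of the factoring map. First I would use the germ-dependence hypothesis to reduce everything to a local problem: fixing a chart on $B$ and local trivialisations of $E$ and $F$ over a relatively compact open set $U\ni x_0$, the value $(D\varepsilon)(x)$ for $x\in U$ depends only on the restriction of $\varepsilon$ to a neighbourhood of $x$, so it suffices to produce, on such a $U$, an integer $r$, a neighbourhood $Z^r$ of $j^r\varepsilon_0(U)$ in $J^rE$, and a smooth map $d\colon Z^r\to F$ with $(D\varepsilon)(x)=d(j^r\varepsilon(x))$ whenever $j^r\varepsilon(U)\subseteq Z^r$. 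This is exactly the content of the notion of locally bounded order.

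The heart of the proof is the finite-order reduction, which I would carry out by contradiction. Suppose that near $(x_0,\varepsilon_0)$ no finite order works. Then one extracts a sequence of points $x_k\to x_0$ together with pairs of sections $\varepsilon_k,\widetilde\varepsilon_k$, both close to $\varepsilon_0$, which agree to order $k$ at $x_k$, that is $j^k_{x_k}\varepsilon_k=j^k_{x_k}\widetilde\varepsilon_k$, yet satisfy $(D\varepsilon_k)(x_k)\neq(D\widetilde\varepsilon_k)(x_k)$. Using bump functions supported in disjoint balls shrinking to $x_0$ together with a scale-adapted damping of the coordinate differences $\varepsilon_k-\widetilde\varepsilon_k$, I would assemble these data into a single smooth, compactly supported one-parameter family of variations $\{\zeta_s\}_{s\in\mathbb{R}}$ in the sense of Definition \ref{Definition: smooth compactly supported d-dimensional family of variations}, engineered so that the assignment $s\mapsto D\zeta_s$ cannot even be continuous at the limiting parameter, because the separation of the values $(D\varepsilon_k)(x_k)$ and $(D\widetilde\varepsilon_k)(x_k)$ survives no matter how fast the high-order jets are damped. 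This contradicts weak regularity (Definition \ref{Definition: weak regularity condition}), and hence a finite $r=r(U,\varepsilon_0)$ exists.

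With $r$ fixed on $U$, I would define the factoring map by $d(j^r_x\varepsilon):=(D\varepsilon)(x)$; well-definedness is precisely the finite-order property just established. It remains to check smoothness of $d$ on $Z^r$, for which I would invoke Boman's criterion: $d$ is smooth provided $d\circ c$ is smooth for every smooth curve $c$ into $Z^r$. Such a curve is a smoothly varying family of $r$-jets based at possibly moving points $x(s)$; since a jet at a \emph{single} point carries no integrability obstruction, I can realise it by a smooth family of sections $\varepsilon_s$, taking the Taylor polynomial encoded by $c(s)$ in the chart and multiplying by a fixed bump function, which depends smoothly on $s$. Weak regularity then makes $s\mapsto D\varepsilon_s$ a smooth family, so $s\mapsto d(c(s))=(D\varepsilon_s)(x(s))$ is smooth, and Boman's criterion yields smoothness of $d$, completing the identification of $D$ as a differential operator of locally bounded order.

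The main obstacle I expect is the concentration-and-gluing construction in the contradiction step: one must package infinitely many independently chosen bad pairs $(\varepsilon_k,\widetilde\varepsilon_k)$ into a \emph{single} globally smooth family of compact support, controlling simultaneously the supports (disjoint, shrinking to $x_0$), the smallness of all derivatives so that the family is genuinely smooth in the parameter, and yet the non-vanishing of the $D$-gap at every scale. Balancing smoothness of the assembled family against persistence of the separation under $D$ is the delicate quantitative core of the argument, and it is exactly here that weak regularity is used in an essential, non-formal way.
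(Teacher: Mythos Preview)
The paper does not actually prove Theorem \ref{Theorem: Peetre-Slovak's theorem}: Appendix \ref{Appendix: Peetre-Slovak theorem} merely recalls the statement and the relevant definitions, referring to \cite{Navarro-Sancho-14} and \cite[Appendix A]{Khavkine:2014zsa} for the proof. So there is no in-paper argument to compare against.

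That said, your outline is the standard Slov\'ak strategy and is essentially the one found in the references the paper cites: local reduction via germ-dependence, a contradiction argument to extract a finite order (gluing infinitely many ``bad'' pairs into a single smooth compactly supported family whose image under $D$ fails to vary smoothly, contradicting weak regularity), and then smoothness of the factoring map $d$ via Boman's theorem applied to smooth curves of jets realised by smooth families of sections. You have correctly identified the delicate point: in the gluing step one must simultaneously guarantee (i) disjoint shrinking supports, (ii) $C^\infty$-smallness of the assembled perturbation so that the resulting family is genuinely smooth in $s$, and (iii) persistence of the gap $(D\varepsilon_k)(x_k)\neq(D\widetilde\varepsilon_k)(x_k)$; this is exactly where the quantitative bookkeeping lives in Slov\'ak's original argument and in the exposition of \cite{Khavkine:2014zsa}. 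One small caveat worth making explicit in an actual write-up is that ``close to $\varepsilon_0$'' must be made precise in a way compatible with the jet neighbourhood $Z^r$ you are trying to construct, and that the realisation of a smooth curve of $r$-jets by a smooth family of sections uses that the curve lands in $J^rE$ over a relatively compact $U$; both are routine but should be stated.
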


\section{Fulfilment of the perturbative agreement}

In this section we comment on the principle of perturbative agreement (PPA for short) for the model we have introduced in Definition \ref{Definition: locally covariant theory of interest}.

The PPA has been introduced in \cite{Hollands-Wald-05} as a further constraint on the structure of Wick powers -- see also \cite{Drago-Hack-Pinamonti-2016,Zahn-13}.
Loosely speaking, it requires that a theory associated with a quadratic perturbation $E_s$ of the elliptic operator $E$ introduced in equation \eqref{Equation: definition of the elliptic operator} should yield to an algebra $\mathcal{A}_s[N;b]$ compatible with the unperturbed algebra $\mathcal{A}[N;b]$.
Here $E_s-E\in\Gamma_{\mathrm{c}}(\mathrm{S}^{\otimes 2}\psi^*T^*M)$ is a smooth and compactly supported ($1$-dimensional) family of variation.
The compatibility between $\mathcal{A}_s$ and $\mathcal{A}$ is in the sense of formal power series in $s$ -- \textit{cf.} definition \ref{Definition: PPA}.

As pointed out in \cite{Zahn-13} the PPA is important in our setting because, among other things, it ensures that the renormalization group flow technique we applied in Section \ref{Section: Renormalization and Ricci flow} does not depend on the splitting $\mathcal{L}=\mathcal{L}_{\mathrm{free}}+\mathcal{L}_{\mathrm{int}}$.
A complete discussion of the PPA is out not within the scopes of this paper -- for a complete discussion in the Riemannian setting see \cite{Dappiaggi-Drago-Rinaldi-19}.
In the present appendix we provide a brief resum\'e of the content of the PPA, proving that there exists a family of Wick powers as per definition \ref{Definition: Wick power associated with a locally covariant observable} which fulfils it -- \textit{cf.} Proposition \ref{Proposition: fulfilment of the PPA}.

In what follows $E_s$ will always denote a smooth and compactly supported ($1$-dimensional) family of variation -- \textit{cf.} Definition \ref{Definition: smooth compactly supported d-dimensional family of variations} -- of the elliptic operator $E$ defined as per equation \eqref{Equation: definition of the elliptic operator}. In particular $E_s$ is elliptic for all $s$.
Notice that, for the sake of simplicity, we are assuming that $E_s-E\in\Gamma_{\mathrm{c}}(\mathrm{S}^{\otimes 2}\psi^*T^*M)$ is a differential operator of order at most $1$.
This is actually enough for our setting see however \cite{Dappiaggi-Drago-Rinaldi-19,Hollands-Wald-05} for completeness.

\paragraph{Formulation of the PPA.}
In order to formulate the PPA a few preliminary definitions are in due order.
First of all we need a linear isomorphism $R_s\colon\operatorname{Par}[N;b]\ni P\to P_s\in\operatorname{Par}_s[N;b]$ between the space of parametrices $\operatorname{Par}_s[N;b]$ associated with $E_s$ and those of $E$.
The construction of this map is rather standard, see \cite{Dappiaggi-Drago-Rinaldi-19} for further details and \cite{Dappiaggi-Drago-16,Drago-Hack-Pinamonti-2016,Hollands-Wald-05,Zahn-13} for the corresponding map in the Lorentzian setting.
For what concerns the PPA, we just need the perturbative expansion of $R_s$ as a formal power series in $s$ up to a smooth remainder. Let $P\in\operatorname{Par}[N;b]$; since $Q_s:=E_s-E$ is compactly supported,
\begin{align*}
	E_s=
	E+Q_s=
	E(\operatorname{Id}+PQ_s)-SQ_s\,,
\end{align*}
where $S\in\Gamma(\psi^*TM\boxtimes\psi^*T^*M)$ is such that $PE-\operatorname{Id}_{\Gamma_{\mathrm{c}}(\psi^*T^*M)}=S$ -- \textit{cf.} equation \eqref{Equation: defining property of parametrix}.
We consider the map $R_{[[s]]}\colon\Gamma_{\mathrm{c}}(\psi^*T^*M)\to\Gamma(\psi^*TM)[[s]]$ defined by
\begin{align}\label{Equation: definition of classical moeller map between parametrices}
	R_{[[s]]}\omega:=\sum_{n\geq 0}(-PQ_s)^nP\omega\,.
\end{align}
This map can be interpreted as a perturbative expansion (up to a smooth remainder) of a well-defined isomorphism $R_s\colon\operatorname{Par}[N;b]\to\operatorname{Par}_s[N;b]$ -- \textit{cf.} \cite{Dappiaggi-Drago-16,Dappiaggi-Drago-Rinaldi-19,Drago-Hack-Pinamonti-2016,Hollands-Wald-05}.

The second ingredient we need is a $*$-isomorphism $\beta_s\colon\mathcal{A}_{\textrm{reg}}[N;b]\to\mathcal{A}_{s,\textrm{reg}}[N;b]$.
Here the subscript $_{\textrm{reg}}$ denotes the algebra generated by regular local functional, namely those with smooth functional derivatives of all orders.
We will not enter into the details of this construction, however, we give the explicit form for $\beta_s$:
\begin{align}\label{Equation: beta map}
	(\beta_s F)(P_s):=\exp\big[\Upsilon_{P_s-P}\big]F[P]\qquad\forall F\in\mathcal{A}[N;b]\,.
\end{align}
This can be extended to a map $\beta_{[[s]]}\colon\mathcal{A}[N;b]\to\Gamma(\mathcal{E}[N;b])[[s]]$ with values in the algebra of formal power series in $s$ with coefficients in $\Gamma(\mathcal{E}[N;b])$ -- \textit{cf.} Definitions \ref{Definition: bundle of fiber algebras}-\ref{Definition: locally covariant theory of interest}.
The expansion is possible since, on account of equation \eqref{Equation: definition of classical moeller map between parametrices}, $P_s-P=P_{[[s]]}-P+R=\sum_{n\geq 1}(-PG_s)^nP+R$ has a well-defined coinciding point limit (here $R$ is a smooth remainder). Therefore $\beta_{s}$ is well-defined at each order in $s$.
As explained in \cite{Dappiaggi-Drago-Rinaldi-19,Drago-Hack-Pinamonti-2016} the map $\beta_{[[s]]}$ can be interpreted as an extension of the expansion in formal power series of $\beta_s$.

We focus on Wick powers, strengthening the smoothness requirement of Definition \ref{Definition: Wick power associated with a locally covariant observable} by allowing also variations of the elliptic operator $E_s$.
\begin{Definition}\label{Definition: strengthening of smoothness condition for Wick powers}
	Let $d,n\in\mathbb{N}$ and let $(N;b_s)\in\operatorname{Obj}(\mathsf{Bkg})$ be such that $\{b_s=(\psi,\gamma_s,g_s)\}_{s\in\mathbb{R}^d}$ is a smooth, compactly supported $d$-dimensional family of variations of $b=(\psi,\gamma,g)$ as per Definition \ref{Definition: smooth compactly supported d-dimensional family of variations}.
	Moreover let $E_{t,s}$ be a smooth and compactly supported $n$-dimensional family of variations of the elliptic operator $E_s$ constructed out of the background data $b_s$ as per equation \eqref{Equation: definition of the elliptic operator}.
	For all smooth families $\{P_{t,s}\}_{s\in\mathbb{R}^d}$ where $P_{t,s}\in\operatorname{Par}_t(N,b_s)$ is a parametrix for $E_{t,s}$ for all $s\in\mathbb{R}^d$ and $t\in\mathbb{N}^n$, let $\mathcal{U}_k\in\Gamma_{\mathrm{c}}(\pi_{d+n}^*\mathrm{S}^{\otimes k}\psi^*T^*M)'$ be the distribution defined by
	\begin{align}
		\mathcal{U}_k(\chi\otimes\omega_1):=
		\int_{\mathbb{R}^{d+n}}\mathrm{d}s\mathrm{d}t\,\Phi^k_t[N;b_s](\omega_1,P_{t,s},0)\chi(s,t)\,,
	\end{align}
	where $\omega_1\in\mathrm{S}\Gamma_{\mathrm{c}}^{k,1}[N;b],\chi\in C^\infty_{\mathrm{c}}(\mathbb{R}^{d+n})$.
	Here $\Phi^k_t[N;b_s]$ denotes the ($k$-th) Wick power associated with the background data $(N;b_s)$ and with the elliptic operator $E_{s,t}$. If $\textrm{WF}(\mathcal{U}_k)=\emptyset$, we call the family of Wick powers $\Phi^\bullet$ smooth.
\end{Definition}

\begin{remark}
	Loosely speaking Definition \ref{Definition: strengthening of smoothness condition for Wick powers} requires a suitable smoothness of $\Phi^\bullet$ with respect both to the background data $b$ and to the variation of the elliptic operator $E$.
	For certain models -- like the scalar field \textit{cf.} \cite{Dappiaggi-Drago-Rinaldi-19} -- the variations of the background data exhaust all possible variations of the associated elliptic operator $E$.
	In this situation the smoothness as per Definition \ref{Definition: strengthening of smoothness condition for Wick powers} coincides with the one required in Definition \ref{Definition: Wick power associated with a locally covariant observable}.
\end{remark}

\begin{remark}
	A smooth family of parametrices $P_{t,s}\in\operatorname{Par}_t[N;b_s]$ can be constructed by setting $P_{t,s}:=R_{t}P_s$ where $P_s\in\operatorname{Par}[N;b_s]$ is a smooth family of parametrices for $E_s$ -- \textit{cf.} Remark \ref{Remark: on the regularity property}.
\end{remark}

From now on $\Phi^\bullet$ will denote a family of Wick powers as per Definition \ref{Definition: Wick power associated with a locally covariant observable} satisfying the smoothness requirement of Definition \ref{Definition: strengthening of smoothness condition for Wick powers}.
Notice that the family $\wick{\Phi^\bullet}$ defined in Example \ref{Equation: definition of Wick powers exploited for Ricci flow} satisfies such smoothness requirement.

\begin{Definition}\label{Definition: PPA}
	Let $E_s$ denote a smooth and compactly supported ($1$-dimensional) family of variation -- \textit{cf.} definition \ref{Definition: smooth compactly supported d-dimensional family of variations} -- of the elliptic operator $E$ defined as per equation \ref{Equation: definition of the elliptic operator}.
	We say that the family of Wick powers $\Phi^\bullet$ satisfies the principle of perturbative agreement (PPA) if for all $k\geq 2$, $n\in\mathbb{N}\cup\lbrace 0\rbrace$, $P\in\operatorname{Par}[N;b]$, $\omega_m\in\mathrm{S}\Gamma_{\mathrm{c}}^{k,m}[N;b]$ it holds
	\begin{align}\label{Equation: PPA}
		\frac{\mathrm{d}^n}{\mathrm{d}s^n}\Phi^k_s[N;b](\omega_m,P_s)\bigg|_{s=0}=
		\frac{\mathrm{d}^n}{\mathrm{d}s^n}\beta_{s}(\Phi^k[N;b])(\omega_m,P_{s})\bigg|_{s=0}\,.
	\end{align}
\end{Definition}

\begin{remark}
	A direct computation shows that the PPA is satisfied if and only if equation \eqref{Equation: PPA} holds true for $n=1$ -- \textit{cf.} \cite{Dappiaggi-Drago-Rinaldi-19}.
	Moreover, on account of the lack of renormalization ambiguities for $m\geq 2$ -- \textit{cf.} Remarks \ref{Remark: on the log-singularity of the parametrices}-\ref{Remark: on the general case of dimension D greater than two} -- the PPA is fulfilled whenever it holds for $m=1$.
\end{remark}

\noindent We state the main result of this appendix.
\begin{proposition}\label{Proposition: fulfilment of the PPA}
	The family $\wick{\Phi^\bullet}$ defined in equation \eqref{Example: Wick powers exploited for Ricci flow} satisfies the PPA as per Definition \ref{Definition: PPA} with respect to a family of variations $E_s$ of the elliptic operator $E$ such that $E_s-E\in\Gamma_{\mathrm{c}}(\mathrm{S}^{\otimes 2}\psi^*T^*M)$ is a differential operator of order at most $1$.
\end{proposition}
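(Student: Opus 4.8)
\begin{Proof}
The plan is to reduce the statement, via the two remarks following Definition \ref{Definition: PPA}, to the verification of \eqref{Equation: PPA} for $n=1$ and $m=1$ only: the first reduction because \eqref{Equation: PPA} for $n=1$ already entails it for all $n$, the second because there are no renormalization ambiguities in degree $m\geq 2$. In fact I would establish the stronger claim that, for $m=1$, the two sides of \eqref{Equation: PPA} coincide identically in $s$ -- and hence also after any number of derivatives. To this end fix $(N;b)\in\operatorname{Obj}(\mathsf{Bkg})$, $\omega_1\in\mathrm{S}\Gamma_{\mathrm{c}}^{k,1}[N;b]$ and $P\in\operatorname{Par}[N;b]$, and let $P_s\in\operatorname{Par}_s[N;b]$ be a parametrix for $E_s$; recall from \eqref{Equation: definition of Wick powers exploited for Ricci flow} that the $k$-th Wick power built out of $E_s$ is, as an element of the fibre algebra over $P_s$ (suppressing the argument $\varphi$),
\begin{align*}
\wick{\Phi^k_s}[N;b](\omega_1,P_s)=\exp\big[\Upsilon_{[W_{P_s}]_s}\big]\,\phi^k[N;b](\omega_1)\,,
\end{align*}
where $[W_{P_s}]_s\in\Gamma(\mathrm{S}^{\otimes 2}\psi^*TM)$ is the coinciding point limit of the smooth remainder in the Hadamard representation of $P_s$ \emph{relative to $E_s$}, whereas by \eqref{Equation: beta map}
\begin{align*}
\beta_s\big(\wick{\Phi^k}[N;b](\omega_1)\big)(P_s)=\exp\big[\Upsilon_{P_s-P}\big]\exp\big[\Upsilon_{[W_P]}\big]\,\phi^k[N;b](\omega_1)\,.
\end{align*}

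The next step is to simplify the right-hand side. Acting on a local polynomial functional such as $\phi^k[N;b](\omega_1)$, whose functional derivatives are supported on the thin diagonal (Remark \ref{Remark: on properties of functionals}), the contraction operator $\Upsilon_D$ of \eqref{Equation: Gamma contraction operator} sees a distribution $D$ only through its coinciding point limit $[D]$; moreover $D\mapsto\Upsilon_D$ is linear and the operators $\Upsilon_D$, being functional differential operators of second order with $\varphi$-independent coefficients, mutually commute, so their exponentials multiply additively. Since $P_s-P$ has a well-defined coinciding point limit (cf. \eqref{Equation: definition of classical moeller map between parametrices} and the discussion thereafter), this gives
\begin{align*}
\beta_s\big(\wick{\Phi^k}[N;b](\omega_1)\big)(P_s)=\exp\big[\Upsilon_{[P_s-P]+[W_P]}\big]\,\phi^k[N;b](\omega_1)\,,
\end{align*}
so that the identity $\wick{\Phi^k_s}[N;b](\omega_1,P_s)=\beta_s(\wick{\Phi^k}[N;b](\omega_1))(P_s)$ is equivalent to the equality of sections of $\mathrm{S}^{\otimes 2}\psi^*TM$
\begin{align*}
[W_{P_s}]_s-[W_P]=[P_s-P]\,.
\end{align*}

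The core of the proof is the verification of this last identity, which I would carry out locally by means of the Hadamard representation of Appendix \ref{Appendix: Hadamard expansion for the parametrix of E}. In a geodesic neighbourhood write $P=H+W_P$ and $P_s=H_s+W_{P_s}$ with $H,H_s$ the Hadamard parametrices of $E$ and $E_s$; then $W_{P_s}-W_P=(P_s-P)-(H_s-H)$, so it suffices to prove $[H_s-H]=0$. I would do this by comparing the transport hierarchies \eqref{Equation: hierarchy of equation for Vns} for the coefficients $V_n$ of $H$ and $V_{s,n}$ of $H_s$: the equation \eqref{Equation: transport equation for V0} for the leading coefficient and the initial condition \eqref{Equation: initial condition for V0} involve only the background data $(\psi,\gamma,g)$, which is unchanged under $E\mapsto E_s$, and a perturbation $Q_s:=E_s-E$ of order at most $1$ contributes to $E_sH_s$ only terms strictly less singular than $\sigma^{-1}$, so it does not modify the $\sigma^{-1}$-coefficient equation; hence $V_{s,0}=V_0$, while for $n\geq 1$ the contributions $V_n\,\sigma^n\log(\sigma/\ell_H^2)$ and $V_{s,n}\,\sigma^n\log(\sigma/\ell_H^2)$ have vanishing coinciding point limit. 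Therefore $[H_s-H]=0$, which proves the identity above and, combined with the previous step, the Proposition; the smoothness condition of Definition \ref{Definition: strengthening of smoothness condition for Wick powers} guarantees that $s\mapsto[W_{P_s}]_s$ is smooth, so the identity indeed holds as one of smooth families (equivalently, formal power series) in $s$.

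The step I expect to be the main obstacle is precisely the claim $[H_s-H]=0$: namely the bookkeeping that shows that a subleading-order perturbation of the elliptic operator cannot alter the leading Hadamard coefficient $V_0$. Some additional, but routine, care is needed to check that the smooth remainders implicit in the perturbative construction of $R_s$ and $\beta_s$ do not affect the coinciding point identity, and to keep track of the $\log\sigma$ regularisation throughout.
\end{Proof}
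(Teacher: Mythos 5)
Your overall route is sound and in fact more direct than the paper's: instead of first invoking Theorems \ref{Theorem: ambiguities for Wick powers}--\ref{Theorem: structural form of ambiguities} to parametrize a putative PPA-compatible family by coefficients $c_{s,\ell}$ and then showing that the constraint forced by \eqref{Equation: PPA} is solved by $c_{s,\ell}=0$, you compute both sides of \eqref{Equation: PPA} for $\wick{\Phi^\bullet}$ directly and reduce the whole statement to the single identity $[W_{P_s}]_s-[W_P]=[P_s-P]$, equivalently $[H_s-H]=0$. That reduction (using locality of $\phi^k$ to replace $\Upsilon_{P_s-P}$ by $\Upsilon_{[P_s-P]}$, and the additivity of the commuting operators $\Upsilon$) is correct, and it amounts to carrying out once and for all in $s$ the first-order computation that the paper performs via the $\delta$-derivative. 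Both proofs therefore hinge on the same crux: the insensitivity of the coincidence limit of the Hadamard singular part to an order-$\leq 1$ perturbation of $E$.

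The gap is in your justification of that crux. You assert that $Q_s=E_s-E$ ``does not modify the $\sigma^{-1}$-coefficient equation'', hence $V_{s,0}=V_0$. This is not correct: writing $Q_s=A_s\nabla^\psi+T_s$, the first-order part acting on $V_{s,0}\log\sigma$ produces the term $A_sV_{s,0}\,\sigma^{-1}\mathrm{d}\sigma$, which is singular (of order $\sigma^{-1/2}$, since $\gamma^{\alpha\beta}(\mathrm{d}\sigma)_\alpha(\mathrm{d}\sigma)_\beta=2\sigma$) and must be cancelled within the same equation as the other $\sigma^{n-1}\big|_{n=0}$ terms of \eqref{Equation: explicit form of EH}. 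This is precisely the extra term $(A_s)^\alpha_{\phantom{\alpha}ab}V_{s,0}^{bc}(\mathrm{d}\sigma)_\alpha$ appearing in the paper's transport equation \eqref{Equation: transport equation for Vs0}; consequently the equation satisfied by $V_{s,0}$ does depend on $s$ and $V_{s,0}\neq V_0$ in general. What does survive is the initial condition $[V_{s,0}]^{bc}=g^{bc}=[V_0]^{bc}$, fixed by the unchanged principal symbol, so that $V_{s,0}-V_0$ is a smooth bitensor vanishing at coincidence, hence $O(\sigma^{1/2})$, and $[(V_{s,0}-V_0)\log\sigma]=0$; together with the manifestly vanishing coincidence limits of the $n\geq 1$ terms this still yields $[H_s-H]=0$ and salvages your conclusion. (The paper instead linearizes at $s=0$ and argues that $\delta(V_0)$ obeys a transport equation with vanishing initial datum.) So the proposition stands along your lines, but the sentence ``hence $V_{s,0}=V_0$'' must be replaced by the argument via the common initial condition.
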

\begin{proof}
	Observe that on account of Theorems \ref{Theorem: ambiguities for Wick powers}-\ref{Theorem: structural form of ambiguities} we may write for all $k\geq 2$ and $\omega\in\mathrm{S}\Gamma_{\mathrm{c}}^{k,1}[N;b]$
	\begin{align*}
		\Phi^k_s[N;b](\omega_1)=\;
		\wick{\Phi^k}_s[N;b](\omega_1)+
		\sum_{\ell=0}^{k-2}\wick{\Phi^\ell}_s\big(c_{s,k-\ell}[N;b]\lrcorner\omega_1\big)\,,
	\end{align*}
	where $\wick{\Phi^k}$ is defined as in Example \ref{Example: Wick powers exploited for Ricci flow} while $c_{s,\ell}\in\Gamma^{\ell,1}[N;b]$ satisfies the hypothesis of theorem \ref{Theorem: ambiguities for Wick powers}-\ref{Theorem: structural form of ambiguities}.
	Moreover, $c_{s,\ell}$ is a smooth and compactly supported family of variation and we set $c_\ell:=c_{s,\ell}|_{s=0}$.
	
	Our aim is to show that $\wick{\Phi^\bullet}$ satisfies equation \eqref{Equation: PPA}.
	In particular we shall impose equation \eqref{Equation: PPA} for a generic family $\Phi^\bullet$ of Wick powers.
	This will constraint the coefficient $c_{s,\ell}$ defined above, in particular we shall prove that equation \eqref{Equation: PPA} implies that we can choose $c_{s,\ell}=0$, that is, $\Phi^\bullet=\wick{\Phi^\bullet}$.
	We first consider the case $k=2$.
	Setting $\delta:=\frac{\mathrm{d}}{\mathrm{d}s}\big|_{s=0}$, by direct inspection it holds
	\begin{align*}
		\delta[\Phi^2[N;b](\omega_1,P)]&=
		\delta\big[\wick{\Phi^2}[N;b](\omega_1,P)+C_{s,2}[N;b](\omega_1)\big]\\&=
		\Upsilon_{\delta(W_P)}\wick{\Phi^2}[N;b](\omega_1,P)+\delta\big[C_2[N;b](\omega_1))\big]\,,
	\end{align*}
	where $C_2[N;b](\omega_1):=\int_\Sigma\langle c_2[N;b],\omega_1\rangle$.
	Similarly the first order in $s$ in the right hand side of equation \eqref{Equation: PPA} reads
	\begin{align*}
		\frac{\mathrm{d}}{\mathrm{d}s}\beta_s\Phi^k[N;b](\omega_1,P_s)\bigg|_{s=0}=
		\Upsilon_{\delta(P)}\Phi^2[N;b](\omega_1,P)=
		\Upsilon_{\delta(P)}\wick{\Phi^2}[N;b](\omega_1,P)\,,
	\end{align*}
	where we exploited equations \eqref{Equation: definition of classical moeller map between parametrices}-\eqref{Equation: beta map}.
	Equation \eqref{Equation: PPA} entails
	\begin{align*}
		\delta\big[C_2[N;b](\omega_1))\big]=
		-\Upsilon_{\delta(W_P)+\delta(P)}\wick{\Phi^2}[N;b](\omega_1,P)=
		\langle[\delta(H)],\omega_1\rangle\,,
	\end{align*}
	where we used equation \eqref{Equation: inductive condition on derivative for Wick powers} and Remark \ref{Remark: local Hadamard representation of parametrices}.
	Therefore the PPA for $k=2$ can be fulfilled if
	\begin{align}\label{Equation: PPA condition for second coefficient}
		\delta(c_2)=-[\delta(H)]\,.
	\end{align}
	Since $\delta(H)$ is local and covariant the above equation can be considered as a definition of the coefficient $c_2$ -- indeed, it respects all requirement of Theorems \ref{Theorem: ambiguities for Wick powers}-\ref{Theorem: structural form of ambiguities}.
	In particular $c_{2,s}$ is the solution to the ODE $\delta(c_{2,s})=-[\delta(H_s)]$.
	
	For the general case we compute once again the right and the left hand side of equation \eqref{Equation: PPA} using equation \eqref{Equation: ambiguities for Wick powers}. The final result is
	\begin{align*}
		0&=\sum_{\ell=0}^{k-2}\wick{\Phi^\ell}[N;b](\delta(c_{k-\ell}[N;b])\lrcorner\omega_1,P)\\&-
		\Upsilon_{[\delta(H)]}\bigg[\wick{\Phi^k}[N;b](\omega_1,P)+\sum_{\ell=0}^{k-2}\wick{\Phi^\ell}[N;b](c_{k-\ell}[N;b]\lrcorner\omega_1,P)\bigg]\,,
	\end{align*}
	Equation \eqref{Equation: inductive condition on derivative for Wick powers} leads to the following generalization of equation \eqref{Equation: PPA condition for second coefficient}
	\begin{align*}
		\delta(c_{k-\ell})=-(\ell+2)(\ell+1)[\delta(H)][\otimes]c_{k-2-\ell}\qquad 2\leq\ell\leq k-3\,.
	\end{align*}
	Once again this can be used to define inductively the coefficients $c_{\ell}$.
	
	We now prove that $[\delta(H)]=0$, which implies that the choice $c_{s,\ell}=0$ leads to a family of Wick powers satisfying the PPA.
	We recall that we are considering families of variations $E_s$ such that $E_s-E\in\Gamma_{\mathrm{c}}(\mathrm{S}^{\otimes 2}\psi^*T^*M)$ is a differential operator of order at most $1$.
	For $\varphi\in\Gamma(\psi^*TM)$ we can write locally
	\begin{align}
		(E_s\varphi-E\varphi)_a=(A_s)^{\alpha}_{\phantom{\alpha}ab}(\nabla^\psi\varphi)^b_\alpha+(T_s)_{ab}\varphi^b\,,
	\end{align}
	where $(A_s)^{\alpha}_{\phantom{\alpha}ab},(T_s)_{ab}$ are suitable smooth tensors.
	This implies that the Hadamard parametrix $H_s$ associated with $E_s$ has the form
	\begin{align*}
		H_s=\sum_{n\geq 0}V_{s,n}\sigma^n\log\sigma\,,
	\end{align*}
	where $\sigma$ does not depend on $s$ since so it does the principal symbol of $E_s$ \cite{Garabedian-98}.
	The tensors $V_{s,n}\in\mathrm{S}\Gamma(\psi^*TM^{\boxtimes 2})$ satisfies a hierarchy of transport equations analogous to system \eqref{Equation: hierarchy of equation for Vns}.
	In particular $V_{s,0}$ satisfies
	\begin{align}\label{Equation: transport equation for Vs0}
		2g_{ab}\gamma^{\alpha\beta}(\nabla^\psi V_{s,0})^{bc}_{\phantom{bc}\alpha}(\mathrm{d}\sigma)_\beta+
		g_{ab}V_{s,0}^{bc}(\Delta_\gamma\sigma-2)+
		(A_s)^\alpha_{\phantom{\alpha}ab}V^{bc}_{s,0}(\mathrm{d}\sigma)_\alpha=0\qquad[V_{s,0}]^{bc}=g^{bc}\,.
	\end{align}
	It then follows that $\delta(V_0)$ satisfies a transport equation with initial condition $[\delta(V_0)]=0$.
	This implies that $\delta(V_0)=0$ and therefore
	\begin{align}
		\delta(H)=\sum_{n\geq 1}\delta(V_n)\sigma^n\log\sigma=O(\sigma)\,.
	\end{align}
\end{proof}

\section*{Acknowledgments}
The work of C.~D.\ was supported by the University of Pavia, while that of N.~D.\ was supported in part by a research fellowship of the University of Pavia.
We are grateful to Federico Faldino, Igor Khavkine, Alexander Schenkel and Jochen Zahn for the useful discussions.
We are especially grateful to Klaus Fredenhagen for the enlightening discussions on the r\^ole of the algebra of functionals.
This work is based partly on the MSc thesis of P.~R. .

\end{document}